\documentclass[11pt]{article}

\usepackage[numbers]{natbib}
\usepackage[left=1in,bottom=1in,right=1in,top=1in]{geometry}

\usepackage{authblk}
\usepackage{amsmath, amssymb,multirow,hyperref}
\usepackage{amsthm} 
\usepackage{mathtools, dsfont, stackrel, booktabs, bbm}
\usepackage{float}
\usepackage{capt-of}
\usepackage{url,verbatim}
\usepackage{enumitem}
\usepackage{amsmath,amssymb}
\usepackage{multirow}
\usepackage{xargs}
\usepackage{amsfonts}
\usepackage{color}
\usepackage{subfig}
\DeclareMathOperator*{\argmin}{arg\,min}
\DeclareMathOperator*{\argmax}{arg\,max}
\DeclareMathOperator{\sign}{sign}

\newtheorem{theorem}{Theorem}

\newtheorem{lemma}{Lemma}
\newtheorem{proposition}{Proposition}

\newtheorem{remark}{Remark}




\begin{document}
	\title{Sparse Regression at Scale: Branch-and-Bound rooted in First-Order Optimization}
	
	\date{April, 2021}
	
	\author{Hussein Hazimeh\thanks{MIT Operations Research Center. Email: {\texttt{hazimeh@mit.edu}}} }
	\author{
		Rahul Mazumder\thanks{MIT Sloan School of Management, Operations Research Center and MIT Center for Statistics. Email: {\texttt{rahulmaz@mit.edu}} }}
	\author{Ali Saab\thanks{AJO. Email: {\texttt{saab@mit.edu}}} }
	\affil{Massachusetts Institute of Technology}
	
	\maketitle

	\begin{abstract}
		We consider the least squares regression problem, penalized with a combination of the  $\ell_{0}$ and squared $\ell_{2}$ penalty functions (a.k.a. $\ell_0 \ell_2$ regularization). 
		Recent work shows that the resulting estimators are of key importance  in many high-dimensional statistical settings. 
		However, exact computation of these estimators remains a major challenge. 
		Indeed, modern exact methods, based on mixed integer programming (MIP), face difficulties when the number of features $p \sim 10^4$.
		In this work, we present a new exact MIP framework for $\ell_0\ell_2$-regularized regression that can scale to $p \sim 10^7$, achieving speedups of at least $5000$x, compared to state-of-the-art exact methods. Unlike recent work, which relies on modern commercial MIP solvers, we design a specialized nonlinear branch-and-bound (BnB) framework, by critically exploiting the problem structure. A key distinguishing component in our framework lies in efficiently solving the node relaxations using a specialized first-order method, based on coordinate descent (CD). Our CD-based method effectively leverages information across the BnB nodes, through using warm starts, active sets, and gradient screening. In addition, we design a novel method for obtaining dual bounds from primal CD solutions, which certifiably works in high dimensions. Experiments on synthetic and real high-dimensional datasets demonstrate that our framework is not only significantly faster than the state of the art, but can also deliver certifiably optimal solutions to statistically challenging instances that cannot be handled with existing methods. We open source the implementation through our toolkit L0BnB.
	\end{abstract}
	

	
	\newpage
	\section{Introduction}
	
	We consider the sparse linear regression problem with additional $\ell_2$ regularization \cite{bertsimas2017sparse,fastsubset}. A natural way to impose sparsity in this context is through controlling the $\ell_0$ (pseudo) norm of the estimator, which counts the number of nonzero entries. More concretely, let $X \in \mathbb{R}^{n \times p}$ be the data matrix, with $n$ samples and $p$ features, and $y \in \mathbb{R}^{n}$ be the response vector. We focus on the least squares problem with a combination of $\ell_0$ and $\ell_2$  regularization:
	\begin{align}
	\label{eq:main}
	\min_{\beta \in \mathbb{R}^p} \frac{1}{2} \| y - X \beta \|^{2}_2  + \lambda_0 \| \beta \|_0+ \lambda_2 \|\beta\|^{2}_2,
	\end{align}
	where $\|\beta \|_0$ is defined as the number of nonzero entries in the regression coefficients $\beta \in \mathbb{R}^p$, and 
	$\|\beta\|^{2}_2$ is the squared $\ell_{2}$-norm of $\beta$ (also referred to as ridge regularization).
	The regularization parameters $\lambda_0$ and $\lambda_2$ are assumed to be specified by the practitioner\footnote{The choice of $(\lambda_0,\lambda_2)$ depends upon the particular application and/or dataset. We aim to compute solutions to~\eqref{eq:main} for a family of $(\lambda_0,\lambda_2)$-values.}.
	We note that the presence of ridge regularization (i.e., $\lambda_{2}>0$) can be important from a statistical viewpoint---see for example~\cite{hastie2017extended,mazumder2017subset,fastsubset} for further discussions on this matter.
	Statistical properties of $\ell_0$-based estimators have been extensively studied in the statistics literature~\cite{greenshtein2006best,raskutti2011minimax,zhang2014lower,david2017high,wainwright2009information,mazumder2017subset}.
	{Specifically, under suitable assumptions on the underlying data and appropriate choices of tuning parameters, global solutions of~\eqref{eq:main} have optimal support recovery properties~\cite{wainwright2009information,fletcher2009necessary};
		and optimal prediction error bounds that do not depend upon $X$~\cite{raskutti2011minimax}. Appealing prediction error bounds available from optimal solutions 
		of~\eqref{eq:main} for the low-signal regime are discussed in~\cite{mazumder2017subset}. Such strong guarantees are generally not satisfied by heuristic solutions to~\eqref{eq:main}---see \cite{zhang2014lower} for theoretical support and~\cite{fastsubset} for numerical evidence. 
		From a practical perspective, the ability to certify the optimality of solutions (e.g., via dual bounds) is important and can engender trust in mission critical applications such as healthcare.
	}
	
	
	Despite its appeal, Problem~\eqref{eq:main} is labeled as NP-Hard~\cite{natarajan1995sparse} and poses computational challenges. Recently, there has been exciting work in developing Mixed Integer Programming (MIP)-based approaches to solve~\eqref{eq:main}, e.g., {\cite{cozad14,bestsubset,miyashiro2015subset,hongbo,bertsimas2017sparse,fastsubset,bertsimas2019sparse,xie2020scalable,atamturk20a}.} Specifically, \cite{bestsubset} demonstrated that off-the-shelf MIP solvers can handle problem instances for $p$ up to a thousand. Larger instances can be handled when $\lambda_2$ is sufficiently large and the feature correlations are low---for example, see the approach of~\cite{bertsimas2017sparse}; and the method in~\cite{dedieu2020learning} for the classification variant of~\eqref{eq:main}. 
	The approaches of~\cite{bertsimas2017sparse,dedieu2020learning} rely on commercial MIP solvers such as Gurobi. While these state-of-the-art global optimization approaches show very promising results, they are still relatively slow for practical usage~\cite{hastie2017extended,fastsubset}. 
	For example, our experiments show that these methods cannot terminate in two hours for typical instances with $p \sim 10^4$. On the other hand, the fast Lasso solvers, e.g., \texttt{glmnet} \cite{glmnet}, and local optimization methods for \eqref{eq:main}, such as \texttt{L0Learn} \cite{fastsubset}, can handle much larger instances, and they typically terminate in the order of milliseconds to seconds.


	
	Our goal in this paper is to advance the computational methodology for the global optimization of Problem~\eqref{eq:main}. In particular, we aim to (i) reduce the run time for solving problem instances with $p \sim 10^4$ from hours to seconds, and (ii) scale to larger problem instances with $p \sim 10^7$ in reasonable times (order of minutes to hours).  To this end, we propose a specialized nonlinear branch-and-bound (BnB) framework that does not rely on commercial MIP solvers. We employ a first-order method, which carefully exploits the problem structure, to solve the node relaxations. This makes our approach quite different from prior work on global optimization for Problem \eqref{eq:main}, which rely on commercial MIP solvers, e.g., Gurobi and CPLEX. These MIP solvers are also based on a BnB framework, but they are equipped with general-purpose relaxation solvers and heuristics that do not take into account the specific structure in Problem \eqref{eq:main}.
	
	Our BnB solves a mixed integer second order cone program (MISOCP)  {that is based on a perspective reformulation \cite{frangioni2006perspective,akturk2009strong,gunluk2010perspective} of Problem \eqref{eq:main}.} The algorithm exploits the sparsity structure in the problem during different stages: when solving node relaxations, branching, and obtaining upper bounds. The continuous node relaxations that appear in our BnB have not been studied at depth in earlier work. A main contribution of our work is to show that these relaxations, which involve seemingly complicated linear and conic constraints, can be efficiently handled using a primal coordinate descent (CD)-based algorithm. Indeed, this represents a radical change from the primal-dual relaxation solvers commonly used in state-of-the-art MIP solvers \cite{belotti2013mixed}. Our choice of CD is motivated by its ability to effectively share information across the BnB nodes (such as warm starts), and more generally by its high scalability in the context of sparse learning, e.g., see  \cite{glmnet,sparsenet,fastsubset}. Along with CD, we propose additional strategies, namely, active set updates and gradient screening, which reduce the coordinate update complexity by exploiting the information shared across the BnB tree.  
	
	Although our CD-based algorithm for solving BnB node relaxations is highly scalable, it only generates primal solutions. However, dual bounds are required for search space pruning in BnB. Thus, we propose a novel method to efficiently generate dual bounds from the primal solutions. We analyze these dual bounds and prove that their tightness is not affected by the number of features $p$, but rather by the number of nonzeros in the primal solution. This result serves as a theoretical justification for why our CD-based algorithm can lead to tight dual bounds in high dimensions. 
	

	
	\smallskip
	
	\noindent\textbf{Contributions and Structure: }
	We summarize our key contributions below.
	\begin{itemize}
		
		\item[$\bullet$] {We formulate Problem \eqref{eq:main} as a MISOCP, based on a perspective formulation. We provide a new analysis of the relaxation tightness, which identifies parameter ranges for which the perspective formulation can outperform popular formulations (see Section \ref{sec:formulations}). }
		
		\item[$\bullet$] To solve the MISOCP, we design a specialized nonlinear BnB, with the following main contributions (see Section \ref{sec:bnb}):
		\begin{itemize}
			\item We show that the node relaxations, which involve linear and conic constraints, can be reformulated as a least squares problem with a non-differentiable but separable penalty. To solve the latter reformulation, we develop a primal CD algorithm, along with active set updates and gradient screening 
			that use information shared across the BnB tree to reduce the coordinate update cost. 
			\item We develop a new efficient method for obtaining dual bounds from the primal solutions. We analyze these dual bounds and show that their tightness depends on the  sparsity level rather than $p$.
			\item We introduce efficient methods that exploit sparsity when selecting branching variables and obtaining incumbents\footnote{An incumbent, in the context of BnB, refers to the best integral solution found so far.}.
		\end{itemize}
		\item[$\bullet$] We perform a series of experiments on high-dimensional synthetic and real  datasets, with $p$ up to $8.3 \times 10^6$. We study the effect of the regularization parameters and dataset characteristics on the run time, and perform ablation studies. The results indicate that our approach can be $5000$x faster than the state of the art in some settings, and is capable of handling difficult statistical instances which were virtually unsolvable before (See Section~\ref{sec:experiments}). We open source the implementation through our toolkit L0BnB:
		\begin{center}
			\url{https://github.com/alisaab/L0BnB}
		\end{center}
	\end{itemize}
	
	\smallskip
	
	\noindent\textbf{Related Work: } As mentioned earlier, an impressive line of recent work considers solving Problem \eqref{eq:main}, or its cardinality-constrained variant, to optimality. \cite{bestsubset} used Gurobi on a Big-M formulation, which can handle $n \sim p \sim 10^3$ in the order of minutes to hours. \cite{bertsimas2017sparse} scale the problem even further by applying outer-approximation (using Gurobi) on a boolean reformulation~\cite{pilanci2015sparse} of the problem. Their approach can handle $p \sim 10^5$ in the order of minutes when $n$ and $\lambda_2$ are sufficiently large, and  the feature correlations are sufficiently small. This outer-approximation approach has also been generalized to sparse classification in \cite{bertsimas2017sparseclass}. \cite{xie2020scalable} consider solving a perspective formulation \cite{frangioni2006perspective} of the problem directly using Gurobi and reported timings that compare well with~\cite{bertsimas2017sparse}---the largest problem instances they consider have 
	$p \sim 10^3$.  \cite{dedieu2020learning}~show that a variant of Problem \eqref{eq:main} for classification can be solved through a sequence of MIPs (solved using Gurobi), each having a small number of binary variables, as opposed to $p$ binary variables in the common approaches. Their approach can handle $n=10^3$ and $p=50,000$ in minutes if the feature correlations are sufficiently small. Our specialized BnB, on the other hand, can solve all the instances mentioned above with speed-ups that exceed $5000$x, and can scale to problems with $p \sim 10^7$. Moreover, our numerical experiments show that, unlike prior work, our BnB can handle difficult problems with relatively small $\lambda_2$ and/or high feature correlations. 
	
	In addition to the global optimization approaches discussed above, there is an interesting body of work in the broader optimization community on improved relaxations for sparse ridge regression, e.g.,  \cite{pilanci2015sparse,hongbo,atamturk2019rank,xie2020scalable}, and algorithms that locally optimize an $\ell_0$-based objective \cite{blumensath2009iterative,BeckSparsityConstrained,fastsubset}.
	
	There is also a rich literature on solving mixed integer nonlinear programs  (MINLPs) using BnB, e.g., see~\cite{lee2011mixed,belotti2013mixed}. Our approach is based on the nonlinear BnB framework~\cite{dakin1965tree}, where a nonlinear subproblem is solved at every node of the search tree. Interior point methods are a popular choice for these nonlinear subproblems, especially for MISOCPs \cite{lee2011mixed}, e.g., they are used in MOSEK \cite{andersen2003implementing} {and are also one of the supported options in CPLEX \cite{studio2013users}}. Generally, interior point based nonlinear solvers are not as effective in exploiting warm starts and sparsity as linear programming solvers \cite{belotti2013mixed}, which led to an alternative approach known as outer-approximation (OA) \cite{duran1986outer}. In OA, a sequence of relaxations, consisting of mixed integer linear programs, are solved until converging to a solution of the MINLP. State-of-the-art solvers such as BARON \cite{tawarmalani2005polyhedral} and Gurobi \cite{gurobi2020gurobi} apply OA on extended formulations---{\cite{tawarmalani2005polyhedral} laid the ground work for this approach.} There is also a line of work on specialized OA reformulations and algorithms for mixed integer conic programs (which include MISOCPs), e.g., see \cite{vielma2008lifted,lubin2016extended,vielma2017extended} and the references therein. In this paper, we pursue a different approach: making use of problem-specific structure, we show that the relaxation of the MISOCP can be effectively handled by our proposed CD-based algorithm.
	
	
	
	\smallskip
	
	\noindent\textbf{Notation and Supplementary Material: } We denote the set $\{1,2,\dots,p\}$ by $[p]$. For any set $A$, the complement is denoted by $A^c$. We let $\|\cdot\|_{q}$ denote the standard $\ell_{q}$ norm with $q \in \{0,1,2,\infty\}.$ For any vector $v \in \mathbb{R}^{k}$,  $\sign(v)\in \mathbb{R}^{k}$ refers to the vector whose $i$th component is given by $\sign(v_i) = v_i/|v_i|$ if $v_i \neq 0$ and $\sign(v_i) \in [-1,1]$ if $v_i = 0$. We denote the support of $\beta \in \mathbb{R}^p$ by $\text{Supp}(\beta)=\{i: \beta_{i} \neq 0, i \in [p]\}$. For a set $S \subseteq [p]$, use $\beta_S \in \mathbb{R}^{|S|}$ to denote the subvector of $\beta$ with indices in $S$. Similarly, $X_S$ refers to the submatrix of $X$ whose columns correspond to $S$. For a scalar $a$, we denote $[a]_+ = \max\{a, 0\}$.
	{Given a set of real numbers $\{a_i\}_{i=1}^N$ and a scalar $c$, we use $\{a_i\}_{i=1}^N \cdot c$ to denote $\{c a_i\}_{i=1}^N$}. {The proofs of all propositions, lemmas, and theorems are in the appendix.} 
	
	\section{MIP Formulations and Relaxations} \label{sec:formulations}
	{In this section, we present MIP formulations for Problem \eqref{eq:main} and study their corresponding relaxations.}
	
	\subsection{MIP Formulations}

	\noindent\textbf{The Big-M Formulation:} We assume that there is a finite scalar $M>0$ (a-priori specified) such that an optimal solution of Problem~\eqref{eq:main}, say $\beta^{*}$, satisfies: $\| \beta^{*} \|_{\infty} \leq M$. 
	This allows for modeling \eqref{eq:main} as a mixed integer quadratic program (MIQP) using the following Big-M formulation:
	\begin{equation} \label{eq:bigM}
	\begin{aligned} 
	\text{B}(M): ~~~ \min_{\beta , z} ~~&~\frac{1}{2} \| y - X \beta \|^{2}_2  + \lambda_0 \sum_{i\in [p]} z_i+ \lambda_2 \|\beta\|^{2}_2  \\
	\text{s.t.}~~&~~ -M z_i \leq \beta_i \leq { M} z_i,~ i \in [p] \\
	&~~z_i \in \{0,1\},  ~ i \in [p], 
	\end{aligned}
	\end{equation}
	where each binary variable $z_{i}$ controls whether $\beta_{i}$ is zero or not via the first constraint in \eqref{eq:bigM}---i.e., if $z_{i}=0$ then $\beta_{i}=0$.
	Such Big-M formulations are widely used in mixed integer programming and have been recently explored in multiple works on $\ell_0$ regularization, e.g.,   \cite{bestsubset,mazumder2017subset,fastsubset,xie2020scalable}. See \cite{bestsubset,xie2020scalable} for a discussion on how to estimate $M$ in practice. 
	
	\medskip
	
	\noindent\textbf{The Perspective Formulation: } {In MIP problems where bounded continuous variables are activated by indicator variables, perspective reformulations  \cite{frangioni2006perspective,akturk2009strong,gunluk2010perspective} can lead to stronger MIP relaxations and thus improve the run time of BnB algorithms. Here, we apply a  perspective reformulation to the ridge term $\| \beta \|_2^2$ in Problem \eqref{eq:bigM}. Specifically, we introduce the auxiliary continuous variables $s_i \geq 0, i \in [p]$ and rotated second order cone constraints $\beta_i^2 \leq s_i z_i$, $i \in [p]$. We then replace the term $\| \beta \|_2^2$ with $\sum_{i \in [p]} s_i$. Thus, each $s_i$ takes the place of $\beta_i^2$. This leads to the following reformulation of~\eqref{eq:bigM}:
		\begin{equation}
		\label{eq:conicbigM}
		\begin{aligned}
		\text{PR}(M): ~~~ \min_{\beta, z, s} ~~&~  \frac{1}{2} \| y - X \beta \|^{2}_2  + \lambda_0 \sum_{i \in [p]} z_i+ \lambda_2 \sum_{i \in [p]} s_i \\
		\text{s.t.}~~&~~ \beta_i^2 \leq s_i z_i, ~ i \in [p]\\
		& ~~-M z_i \leq \beta_{i} \leq M z_i, ~ i \in [p] \\
		&~~ z_i \in \{0,1\}, s_i \geq 0, ~ i \in [p].
		\end{aligned}
		\end{equation}
		Problem~\eqref{eq:conicbigM} can be expressed as a MISOCP.
		Similar to~\eqref{eq:bigM}, formulation~\eqref{eq:conicbigM} is equivalent to Problem \eqref{eq:main} (as long as $M$ is suitably chosen). Algorithms for formulation~\eqref{eq:conicbigM} will be the main focus of our paper. \\
		If we set $M=\infty$ in~\eqref{eq:conicbigM}, then the constraints 
		$\beta_{i} \in [-M z_{i}, M z_{i}], i \in [p]$ can be dropped, which makes PR($\infty$)  independent of a Big-M parameter. 
		If $\lambda_2 > 0$, then PR($\infty$) is equivalent to \eqref{eq:main}---this holds since $\beta_i^2 \leq s_i z_i$ enforces $z_i = 0 \implies \beta_i = 0$. We note that \cite{hongbo} have studied Problem \eqref{eq:conicbigM} and focused on the special case of PR($\infty$). \cite{hongbo} shows that PR($\infty$) is equivalent to the pure binary formulations considered in \cite{pilanci2015sparse,bertsimas2017sparse}. We also note that \cite{xie2020scalable} have considered a similar perspective formulation for the cardinality constrained variant of Problem \eqref{eq:main}. In Proposition~\ref{prop:v1v2} below, we  present new bounds that quantify the relaxation strengths of PR($M$), PR($\infty$), and B($M$). Moreover, we are the first to present a tailored BnB procedure for formulation \eqref{eq:conicbigM}.}

	\subsection{Relaxation of the Perspective Formulation~\eqref{eq:conicbigM}} \label{sec:convex_relaxation}
	
	{In this section, we study the \textsl{interval relaxation} of Problem \eqref{eq:conicbigM}, which is obtained by relaxing all binary $z_i$'s to the interval $[0,1]$. Specifically, we present a new compact reformulation of the interval relaxation that leads to useful insights and facilitates our algorithm development. We also discuss how this reformulation compares with the interval relaxations of B$(M)$ and $\text{PR}(\infty)$.}
	
	Theorem \ref{theorem:relaxation} shows that the interval relaxation of~\eqref{eq:conicbigM} can be reformulated purely in the $\beta$ space. This leads to 
	a regularized least squares criterion, where the regularizer involves 
	the reverse Huber~\cite{owen2007robust} penalty---a hybrid between the $\ell_1$ and $\ell_2$ (squared) penalties. The reverse Huber penalty  $\mathcal{B}: \mathbb{R} \to \mathbb{R}$, is given by:
	\begin{align} \label{eq:reverse_huber}
	\mathcal{B}(t) = 
	\begin{cases}
	|t| & |t| \leq 1 \\
	(t^2+1)/{2} & |t| \geq 1.
	\end{cases}
	\end{align}
	
	\begin{theorem}{(Reduced Relaxation)} \label{theorem:relaxation}
		Let us define the functions $\psi, \psi_{1}, \psi_{2}$ as
		\begin{align*}
		\psi(\beta_i; \lambda_0, \lambda_2, M) = 
		\begin{cases}
		\psi_1(\beta_i; \lambda_0, \lambda_2) := 2 \lambda_0 \mathcal{B}( \beta_i \sqrt{{\lambda_2}/{\lambda_0}}) &  \text{if}~\sqrt{{\lambda_0}/{\lambda_2}} \leq  M \\
		\psi_2(\beta_i; \lambda_0, \lambda_2, M) := (\lambda_0/M + \lambda_2 M) |\beta_i| &    \text{if}~\sqrt{{\lambda_0}/{\lambda_2}} >  M.
		\end{cases}
		\end{align*}
		The interval relaxation of \eqref{eq:conicbigM} is equivalent to:
		\begin{align} \label{eq:relaxation}
		\min_{\beta \in \mathbb{R}^p} ~ F(\beta):= \frac{1}{2} \| y - X \beta \|_2^2 + \sum_{i \in [p]} \psi(\beta_i; \lambda_0, \lambda_2, M) ~~ \text{s.t.} ~~ \|\beta\|_{\infty} \leq M,
		\end{align}
		and we let $V_{\text{PR}(M)}$ denote the optimal objective value of~\eqref{eq:relaxation}.
	\end{theorem}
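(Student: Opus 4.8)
The plan is to exploit the fact that, once $\beta$ is fixed, the interval relaxation of \eqref{eq:conicbigM} decouples completely across the coordinates $i \in [p]$: the data-fidelity term $\tfrac12\|y - X\beta\|_2^2$ depends only on $\beta$, while the remaining objective $\lambda_0 \sum_i z_i + \lambda_2\sum_i s_i$ together with every constraint splits into $p$ independent blocks, one per index. Thus I would rewrite the relaxation as the partial minimization (suppressing the fixed parameters $\lambda_0,\lambda_2,M$)
\[
\min_{\beta}\ \tfrac12\|y-X\beta\|_2^2 + \sum_{i\in[p]} \psi(\beta_i),\qquad
\psi(b) := \min_{z\in[0,1],\,s\ge 0}\ \bigl\{\lambda_0 z + \lambda_2 s \ :\ b^2\le sz,\ |b|\le Mz\bigr\},
\]
so that the entire theorem reduces to computing the value function $\psi(b)$ of this small two-variable program in closed form and checking that it equals the stated $\psi(b;\lambda_0,\lambda_2,M)$, with the constraint set becoming empty exactly when $|b|>M$, which is what produces the side constraint $\|\beta\|_\infty\le M$.

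To evaluate $\psi(b)$, I would first dispose of the degenerate case $b=0$, where $z=s=0$ is feasible and optimal, giving $\psi(0)=0$ (consistent with both branches). For $b\neq 0$ the cone constraint forces $z>0$, so I can eliminate $s$: since $\lambda_2>0$ the constraint $b^2\le sz$ is active at the optimum, whence $s=b^2/z$, and the subproblem collapses to the univariate convex minimization
\[
\psi(b)=\min_{z}\ g(z),\quad g(z):=\lambda_0 z+\lambda_2 b^2/z,\quad z\in[\,|b|/M,\,1\,],
\]
where the lower endpoint encodes $|b|\le Mz$ and the upper endpoint encodes $z\le 1$. The unconstrained minimizer of the convex $g$ on $(0,\infty)$ is $z^\star = |b|\sqrt{\lambda_2/\lambda_0}$, with value $g(z^\star)=2|b|\sqrt{\lambda_0\lambda_2}$.

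The remaining work is to clip $z^\star$ to $[\,|b|/M,\,1\,]$ and read off the regimes, using the two elementary comparisons $z^\star \ge |b|/M \iff M\ge \sqrt{\lambda_0/\lambda_2}$ and $z^\star\le 1 \iff |b|\le\sqrt{\lambda_0/\lambda_2}$, which is precisely what makes the case split turn on the sign of $M-\sqrt{\lambda_0/\lambda_2}$. When $\sqrt{\lambda_0/\lambda_2}\le M$ the lower bound is inactive, so the minimizer is $z^\star$ if $|b|\le\sqrt{\lambda_0/\lambda_2}$ (value $2|b|\sqrt{\lambda_0\lambda_2}$) and $z=1$ otherwise (value $\lambda_0+\lambda_2 b^2$); substituting $t=b\sqrt{\lambda_2/\lambda_0}$ into \eqref{eq:reverse_huber} shows these two pieces are exactly $2\lambda_0\mathcal{B}(b\sqrt{\lambda_2/\lambda_0})=\psi_1(b)$. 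When $\sqrt{\lambda_0/\lambda_2}> M$ we have $z^\star<|b|/M$, so $g$ is increasing on the feasible interval and the minimizer is the left endpoint $z=|b|/M$, giving $g(|b|/M)=(\lambda_0/M+\lambda_2 M)|b|=\psi_2(b)$. In every case feasibility requires $|b|/M\le 1$, i.e. $|b|\le M$, yielding the constraint $\|\beta\|_\infty\le M$.

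The one-dimensional calculus here is routine; the part that needs care is the bookkeeping at the boundaries. I would explicitly justify discarding $z=0$ (feasible, and then optimal, only when $b=0$), so that dividing by $z$ is legitimate, and then verify that the two branches agree on the common boundary $M=\sqrt{\lambda_0/\lambda_2}$, where $\psi_2(b)=2\sqrt{\lambda_0\lambda_2}\,|b|$ coincides with the linear piece of $\psi_1$ — the only active piece there, since feasibility forces $|b|\le M=\sqrt{\lambda_0/\lambda_2}$ — so that the value function is well defined and continuous across the split. I expect the main subtlety to be exactly this clipping/endpoint argument, namely confirming that in the regime $\sqrt{\lambda_0/\lambda_2}> M$ the feasible interval $[\,|b|/M,\,1\,]$ is nonempty and lies entirely to the right of $z^\star$, rather than any difficulty in the algebra.
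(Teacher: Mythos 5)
Your proposal is correct, and it starts exactly where the paper's proof starts: the interval relaxation decouples coordinate-wise into the value function $\psi(\beta_i)=\min\{\lambda_0 z_i+\lambda_2 s_i \,:\, \beta_i^2\le s_i z_i,\ |\beta_i|\le M z_i,\ z_i\in[0,1],\ s_i\ge 0\}$, with infeasibility for $|\beta_i|>M$ producing the box constraint $\|\beta\|_\infty\le M$. Where you differ is in how this two-variable program is collapsed to one dimension. The paper eliminates $z_i$ first, noting that the smallest feasible value is $z_i=\max\{\beta_i^2/s_i,\,|\beta_i|/M\}$, and is then left with $\min_{s_i}\max\{\lambda_0\beta_i^2/s_i+\lambda_2 s_i,\ \lambda_0|\beta_i|/M+\lambda_2 s_i\}$, which it resolves by a case analysis on which term attains the maximum. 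You instead eliminate $s_i$ first (the conic constraint is active, $s_i=\beta_i^2/z_i$, legitimate since $\lambda_2>0$ and $z_i>0$ whenever $\beta_i\neq 0$), leaving a single convex univariate function $g(z)=\lambda_0 z+\lambda_2\beta_i^2/z$ minimized over the interval $[\,|\beta_i|/M,\,1\,]$, so the entire case structure is read off from where the unconstrained minimizer $z^\star=|\beta_i|\sqrt{\lambda_2/\lambda_0}$ falls relative to the two endpoints. Both routes are sound and produce the same regimes and the same closed forms; yours is arguably tidier (one convex function plus endpoint clipping, together with your explicit $\beta_i=0$ and boundary-continuity checks, which the paper glosses over), while the paper's min-max form makes explicit which constraint---the rotated cone or the Big-M bound---binds in each regime. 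That structural information is what the text invokes right after the theorem (the Big-M constraint becoming active is what ``turns the ridge term into an $\ell_1$ penalty'') and what supplies the lift from $\beta^{*}$ to $(\beta^{*},s^{*},z^{*})$ used for branching in Section 3.1; your argument recovers the same lift via $z^{*}$ and $s^{*}=\beta_i^{2}/z^{*}$, so nothing essential is lost.
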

	The reduced formulation~\eqref{eq:relaxation} has an important role in both the subsequent analysis and the algorithmic development in Section \ref{sec:bnb}. 
	Theorem \ref{theorem:relaxation} shows that the conic and Big-M constraints in the relaxation of \eqref{eq:conicbigM} can be completely eliminated, at the expense of introducing (in the objective) the non-differentiable penalty function $\sum_{i} \psi(\beta_i; \lambda_0, \lambda_2, M)$ which is separable across the $p$ coordinates $\beta_{i}, i \in [p]$.
	Depending on the value of $\sqrt{{\lambda_0}/{\lambda_2}}$ compared to $M$, the penalty $\psi(\beta_i; \lambda_0, \lambda_2, M)$ is either the reverse Huber penalty (i.e., $\psi_1$) or the $\ell_1$ penalty (i.e., $\psi_2$), both of which are sparsity-inducing. {In Figure \ref{fig:penalties} (left panel), we plot $\psi_1(\beta; \lambda_0, \lambda_2)$ for $\lambda_2 = 1$ at different values of $\lambda_0$. In Figure 1 (right panel), we plot $\psi(\beta; \lambda_0, \lambda_2, M)$ at $\lambda_0 = \lambda_2 = 1$ and for different values of $M$.}
	The appearance of a pure $\ell_1$ penalty in the objective is interesting in this case since the original formulation in \eqref{eq:conicbigM} has a ridge term in the objective. Informally speaking, when $\sqrt{{\lambda_0}/{\lambda_2}} > M$, the constraint $|\beta_i| \leq M z_i$ becomes active at any optimal solution, which turns the ridge term into an $\ell_1$ penalty---for further discussion on this matter, see the proof of Theorem \ref{theorem:relaxation}.
	
	\begin{figure}[h!]
		\centering
		\begin{tabular}{cc}
			\includegraphics[width=0.48\textwidth,trim=4mm .5mm 10mm 10mm, clip]{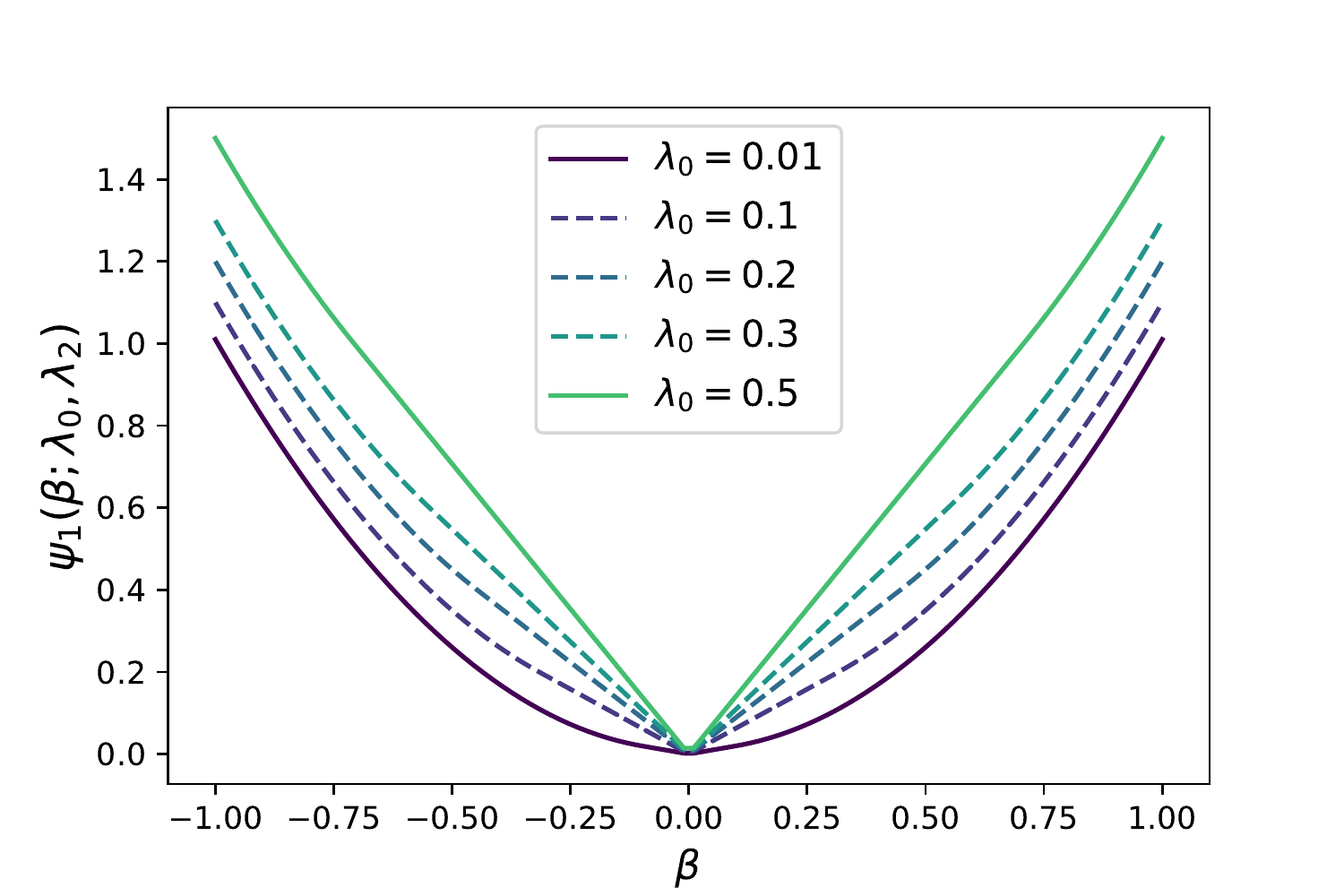}&
			\includegraphics[width=0.48\textwidth,trim=4mm .5mm 10mm 10mm,clip]{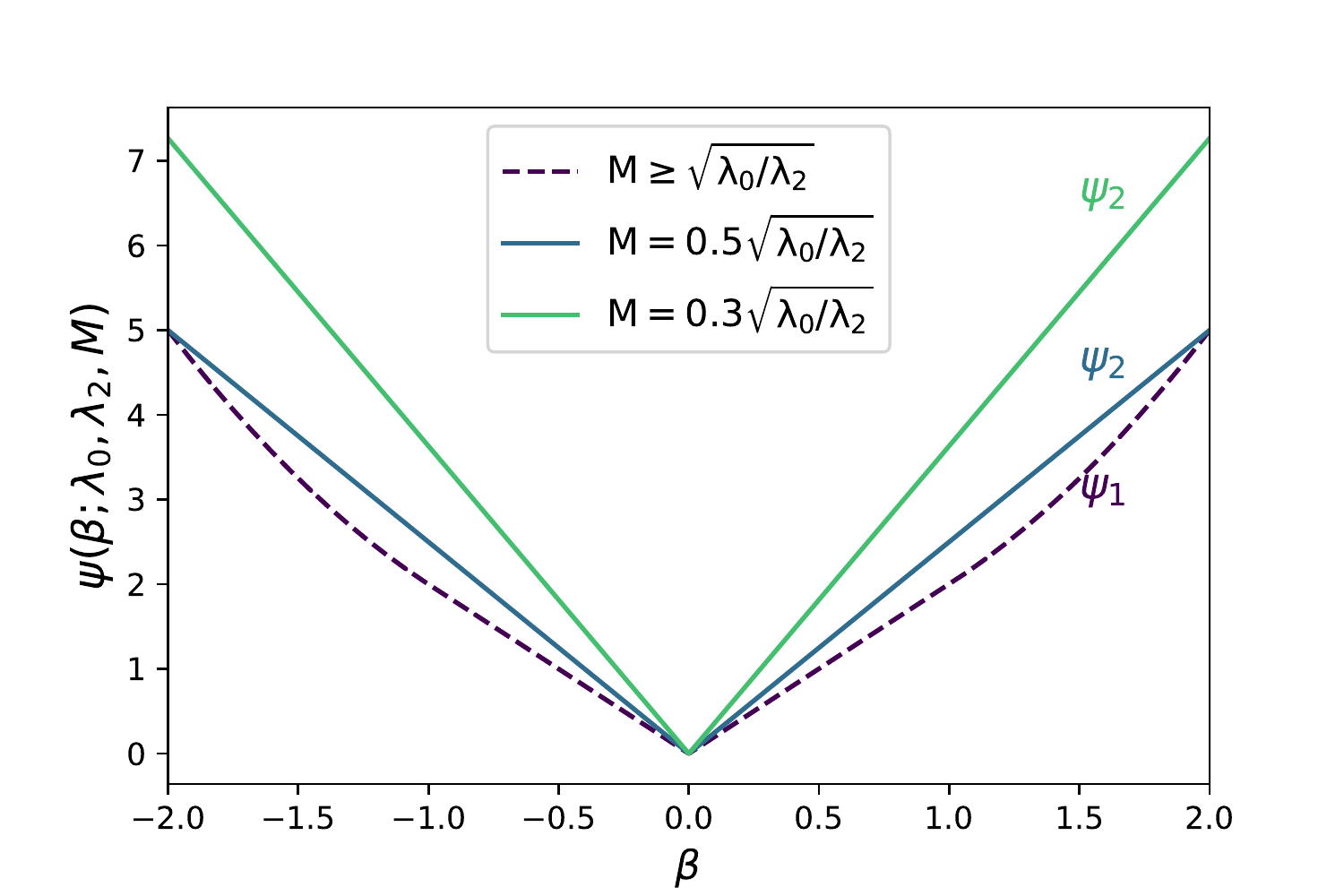}
		\end{tabular}
		\caption{[Left]: Plot of $\psi_1(\beta; \lambda_0, 1)$ for different values of $\lambda_0$. [Right]: Plot of $\psi(\beta; 1, 1, M)$ for different values of $M$.}%
		\label{fig:penalties}%
	\end{figure}

	
	Next, we analyze the tightness of the perspective relaxation in~\eqref{eq:relaxation}.
	
	\smallskip
	
	\noindent\textbf{Tightness of the Perspective Relaxation~\eqref{eq:relaxation}:}
	\cite{hongbo} has shown that the interval relaxation of PR($\infty$) can be written as: 
	\begin{align} \label{eq:conic_relaxation}
	V_{\text{PR}(\infty)} = \min_{\beta \in \mathbb{R}^p} G(\beta):= \frac{1}{2} \| y - X \beta \|_2^2 + \sum_{i\in [p]} \psi_1(\beta_i; \lambda_0, \lambda_2),
	\end{align}
	where $\psi_1(\beta_i; \lambda_0, \lambda_2)$ is defined in Theorem \ref{theorem:relaxation}. {Note that  \eqref{eq:conic_relaxation} can also be obtained from Theorem \ref{theorem:relaxation} (with $M=\infty$). For $\sqrt{{\lambda_0}/{\lambda_2}} \leq M$, the relaxation of PR($M$) in Theorem~\ref{theorem:relaxation} matches that in \eqref{eq:conic_relaxation}, but with the  additional box constraint $\| \beta \|_{\infty} \leq M$. When $M$ is large, this box constraint becomes inactive, making the  interval relaxations of PR($M$) and PR($\infty$) equivalent. However, when $\sqrt{{\lambda_0}/{\lambda_2}} > M$, the interval relaxation of PR($M$) can have a (strictly) larger objective  than that of  both B($M$) and PR($\infty$), as we show in Proposition \ref{prop:v1v2}.}
	\begin{proposition} \label{prop:v1v2}
		Let $V_{\text{B}(M)}$ denote the optimal objective of the interval relaxation of $\text{B}(M)$. Let $\beta^{*}$ be an optimal solution to \eqref{eq:relaxation}, and define the function $h(\lambda_0, \lambda_2, M) = {\lambda_0}/{M} + \lambda_2 M - 2 \sqrt{\lambda_0 \lambda_2}$. Then, the following holds for $\sqrt{{\lambda_0}/{\lambda_2}} > M$:
		\begin{align} 
		& V_{\text{PR}(M)} \geq V_{\text{B}(M)} + \lambda_2 (M \| \beta^{*} \|_1 - \|  \beta^{*} \|_2^2) \label{eq:v14}\\
		& V_{\text{PR}(M)} \geq V_{\text{PR}(\infty)} + h(\lambda_0, \lambda_2, M) \| \beta^{*} \|_1. \label{eq:v123}
		\end{align}
	\end{proposition}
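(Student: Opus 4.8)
The plan is to reduce every quantity to a $\beta$-space expression and then compare penalties at the single point $\beta^*$. Throughout we work in the regime $\sqrt{\lambda_0/\lambda_2} > M$, so by Theorem \ref{theorem:relaxation} the objective of \eqref{eq:relaxation} uses the $\ell_1$-type penalty $\psi_2$, giving the exact identity
\[
V_{\text{PR}(M)} = \frac{1}{2}\|y - X\beta^*\|_2^2 + \Big(\tfrac{\lambda_0}{M} + \lambda_2 M\Big)\|\beta^*\|_1 ,
\]
where $\beta^*$ is optimal for \eqref{eq:relaxation} and satisfies $\|\beta^*\|_\infty \le M$. The central idea is that $\beta^*$ is automatically feasible for both the interval relaxation of $\text{B}(M)$ and that of $\text{PR}(\infty)$; hence substituting $\beta^*$ into each of those relaxed objectives upper-bounds $V_{\text{B}(M)}$ and $V_{\text{PR}(\infty)}$, and subtracting isolates the gap as a pure penalty difference since the least-squares term cancels.

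For \eqref{eq:v14}, I would first write the interval relaxation of $\text{B}(M)$ and minimize out the continuous $z_i$. Fixing $\beta$, the constraint $-Mz_i \le \beta_i \le M z_i$ forces $z_i \ge |\beta_i|/M$, and since $\lambda_0 > 0$ the minimizing choice is $z_i = |\beta_i|/M$, which lies in $[0,1]$ precisely when $\|\beta\|_\infty \le M$. This yields the reduced form
\[
V_{\text{B}(M)} = \min_{\|\beta\|_\infty \le M} \frac{1}{2}\|y - X\beta\|_2^2 + \frac{\lambda_0}{M}\|\beta\|_1 + \lambda_2 \|\beta\|_2^2 .
\]
Evaluating this objective at the feasible point $\beta^*$ upper-bounds $V_{\text{B}(M)}$; subtracting it from the identity for $V_{\text{PR}(M)}$ cancels the loss term and the $\tfrac{\lambda_0}{M}\|\beta^*\|_1$ term, leaving exactly $\lambda_2 M\|\beta^*\|_1 - \lambda_2\|\beta^*\|_2^2$, which is \eqref{eq:v14}.

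For \eqref{eq:v123}, I would use the representation \eqref{eq:conic_relaxation}, namely $V_{\text{PR}(\infty)} = \min_\beta G(\beta)$ with penalty $\psi_1$. The one place where the hypothesis $\sqrt{\lambda_0/\lambda_2} > M$ really bites is that every coordinate of $\beta^*$ lands in the linear branch of the reverse Huber: since $|\beta_i^*| \le \|\beta^*\|_\infty \le M < \sqrt{\lambda_0/\lambda_2}$, the argument satisfies $|\beta_i^*\sqrt{\lambda_2/\lambda_0}| < 1$, so $\mathcal{B}$ reduces to the absolute value and $\psi_1(\beta_i^*;\lambda_0,\lambda_2) = 2\sqrt{\lambda_0\lambda_2}\,|\beta_i^*|$. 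Thus $G(\beta^*) = \tfrac12\|y - X\beta^*\|_2^2 + 2\sqrt{\lambda_0\lambda_2}\,\|\beta^*\|_1 \ge V_{\text{PR}(\infty)}$, and subtracting from the identity for $V_{\text{PR}(M)}$ produces the coefficient $\tfrac{\lambda_0}{M} + \lambda_2 M - 2\sqrt{\lambda_0\lambda_2} = h(\lambda_0,\lambda_2,M)$ multiplying $\|\beta^*\|_1$, giving \eqref{eq:v123}.

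Neither inequality presents a genuine technical obstacle once Theorem \ref{theorem:relaxation} is in hand; the entire argument is a feasibility-plus-penalty-comparison at the single point $\beta^*$. The only care needed is in keeping the roles straight: $\beta^*$ is optimal for $\text{PR}(M)$ but merely feasible (hence the correct inequality direction) for the other two relaxations, and the reverse-Huber simplification in the second part must explicitly invoke the box constraint together with $M < \sqrt{\lambda_0/\lambda_2}$ to collapse $\psi_1$ to the $\ell_1$ penalty.
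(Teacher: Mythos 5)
Your proof is correct and follows essentially the same route as the paper's: both arguments evaluate the reduced $\beta$-space objectives of the three relaxations at the single point $\beta^{*}$ (optimal for PR($M$), merely feasible for the other two), cancel the shared least-squares term, and compare the penalties coordinatewise, using $\|\beta^{*}\|_{\infty}\leq M<\sqrt{\lambda_0/\lambda_2}$ to collapse $\psi_1$ to the $\ell_1$ penalty $2\sqrt{\lambda_0\lambda_2}\,|\beta_i^{*}|$ in the second inequality. The only (harmless) difference is that you explicitly derive the reduced form of $V_{\text{B}(M)}$ by minimizing out the $z_i$, a step the paper states without derivation.
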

	We make a couple of remarks. For bound \eqref{eq:v14}, we always have $(M \| \beta^{*} \|_1 - \|  \beta^{*} \|_2^2) \geq 0$ since $\| \beta^{*} \|_{\infty} \leq M$. If there is an $i \in [p]$ with $0 < |\beta_i^{*}| < M $, then $(M \| \beta^{*} \|_1 - \|  \beta^{*} \|_2^2) > 0$, and consequently $V_{\text{PR}(M)} > V_{\text{B}(M)}$ (as long as $\lambda_2 > 0$). In bound \eqref{eq:v123}, for  $\sqrt{{\lambda_0}/{\lambda_2}} > M$, $h(\lambda_0, \lambda_2, M)$ is strictly positive and monotonically decreasing in $M$, which implies that $V_{\text{PR}(M)} > V_{\text{PR}(\infty)}$ (as long as $\beta^{*} \neq 0$). {In Section \ref{sec:timing_comparison}, our experiments empirically validate Proposition~\ref{prop:v1v2}: using PR$(M)$ with a sufficiently tight (but valid) $M$ can speed up the same BnB solver by more than $90$x compared to PR$(\infty)$.} 
	
	Note that Proposition \ref{prop:v1v2} does not directly compare  $V_{\text{PR}(\infty)}$ with $V_{\text{B}(M)}$. In Proposition \ref{prop:relax_obj}, we establish a new result which compares $V_{\text{PR}(\infty)}$ with $V_{\text{B}(M)}$. Before we present Proposition~\ref{prop:relax_obj}, we introduce some notation. Let $\mathcal{S}(\lambda_2)$ be the set of optimal solutions 
	(in $\beta$) of the interval relaxation of PR($\infty$); and define 
	\begin{align} \label{eq:lambda2star}
	\mathcal{L}(M) := \{ \lambda_2 > 0 \ | \ \exists \beta \in \mathcal{S}(\lambda_2) \text{ s.t. } \| \beta \|_{\infty} \leq M \}.
	\end{align}
	\begin{proposition} \label{prop:relax_obj}
		Let $\mathcal{L}(M)$ be as defined in \eqref{eq:lambda2star}. Then, the following holds:
		\begin{align}
		& V_{\text{B}(M)} \geq V_{\text{PR}(\infty)}, \text{ if } M \leq \tfrac12\sqrt{{\lambda_0}/{\lambda_2}} \label{eq:relax_obj_1} \\
		& V_{\text{B}(M)} \leq V_{\text{PR}(\infty)}, \text{ if } M \geq \sqrt{{\lambda_0}/{\lambda_2}} \text{ and } \lambda_2 \in \mathcal{L}(M). \label{eq:relax_obj_3}
		\end{align}
	\end{proposition}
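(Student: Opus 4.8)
The plan is to reduce both relaxations to penalized least squares problems with the \emph{same} loss $\frac12\|y-X\beta\|_2^2$ but \emph{different} separable penalties, and then compare their optimal values by combining a feasibility argument with a pointwise comparison of the penalties over the relevant range of each $\beta_i$. First I would derive a reduced form of the interval relaxation of $\text{B}(M)$ analogous to Theorem~\ref{theorem:relaxation}: for fixed $\beta$, the only terms of the relaxed objective involving $z_i$ are $\lambda_0 z_i$ subject to $z_i\in[0,1]$ and $z_i\ge |\beta_i|/M$, so since the coefficient $\lambda_0$ is nonnegative the minimizing choice is $z_i=|\beta_i|/M$, which is feasible precisely when $|\beta_i|\le M$. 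Eliminating $z$ yields
\begin{align*}
V_{\text{B}(M)} = \min_{\|\beta\|_\infty\le M} \tfrac12\|y-X\beta\|_2^2 + \sum_{i\in[p]} \phi(\beta_i), \qquad \phi(t) := \tfrac{\lambda_0}{M}|t| + \lambda_2 t^2 .
\end{align*}
I would also record the explicit piecewise form of $\psi_1$ obtained by unfolding the reverse Huber penalty $\mathcal{B}$: namely $\psi_1(t)=2\sqrt{\lambda_0\lambda_2}\,|t|$ for $|t|\le\sqrt{\lambda_0/\lambda_2}$ and $\psi_1(t)=\lambda_2 t^2+\lambda_0$ for $|t|\ge \sqrt{\lambda_0/\lambda_2}$.

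With these two reductions in hand, both inequalities follow from a single template. For \eqref{eq:relax_obj_1}, let $\beta^B$ attain $V_{\text{B}(M)}$; since $\beta^B$ is feasible for the unconstrained problem \eqref{eq:conic_relaxation} defining $V_{\text{PR}(\infty)}$, it suffices to prove the pointwise domination $\phi(t)\ge\psi_1(t)$ for all $|t|\le M$. When $M\le\tfrac12\sqrt{\lambda_0/\lambda_2}$ the whole box lies in the linear regime of $\psi_1$, so the claim reduces to $\tfrac{\lambda_0}{M}|t|+\lambda_2 t^2 \ge 2\sqrt{\lambda_0\lambda_2}\,|t|$, which holds because dropping the nonnegative term $\lambda_2 t^2$ and using $\tfrac{\lambda_0}{M}\ge 2\sqrt{\lambda_0\lambda_2}$ (equivalent to $M\le\tfrac12\sqrt{\lambda_0/\lambda_2}$) already suffices. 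For \eqref{eq:relax_obj_3} the direction is reversed: because $\lambda_2\in\mathcal{L}(M)$ furnishes a minimizer $\beta^\star\in\mathcal{S}(\lambda_2)$ of $V_{\text{PR}(\infty)}$ with $\|\beta^\star\|_\infty\le M$, this $\beta^\star$ is feasible for the box-constrained $\text{B}(M)$ relaxation, and it suffices to show $\phi(t)\le\psi_1(t)$ for $|t|\le M$ when $M\ge\sqrt{\lambda_0/\lambda_2}$. Here I would split into the two regimes of $\psi_1$: on $[\sqrt{\lambda_0/\lambda_2},M]$ the inequality collapses to $\tfrac{\lambda_0}{M}|t|\le\lambda_0$, i.e.\ $|t|\le M$; on $[0,\sqrt{\lambda_0/\lambda_2}]$ the difference $\psi_1(t)-\phi(t)=\big(2\sqrt{\lambda_0\lambda_2}-\tfrac{\lambda_0}{M}\big)|t|-\lambda_2 t^2$ is a downward parabola in $|t|$ vanishing at $0$, hence nonnegative up to its positive root, and a short computation shows this root is $\ge\sqrt{\lambda_0/\lambda_2}$ exactly when $M\ge\sqrt{\lambda_0/\lambda_2}$.

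I expect the main obstacle to be the second direction, where the mismatch between the compact box feasible set of $\text{B}(M)$ and the unbounded feasible set of $\text{PR}(\infty)$ must be handled with care: unlike the first part, one cannot simply transport a $\text{PR}(\infty)$ minimizer into $\text{B}(M)$, since an optimal $\beta$ for $\text{PR}(\infty)$ may violate $\|\beta\|_\infty\le M$ (its coordinates can land in the quadratic regime beyond $M$). This is precisely why the hypothesis $\lambda_2\in\mathcal{L}(M)$ is needed---to guarantee an in-box optimizer---and verifying that the pointwise penalty comparison is valid across \emph{both} pieces of $\psi_1$ (rather than only at the kink $|t|=\sqrt{\lambda_0/\lambda_2}$) is the delicate bookkeeping that pins down the thresholds $\tfrac12\sqrt{\lambda_0/\lambda_2}$ and $\sqrt{\lambda_0/\lambda_2}$.
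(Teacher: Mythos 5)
Your proposal is correct and follows essentially the same route as the paper: reduce the interval relaxation of B($M$) to the penalized least-squares form with penalty $\tfrac{\lambda_0}{M}|t| + \lambda_2 t^2$, prove the pointwise penalty comparisons against $\psi_1$ on the box $[-M,M]$ in the two parameter regimes (your second comparison is exactly the paper's intermediate Lemma), and transfer optimal solutions across feasible sets, with $\lambda_2 \in \mathcal{L}(M)$ supplying the in-box minimizer of PR($\infty$). The only cosmetic difference is that for \eqref{eq:relax_obj_3} the paper routes the argument through the box-constrained value $v^{*}(M) = \min_{\|\beta\|_\infty \leq M} G(\beta)$ before invoking $\lambda_2 \in \mathcal{L}(M)$, whereas you transport the in-box PR($\infty$) minimizer directly into the B($M$) feasible set---a slight streamlining of the identical argument.
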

	Proposition \ref{prop:relax_obj} implies that if $\lambda_2$ is relatively small (with other parameters remaining fixed), then the relaxation of the Big-M formulation  (with objective $V_{\text{B}(M)}$) will have a higher objective than the relaxation of PR$(\infty)$. On the other hand, if $\lambda_2$ is sufficiently large, then the relaxation of PR($\infty$) will have a higher objective.
	
	{We note that the result of Proposition \ref{prop:relax_obj} applies for any $M \geq 0$, even if it is mis-specified. However, in the case of mis-specification, $V_{\text{B}(M)}$ (and also $V_{\text{PR}(M)}$) may no longer correspond to a valid relaxation of Problem \eqref{eq:main}. In contrast, $V_{\text{PR}(\infty)}$ is always a valid relaxation of~\eqref{eq:main}.}

	\section{A Specialized Branch-and-Bound (BnB) Framework} \label{sec:bnb}
	In this section, we develop a specialized nonlinear BnB framework for solving the perspective formulation in \eqref{eq:conicbigM}. First, we briefly recall the high-level mechanism behind nonlinear BnB, in the context of our problem.
	
	\smallskip
	
	\noindent\textbf{Nonlinear BnB at a Glance: }
	The algorithm starts by solving the (nonlinear) interval relaxation of \eqref{eq:conicbigM}, i.e., the root node. Then, it selects a branching variable, say variable $j \in [p]$, and creates two new nodes (optimization subproblems): one node with $z_j = 0$ and another with $z_j = 1$, where all the other $z_i$'s are relaxed to the interval $[0,1]$. For every unvisited node, the algorithm proceeds recursively, i.e., by solving an optimization subproblem at the current node and then branching on a new variable to create two new nodes. This leads to a search tree with nodes corresponding to optimization subproblems and edges representing branching decisions.
	
	To reduce the size of the search tree, BnB prunes a node (i.e., does not branch on it) in either one of the following situations: (i) an optimal solution to the relaxation at the current node has an integral $z$ or (ii) the objective of the current relaxation exceeds the best available upper bound 
	on~\eqref{eq:conicbigM}. {In case (ii), the relaxation need not be solved exactly: lower bounds on the relaxation's objective (a.k.a. dual bounds) can be used for pruning. However, in case (i), if the dual bound does not exceed the best upper bound, the node should be solved to optimality in order to ensure correctness of the  pruning decision\footnote{In practice, we solve the relaxation problem at the node to optimality by ensuring that the relative difference between the primal and dual bounds is less than a small user-defined numerical tolerance.}}.
	
	
	{As BnB explores more nodes, its (global) lower bound is guaranteed to converge to the optimal objective of \eqref{eq:conicbigM}. In practice, we can terminate the algorithm early, if the gap between the best lower and upper bounds is below a pre-specified user-defined threshold.}
	
	\smallskip
	
	\noindent\textbf{Overview of our Strategies: }
	The discussion above outlines how nonlinear BnB operates in general. The specific strategies used such as solving the relaxations, passing information across the nodes, and selecting branching variables, can have a key impact on scalability. In the rest of this section, we will give a detailed account of the strategies used in our BnB. We first provide an overview of these strategies:
	\begin{itemize}
		\item \textbf{A Primal Relaxation Solver}: Unlike state-of-the-art approaches for nonlinear BnB, which employ primal-dual interior point solvers for node relaxations~\cite{belotti2013mixed}, we rely solely on a primal method which consists of a highly scalable CD-based algorithm. The algorithm solves the node relaxations of \eqref{eq:conicbigM} in the $\beta$-space as opposed to the extended $(\beta, s, z)$ space---these relaxations are variants of the reduced relaxation introduced in \eqref{eq:relaxation}. The algorithm heavily shares and exploits warm starts, active sets, and information on the gradients, across the BnB tree. This will be developed in Section \ref{sec:CD}.
		\item \textbf{Dual Bounds}: Dual bounds on the objective of node relaxations are required by BnB for search space pruning, yet our relaxation solver works in the primal space for scalability considerations. We develop a new efficient method for obtaining dual bounds from the primal solutions. We provide an analysis of this method and show that the tightness of the dual bounds depends on the sparsity level and \emph{not} on the number of features $p$. See Section \ref{sec:dualbds}.
		\item \textbf{Branching and Incumbents}: We present an efficient variant of strong branching, which leverages the solutions and active sets of previous node relaxations to make optimization tractable. Moreover, we employ several efficient heuristics to obtain incumbents. See Section \ref{sec:branching}.
	\end{itemize}

	For simplicity of exposition, in the remainder of Section \ref{sec:bnb}, we assume that the columns of $X$ and $y$ have unit $\ell_2$ norm.
	\subsection{Primal Relaxation Solver: Active-set Coordinate Descent} \label{sec:CD}
	To simplify the presentation, we will focus on solving the root relaxation. To this end, we solve the reduced formulation in the $\beta$-space~\eqref{eq:relaxation}. 
	We operate on the reduced formulation compared to the interval relaxation of~\eqref{eq:conicbigM} in the extended $(\beta, s, z)$ space due to computational reasons. 
	{After solving \eqref{eq:relaxation}, we use the resulting solution, say $\beta^{*}$, to construct a corresponding solution $(\beta^{*}, s^{*}, z^{*})$ to the interval relaxation of \eqref{eq:conicbigM}---see the proof of Theorem \ref{theorem:relaxation} for how to obtain $(\beta^{*}, s^{*}, z^{*})$ from $\beta^{*}$. We then use $z^{*}$ for branching.} The rest of the nodes in BnB involve fixing some binary variables in \eqref{eq:conicbigM} to $0$ or $1$, so their subproblems can be obtained by minor modifications to the root relaxation. {For completeness, in Appendix \ref{appendix:node_subproblems}, we discuss how to formulate and solve these node subproblems.}
	
	Problem~\eqref{eq:relaxation} is of the composite form~\cite{NesterovComposite}: the objective is the sum of a smooth loss function and a non-smooth but separable penalty. In addition, the feasible set, consisting of the constraints $|\beta_i| \leq M$, $i \in [p]$ is separable across the coordinates. This makes Problem \eqref{eq:relaxation} amenable to cyclic CD~\cite{Tseng2001}.
	{To our knowledge, the use of cyclic CD for Problem~\eqref{eq:relaxation} is novel. We also emphasize that a direct application of cyclic CD to Problem \eqref{eq:relaxation} will face scalability issues, and more importantly, {it does not readily deliver dual bounds}. The additional strategies we develop later in this section are essential for both achieving scalability and obtaining (provably) high quality dual bounds.}
	
	Cyclic CD visits the coordinates according to a fixed ordering, updating one coordinate at a time, as detailed in Algorithm 1.
	
	\begin{itemize}
		\item[] \textbf{Algorithm 1: Cyclic CD for Relaxation \eqref{eq:relaxation}}
		\item \textbf{Input:} Initialization $\hat{\beta}$
		\item \textbf{While} not converged:
		\begin{itemize}[noitemsep]
			\item \textbf{For} $i \in [p]: $
			\begin{align} \label{eq:univariate}
			\hat{\beta}_i \gets \argmin_{\beta_i \in \mathbb{R}} F(\hat{\beta}_1, \dots, \beta_i, \dots, \hat{\beta}_p) ~~ \text{ s.t. }~~ |\beta_i| \leq M.
			\end{align}
		\end{itemize}
	\end{itemize}
	{Every limit point of Algorithm 1 is a global minimizer of \eqref{eq:relaxation} \cite{Tseng2001}, and the algorithm has a sublinear rate of convergence \cite{hong2017iteration}.} 
	We will show that the solution of \eqref{eq:univariate} can be computed in closed-form. To this end, since the columns of $X$ have unit $\ell_{2}$-norm, we note that~\eqref{eq:univariate} is equivalent to:
	\begin{align} \label{eq:proximal}
	\min_{\beta_i \in \mathbb{R}}~~\frac{1}{2} ( \beta_i - \tilde{\beta}_i )^2 + \psi(\beta_i; \lambda_0, \lambda_2, M)  ~~~ \text{ s.t. }~~~ |\beta_i| \leq M,
	\end{align}
	where $\tilde{\beta}_i := \langle y - \sum_{j \neq i} X_j \hat{\beta}_j, X_i \rangle$. 
	Given non-negative scalar parameters $a$ and $m$, we define the \textsl{boxed soft-thresholding operator} $T: \mathbb{R} \to \mathbb{R}$ as
	$$
	T(t; a, m) := \begin{cases}
	0 & \text{ if } |t| \leq a \\
	(|t| - a) \sign(t) & \text{ if } a < |t| \leq a + m\\
	m \sign(t) & \text{ otherwise. }
	\end{cases}
	$$
	For $\sqrt{{\lambda_0}/{\lambda_2}} \leq  M$, the solution of \eqref{eq:proximal} is given by:
	\begin{align} \label{eq:case1sol}
	\hat{\beta}_i = \begin{cases}
	T(\tilde{\beta}_i; 2 \sqrt{\lambda_0 \lambda_2}, M) & \text{ if } |\tilde{\beta}_i| \leq 2 \sqrt{\lambda_0 \lambda_2} + \sqrt{{\lambda_0}/{\lambda_2}} \\
	T(\tilde{\beta}_i(1 + 2 \lambda_2)^{-1}; 0, M) & \text{ otherwise }
	\end{cases}
	\end{align}
	and for $\sqrt{{\lambda_0}/{\lambda_2}} >  M$, the solution of \eqref{eq:proximal} is:
	\begin{align} \label{eq:case2sol}
	\hat{\beta}_i = T(\tilde{\beta}_i; \lambda_0/M + \lambda_2 M, M).
	\end{align}
	Thus, update \eqref{eq:univariate} in Algorithm 1 can be computed in closed-form using  expressions \eqref{eq:case1sol} and \eqref{eq:case2sol}. {See Appendix \ref{appendix:thresholding} for a derivation of the update rules}.
	
	\smallskip
	
	\noindent\textbf{Warm Starts: } {Interior-point methods used in state-of-the-art nonlinear BnB solvers cannot easily use warm starts. On the other hand, cyclic CD has proven to be very effective in exploiting warm starts, e.g., see  \cite{glmnet,fastsubset}. For every node in BnB, we use its parent's primal solution as a warm start. Recall that the parent and children nodes have the same relaxations, except that a single binary variable (used for branching) is fixed to zero or one in the children\footnote{{Suppose the parent node branches on $z_i$. After reformulating the node relaxations in the $\beta$ space (as discussed in Appendix  \ref{appendix:node_subproblems}), the relaxation of the child with $z_i = 1$ will be the same as the parent except that the penalty of coordinate $i$ in the parent, $\psi(\beta_i; \lambda_0, \lambda_2, M)$, will be replaced with $\lambda_0 + \lambda_2 \beta_i^2$. For the child with $z_i = 0$, the relaxation will be similar to the parent but with $\beta_i$ fixed to $0$. }}. Intuitively, this can make the warm start close to the optimal solution. Moreover, the supports of the optimal solutions of the parent and children nodes are typically very close. We exploit this observation by sharing information about the supports across the BnB tree, as we describe next in Section \ref{sec:activesets}.}
	
	\subsubsection{Active Sets} \label{sec:activesets}
	Update \eqref{eq:univariate} in Algorithm 1 requires $\mathcal{O}(n)$ operations, so every full cycle (across all $p$ coordinates) of the algorithm has cost of $\mathcal{O}(np)$. This becomes a major bottleneck for large $n$ or $p$, since CD can require many cycles before convergence. In practice, the majority of the variables stay zero during the course of  Algorithm 1 (assuming that the regularization parameters are chosen so that the optimal solution {of the relaxation is} sparse, and good warm starts are used). Motivated by this observation, we run Algorithm 1 restricted to a small subset of the variables $\mathcal{A} \subseteq [p]$, which we refer to as the \textsl{active set}, i.e., we solve the following restricted problem:
	\begin{align} \label{eq:restricted_problem}
	\hat{\beta} \in \argmin_{\beta \in \mathbb{R}^{p}} F(\beta) ~~~~ \text{s.t. } ~~~~ \|\beta \|_{\infty} \leq M, ~~ \beta_{\mathcal{A}^c} = 0.
	\end{align} 
	After solving \eqref{eq:restricted_problem}, we augment the active set with any variable $i \in \text{Supp}(\hat{\beta})^c$ that violates the following optimality condition:
	$$0 = \argmin_{\beta_i \in \mathbb{R}} F(\hat{\beta}_1, \dots, \beta_i, \dots, \hat{\beta}_p) ~~ \text{ s.t. } ~~ |\beta_i| \leq M.$$
	We repeat this procedure of solving the restricted problem \eqref{eq:restricted_problem} and then augmenting $\mathcal{A}$ with violating variables, until there are no more violations. The algorithm is summarized below.
	
	\begin{itemize}[leftmargin=*]
		\item[] \textbf{Algorithm 2: The Active-set Algorithm\footnote{Algorithm 2 solves the root relaxation. For other nodes, the objective function $F(\beta)$ will need to be modified to account for the fixed variables (as detailed in Appendix \ref{appendix:node_subproblems}), and all the variables fixed to zero at the current node should be excluded from the set $\mathcal{V}$ in Step 2.}}
		\item \textbf{Input:} Initial solution $\hat{\beta}$ and initial active set $\mathcal{A}$.
		\item \textbf{Repeat}:
		\begin{itemize}[leftmargin=*]
			\item Step 1: Solve \eqref{eq:restricted_problem} using Algorithm 1 to get a solution $\hat{\beta}$.
			\item Step 2: $\mathcal{V} \gets \{i \in \text{Supp}(\hat{\beta})^c \ | \ 0 \neq \argmin_{|\beta_i| \leq M} F(\hat{\beta}_1, \dots, \beta_i, \dots, \hat{\beta}_p)   \}$.
			\item Step 3: If $\mathcal{V}$ is empty \textbf{terminate}, otherwise, $\mathcal{A} \gets \mathcal{A} \cup \mathcal{V}$.
		\end{itemize}
	\end{itemize}
	
	The quality of the initial active set affects the number of iterations in Algorithm 2. For example, if $\mathcal{A}$ is a superset of the support of an optimal solution to \eqref{eq:relaxation}, then $\mathcal{V}$ in the first iteration of Algorithm 2 will be empty, and the algorithm will terminate in a single iteration. For every node in BnB, we choose the initial active set to be the same as the final active set of the parent node. This works well in practice because parent and children nodes typically have similar supports. For the root relaxation, we obtain the initial active set by choosing a small subset of features which have the highest (absolute) correlation with $y$. 
	
	In Section \ref{sec:dualbds}, we present a novel method that makes use of the updates in Algorithm~2 to obtain provably high-quality dual bounds. We note that our approach goes beyond standard active-set methods \cite{glmnet}. In particular, (i) we use active sets in the context of a BnB tree, percolating information from parents to children (as opposed to warm-start continuation across a grid of regularization parameters); and (ii) we exploit the active sets to deliver dual bounds. 
	In the next remark, we discuss how Step 1 in Algorithm 2 can be performed inexactly.
	\begin{remark} \label{remark:inexactness}
		In practice, we solve the restricted optimization problem in Step~1 of Algorithm~2 inexactly. Specifically, in Step~1, we terminate Algorithm 1 when the relative change in the objective values is below a small numerical tolerance\footnote{If upon termination, an integral $z$ is obtained, we solve the relaxation to optimality, as discussed in the introduction of  Section \ref{sec:bnb}.}. For Step~3, we ensure that $\mathcal{V}$ is (exactly) empty before terminating the algorithm, which guarantees that there are no optimality violations outside the support. Having no optimality violations outside the support will be essential to obtain the tight dual bounds discussed in Section \ref{sec:dualbds}.
	\end{remark}

	\subsubsection{Gradient Screening} \label{sec:gradient_screening}
	Algorithm 2 effectively reduces the number of coordinate updates, through restricting optimization to the active set. However, checking the optimality conditions outside the active set, i.e., performing Step 2, requires $\mathcal{O}(np)$ operations. This is a bottleneck when $p$ is large, even if a small number of such checks (or passes) are performed. To mitigate this, we present a \textsl{gradient screening} method which reduces the time complexity of these optimality checks. Our method is inspired by the gradient screening technique proposed in \cite{hazimeh2019learning} for a different problem: learning sparse hierarchical interactions via a convex optimization formulation. In the current paper, the optimality checks in Step 2 of Algorithm 2 essentially require computing a gradient of the least squares loss, in order to construct $\mathcal{V}$. Loosely speaking, gradient screening is designed to avoid computing the ``non-essential'' parts of this gradient by using previously computed quantities in the BnB tree.
	
	In the following proposition, we give an explicit way to construct the set $\mathcal{V}$ in Step 2 of Algorithm 2. 
	\begin{proposition} \label{prop:V_explicit}
		Let $\hat{\beta}$ and $\mathcal{V}$ be as defined in Algorithm 2, and define  $\hat{r} = y - X \hat{\beta}$. Then, the set $\mathcal{V}$ can be equivalently written as follows:
		\begin{align} \label{eq:V}
		\mathcal{V} = \{  i \in \text{Supp}(\hat{\beta})^c \ | \ | \langle \hat{r}, X_i \rangle| > c(\lambda_0, \lambda_2, M) \},
		\end{align}
		where $X_{i}$ denotes the $i$-th column of $X$; and
		$c(\lambda_0, \lambda_2, M) = 2 \sqrt{\lambda_0 \lambda_2}$ if $\sqrt{{\lambda_0}/{\lambda_2}} \leq M$, and $c(\lambda_0, \lambda_2, M) = (\lambda_0/M + \lambda_2 M)$ if $\sqrt{{\lambda_0}/{\lambda_2}} > M$.
	\end{proposition}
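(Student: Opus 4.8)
The plan is to reduce the membership condition defining $\mathcal{V}$ to a simple thresholding test on the correlation $\langle \hat{r}, X_i \rangle$, by directly exploiting the closed-form coordinate updates \eqref{eq:case1sol}--\eqref{eq:case2sol} already established for Problem \eqref{eq:proximal}. First I would fix $i \in \text{Supp}(\hat{\beta})^c$, so that $\hat{\beta}_i = 0$. The univariate subproblem appearing in the definition of $\mathcal{V}$ is precisely Problem \eqref{eq:univariate}, which, using that the columns of $X$ have unit $\ell_2$ norm, is equivalent to the proximal form \eqref{eq:proximal} with $\tilde{\beta}_i = \langle y - \sum_{j \neq i} X_j \hat{\beta}_j, X_i \rangle$. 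Since $\hat{\beta}_i = 0$, we have $\sum_{j \neq i} X_j \hat{\beta}_j = X \hat{\beta}$, and therefore $\tilde{\beta}_i = \langle y - X\hat{\beta}, X_i \rangle = \langle \hat{r}, X_i \rangle$. This identifies the thresholding input with exactly the quantity appearing in \eqref{eq:V}.

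Next, because the objective in \eqref{eq:proximal} is strictly convex in $\beta_i$ (a squared loss plus the convex penalty $\psi$), its minimizer is unique; hence the condition $0 \neq \argmin_{|\beta_i|\leq M} F(\hat{\beta}_1,\dots,\beta_i,\dots,\hat{\beta}_p)$ is equivalent to the statement that the unique minimizer, given in closed form by \eqref{eq:case1sol} or \eqref{eq:case2sol}, is nonzero. I would then characterize nonvanishing of the boxed soft-thresholding operator via the elementary fact that $T(t; a, m) = 0$ if and only if $|t| \leq a$.

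I would then split into the two regimes of the penalty. When $\sqrt{\lambda_0/\lambda_2} > M$, formula \eqref{eq:case2sol} gives the minimizer $T(\tilde{\beta}_i; \lambda_0/M + \lambda_2 M, M)$, which is nonzero precisely when $|\tilde{\beta}_i| > \lambda_0/M + \lambda_2 M$, matching $c(\lambda_0,\lambda_2,M)$. When $\sqrt{\lambda_0/\lambda_2} \leq M$, the minimizer is piecewise: on the region $|\tilde{\beta}_i| \leq 2\sqrt{\lambda_0\lambda_2} + \sqrt{\lambda_0/\lambda_2}$ it equals $T(\tilde{\beta}_i; 2\sqrt{\lambda_0\lambda_2}, M)$, which vanishes iff $|\tilde{\beta}_i| \leq 2\sqrt{\lambda_0\lambda_2}$; on the complementary region it equals $T(\tilde{\beta}_i(1+2\lambda_2)^{-1}; 0, M)$. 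Combining the two pieces gives that the minimizer is nonzero iff $|\tilde{\beta}_i| > 2\sqrt{\lambda_0\lambda_2} = c(\lambda_0,\lambda_2,M)$. Substituting $\tilde{\beta}_i = \langle \hat{r}, X_i\rangle$ and collecting the qualifying indices $i \in \text{Supp}(\hat{\beta})^c$ then yields \eqref{eq:V}.

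The main obstacle is the consistency check in the case $\sqrt{\lambda_0/\lambda_2} \leq M$: I must verify that the effective vanishing threshold is uniformly $2\sqrt{\lambda_0\lambda_2}$ across both pieces of \eqref{eq:case1sol}. The only way this could fail is if the second piece were able to return a zero solution; but that piece is active only when $|\tilde{\beta}_i| > 2\sqrt{\lambda_0\lambda_2} + \sqrt{\lambda_0/\lambda_2} > 0$ (assuming $\lambda_0,\lambda_2 > 0$), so the operator $T(\cdot; 0, M)$ there never returns zero. Consequently the split point $2\sqrt{\lambda_0\lambda_2} + \sqrt{\lambda_0/\lambda_2}$ lies strictly above the first-piece threshold $2\sqrt{\lambda_0\lambda_2}$, the two pieces agree on their vanishing sets, and the overall threshold is exactly $2\sqrt{\lambda_0\lambda_2}$, as claimed.
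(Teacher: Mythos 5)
Your proposal is correct and follows essentially the same route as the paper's proof: reduce the optimality check to the univariate proximal problem \eqref{eq:proximal}, note that $\hat{\beta}_i=0$ gives $\tilde{\beta}_i=\langle \hat{r},X_i\rangle$, and read off the nonvanishing thresholds from the closed-form solutions \eqref{eq:case1sol}--\eqref{eq:case2sol}. The only difference is that you make explicit two details the paper leaves implicit (uniqueness of the minimizer via strict convexity, and the check that the second piece of \eqref{eq:case1sol} never returns zero), which strengthens rather than changes the argument.
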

	
	By Proposition \ref{prop:V_explicit}, constructing $\mathcal{V}$ directly costs $\mathcal{O}(n (p - \| \hat{\beta} \|_0 ))$. Next, we will discuss how to compute $\mathcal{V}$ with a lower cost, by making use of previously computed quantities. Suppose we have access to a set $\hat{\mathcal{V}} \subseteq [p]$ such that $\mathcal{V} \subseteq \hat{\mathcal{V}}$. Then, we can construct $\mathcal{V}$ by restricting the checks in \eqref{eq:V} to $\hat{\mathcal{V}}$ instead of $\text{Supp}(\hat{\beta})^c$, i.e., the following holds:
	\begin{align} \label{eq:Vrestricted}
	\mathcal{V} = \{  i \in \hat{\mathcal{V}} \ | \ | \langle \hat{r}, X_i \rangle| > c(\lambda_0, \lambda_2, M) \}.
	\end{align}
	Assuming $\hat{\mathcal{V}}$ is available, the cost of computing $\mathcal{V}$ in \eqref{eq:Vrestricted} is $\mathcal{O}(n |\hat{\mathcal{V}}|)$. This cost can be significantly smaller than that of \eqref{eq:V}, if $|\hat{\mathcal{V}}|$ is sufficiently small. Next, we present a method that can obtain a relatively small $\hat{\mathcal{V}}$ in practice, thereby speeding up the computation of $\mathcal{V}$.
	
	\smallskip
	
	\noindent\textbf{Computation of $\hat{\mathcal{V}}$: }
	Proposition \ref{prop:screening} presents a method to construct a set  $\hat{\mathcal{V}}$ which satisfies {$\mathcal{V} \subseteq \hat{\mathcal{V}}$} (as discussed above), using information from a warm start $\beta^{0}$ (e.g., the solution of the relaxation from the parent node in BnB). 
	\begin{proposition} \label{prop:screening}
		Let $\hat{\beta}$ and $\mathcal{V}$ be as defined in Algorithm 2. Let $\beta^{0}$ be an arbitrary vector in $\mathbb{R}^p$. Define $\hat{r} = y - X \hat{\beta}$,  $r^{0} = y - X \beta^{0}$, and $\epsilon = \|X \beta^{0} - X \hat{\beta}\|_2$. Then, the following holds:
		\begin{align} \label{eq:V_hat}
		\mathcal{\mathcal{V}} \subseteq \hat{\mathcal{V}} := \Big \{ i \in \text{Supp}(\hat{\beta})^c \ | \ | \langle r^0, X_i \rangle | > c(\lambda_0, \lambda_2, M) -  \epsilon   \Big\}.
		\end{align}
	\end{proposition}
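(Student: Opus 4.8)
The plan is to derive the containment $\mathcal{V}\subseteq\hat{\mathcal{V}}$ directly from the exact characterization of $\mathcal{V}$ established in Proposition~\ref{prop:V_explicit}. That proposition already expresses $\mathcal{V}$ as the set of indices $i\in\text{Supp}(\hat{\beta})^c$ for which $|\langle \hat{r},X_i\rangle| > c(\lambda_0,\lambda_2,M)$, with $\hat{r}=y-X\hat{\beta}$. Since the defining set $\hat{\mathcal{V}}$ ranges over the same index set $\text{Supp}(\hat{\beta})^c$, the only thing I need to show is that the inner-product screening threshold can be relaxed from $c(\lambda_0,\lambda_2,M)$ (tested against the \emph{true} residual $\hat{r}$) to $c(\lambda_0,\lambda_2,M)-\epsilon$ (tested against the \emph{warm-start} residual $r^0$). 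So the whole argument reduces to controlling how much the correlation $\langle\hat{r},X_i\rangle$ can differ from its warm-start analogue $\langle r^0,X_i\rangle$.

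First I would write the difference of the two correlations as an inner product with the fitted-value discrepancy. Using $\hat{r}-r^0 = (y-X\hat{\beta})-(y-X\beta^{0}) = X\beta^{0}-X\hat{\beta}$, I obtain
\begin{align*}
\big|\langle \hat{r},X_i\rangle - \langle r^0,X_i\rangle\big| = \big|\langle X\beta^{0}-X\hat{\beta},\,X_i\rangle\big|.
\end{align*}
Applying Cauchy--Schwarz and invoking the standing assumption (from the start of Section~\ref{sec:bnb}) that the columns of $X$ have unit $\ell_2$ norm, this is bounded by $\|X\beta^{0}-X\hat{\beta}\|_2\,\|X_i\|_2 = \epsilon$, which is exactly the quantity defined in the statement. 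The triangle inequality then yields $|\langle\hat{r},X_i\rangle| \le |\langle r^0,X_i\rangle| + \epsilon$ for every $i$.

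To finish, I would take any $i\in\mathcal{V}$. By Proposition~\ref{prop:V_explicit} such an index satisfies $i\in\text{Supp}(\hat{\beta})^c$ and $|\langle\hat{r},X_i\rangle| > c(\lambda_0,\lambda_2,M)$. Rearranging the bound from the previous step gives $|\langle r^0,X_i\rangle| \ge |\langle\hat{r},X_i\rangle| - \epsilon > c(\lambda_0,\lambda_2,M) - \epsilon$, so $i$ meets both defining conditions of $\hat{\mathcal{V}}$; hence $\mathcal{V}\subseteq\hat{\mathcal{V}}$.

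There is no substantial obstacle here: the result is a one-line perturbation bound on correlations, and the only point that requires care is recognizing that the unit-norm normalization of the columns is what makes the Cauchy--Schwarz estimate collapse to precisely $\epsilon$ (rather than $\epsilon\,\max_i\|X_i\|_2$). I would also remark that the bound is intentionally loose in a useful direction—$\epsilon$ measures the proximity of the warm start $\beta^{0}$ to the current iterate $\hat{\beta}$ in fitted-value space, so a good warm start (small $\epsilon$) makes the relaxed threshold $c-\epsilon$ close to $c$ and keeps $\hat{\mathcal{V}}$ small, which is exactly the behavior exploited for the reduced screening cost in~\eqref{eq:Vrestricted}.
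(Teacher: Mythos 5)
Your proof is correct and follows essentially the same route as the paper's: the identity $\hat{r}-r^{0}=X\beta^{0}-X\hat{\beta}$, Cauchy--Schwarz with the unit-norm columns to bound the correlation perturbation by $\epsilon$, and the triangle inequality to transfer the threshold test from $\hat{r}$ to $r^{0}$. No gaps; your additional remark that membership of $\mathcal{V}$ in $\text{Supp}(\hat{\beta})^c$ carries over via Proposition~\ref{prop:V_explicit} is the same (implicit) step the paper relies on.
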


	The set $\hat{\mathcal{V}}$ defined in \eqref{eq:V_hat}  depends solely on $\epsilon$ and the quantities $| \langle {r}^{0}, X_i \rangle |$, $i \in \text{Supp}(\hat{\beta})^c$, which we will assume to be sorted and available along with the warm start $\beta^{0}$. This is in contrast to a direct evaluation of $\mathcal{V}$ in \eqref{eq:V}, which requires the costly computation of the terms $| \langle \hat{r}, X_i \rangle |$, $i \in \text{Supp}(\hat{\beta})^c$. Given the sorted values $| \langle {r}^{0}, X_i \rangle |$, $i \in \text{Supp}(\hat{\beta})^c$, we can identify $\hat{\mathcal{V}}$
	in \eqref{eq:V_hat} with $\mathcal{O}(\log p)$ operations, using a variant of binary search. Thus, 
	the overall cost of computing  $\mathcal{V}$ using \eqref{eq:Vrestricted} is $\mathcal{O}(n|\hat{\mathcal{V}}| + \log p)$. {We also note that in practice the inclusion ${\mathcal V} \subseteq \hat{\mathcal V}$ in \eqref{eq:V_hat} is strict.}
	
	Proposition \ref{prop:screening} tells us that the closer $X\beta^{0}$ is to $X \hat{\beta}$, the smaller $\hat{\mathcal{V}}$ is, i.e., the lower is the cost of computing $\mathcal{V}$ in \eqref{eq:Vrestricted}. Thus, for the method to be successful, we need a good warm start $\beta^{0}$. In practice, we obtain $\beta^{0}$ for the current node in BnB from its parent, and we update $\beta^{0}$ as necessary during the course of Algorithm 2 (as detailed below). 
	{Let $\epsilon_{\text{gs}} \in (0,1)$ be a pre-specified parameter for the gradient screening procedure.} The procedure, which replaces Step 2 of Algorithm 2, is defined as follows:
	\begin{enumerate}
		\item[] \textbf{Gradient Screening}
		\item If this is the root node and the first iteration of Algorithm 2, then: update $\beta^{0} \gets \hat{\beta}$, $r^{0} = y - X \beta^{0}$; compute and sort the terms $|\langle r^{0}, X_i \rangle|$, $i \in [p]$; compute $\mathcal{V}$ using \eqref{eq:V}; and skip all  the steps below.
		\item If this is the first iteration of Algorithm 2, get $\beta^{0}$ from the parent node in the BnB tree.
		\item Compute $\hat{\mathcal{V}}$ using \eqref{eq:V_hat} and then $\mathcal{V}$ using \eqref{eq:Vrestricted}\footnote{Each variable $i$ fixed to zero by BnB at the current node should be excluded when computing ${\mathcal{V}}$ and $\hat{\mathcal{V}}$.}.
		\item  If $|\hat{\mathcal{V}}| > \epsilon_{\text{gs}} p$, update $\beta^{0} \gets \hat{\beta}$, $r^{0} = y - X \beta^{0}$; re-compute and sort the terms $|\langle r^{0}, X_i \rangle|$, $i \in [p]$.
	\end{enumerate}
	
	If $X \beta^{0}$ is not a good estimate of $X \hat{\beta}$, then $\hat{\mathcal{V}}$ might become large. To avoid this issue, we update $\beta^{0}$ in Step 4 above, every time the set $\hat{\mathcal{V}}$ becomes relatively large ($|\hat{\mathcal{V}}| > \epsilon_{\text{gs}} p$). {The parameter $\epsilon_{\text{gs}}$ controls how often $\beta^{0}$ is updated. In our implementation, we use $\epsilon_{\text{gs}} = 0.05$ by default, but we note that the parameter can be generally tuned in order to improve the running time.} The updated $\beta^{0}$ will be passed to the children in the BnB tree. While Step 4 above can be costly, it is not performed often in practice as the solutions of the relaxations in the parent and children nodes are typically close.
	
	{Based on our current implementation and experiments, we see notable benefits from gradient screening when $p \geq 10^4$ (e.g., more than $2$x speedup for $p=10^6$). For smaller values of $p$, we typically observe a small additional overhead from gradient screening. 
		Moreover, we note that gradient screening will increase memory consumption, because each of the open nodes in the tree will need to store a $p$-dimensional vector (consisting of the quantities $|\langle r^{0}, X_i \rangle|$, $i \in [p]$, which are maintained by gradient screening).} 

	\subsection{Dual Bounds} \label{sec:dualbds}
	In practice, we use Algorithm 2 to obtain inexact primal solutions for relaxation \eqref{eq:relaxation}, as discussed in Remark \ref{remark:inexactness}. However, dual bounds are needed to perform search space pruning in BnB. Here, we present a new efficient method to obtain dual bounds from the primal solutions. Moreover, we prove that our method can obtain dual bounds whose tightness depends on the sparsity level of the relaxation rather than $p$. We start by introducing a Lagrangian dual of relaxation \eqref{eq:relaxation} in Theorem \ref{theorem:duals}.
	\begin{theorem} \label{theorem:duals}
		For $\sqrt{{\lambda_0}/{\lambda_2}} \leq  M $, a dual of Problem \eqref{eq:relaxation} is given by:
		\begin{align}
		\max_{ \alpha \in \mathbb{R}^n, \gamma \in {\mathbb{R}}^{p}} & ~~ h_1(\alpha, \gamma) := - \frac{1}{2} \| \alpha \|_2^2 - \alpha^T y - \sum_{i \in [p]} v(\alpha, \gamma_i), \label{eq:objdualcase1}
		\end{align}
		where $v: \mathbb{R}^{n+1} \to \mathbb{R}$ is defined as follows:
		\begin{align}\label{thm2-defn-v1}
		v(\alpha, \gamma_i) := \Big[   \frac{(\alpha^T X_i -  \gamma_i)^2}{4 \lambda_2} - \lambda_0  \Big]_{+} + M |\gamma_i|.
		\end{align}
		Otherwise, if $\sqrt{{\lambda_0}/{\lambda_2}} >  M $, a dual of Problem \eqref{eq:relaxation} is given by
		\begin{equation} \label{eq:objdualcase2}
		\begin{aligned}
		\max_{\rho \in \mathbb{R}^n, \mu \in \mathbb{R}^{p}} &~ h_2(\rho, \mu) := - \frac{1}{2} \| \rho \|_2^2 - \rho^T y - M \|\mu\|_1 \\
		\text{s.t.}~~&~~ |\rho^T X_i| - \mu_i \leq \lambda_0/M + \lambda_2 M, ~ i \in [p].
		\end{aligned}
		\end{equation}
		Let $\beta^{*}$ be an optimal solution to Problem \eqref{eq:relaxation} and define $r^{*} = y - X\beta^{*}$. The optimal dual variables for \eqref{eq:objdualcase1} are given by: 
		\begin{equation}\label{thm2-statement-pr-dual1}
		\alpha^{*} = -r^{*}~~\text{and}~~\gamma^{*}_i = \mathds{1}_{[|\beta_i^{*}| = M]}({\alpha}^{*T} X_i -  2M\lambda_2 \sign({\alpha}^{*T} X_i)), i \in [p].
		\end{equation}
		Moreover, the optimal dual variables for \eqref{eq:objdualcase2} are: 
		\begin{align}\label{thm2-statement-pr-dual2}
		\rho^{*} = -r^{*}~~~\text{and}~~~\mu^{*}_i = \mathds{1}_{[|\beta_i^{*}| = M]}(|\rho^{*T} X_i| - \lambda_0/M - \lambda_2 M), i \in [p].
		\end{align}
	\end{theorem}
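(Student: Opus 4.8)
The plan is to derive both duals as partial Lagrangian duals of the convex reduced relaxation \eqref{eq:relaxation}, reducing the computation to a family of one-dimensional Fenchel conjugates that can be evaluated in closed form. First I would introduce the residual $r = y - X\beta$ as an explicit variable with the linear constraint $r = y - X\beta$, and attach a multiplier $\alpha \in \mathbb{R}^n$ to it while keeping the box $\|\beta\|_\infty \le M$ inside the inner minimization. Minimizing the resulting Lagrangian over $r$ is a strongly convex quadratic with minimizer $r = -\alpha$, contributing $-\tfrac12\|\alpha\|_2^2 - \alpha^T y$; since the remaining coupling term $\alpha^T X\beta = \sum_i (\alpha^T X_i)\beta_i$ is separable, the inner problem splits into the $p$ scalar problems $\min_{|\beta_i|\le M}[\psi(\beta_i;\lambda_0,\lambda_2,M) + (\alpha^T X_i)\beta_i]$. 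Writing $a_i := \alpha^T X_i$, the dual function becomes $-\tfrac12\|\alpha\|_2^2 - \alpha^T y + \sum_i \phi(a_i)$, where $\phi(a_i)$ is the value of the $i$-th scalar problem, so everything hinges on expressing $\phi$ in the auxiliary-variable form appearing in \eqref{eq:objdualcase1} and \eqref{eq:objdualcase2}.

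The key computation is this scalar conjugate. Let $\iota_C$ denote the $\{0,+\infty\}$-valued indicator of a convex set $C$, so that the box is encoded by $\iota_{[-M,M]}$, whose conjugate is $M|\cdot|$. For $\sqrt{\lambda_0/\lambda_2}\le M$ the penalty is the scaled reverse Huber $\psi_1$, and a direct computation from \eqref{eq:reverse_huber} gives the clean conjugate $\psi_1^*(u) = [u^2/(4\lambda_2) - \lambda_0]_+$ (the threshold $2\sqrt{\lambda_0\lambda_2}$ is exactly where $u^2/(4\lambda_2) = \lambda_0$). Using the conjugate-of-a-sum rule (an infimal convolution), and after a sign substitution together with the evenness of $\psi_1^*$, $\phi(a_i) = -(\psi_1 + \iota_{[-M,M]})^*(-a_i) = \max_{\gamma_i}\{-\psi_1^*(a_i - \gamma_i) - M|\gamma_i|\}$, which is precisely $\max_{\gamma_i}\{-v(\alpha,\gamma_i)\}$ with $v$ as in \eqref{thm2-defn-v1}; substituting back and maximizing jointly over $(\alpha,\gamma)$ yields \eqref{eq:objdualcase1}. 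For $\sqrt{\lambda_0/\lambda_2} > M$ the penalty is the pure $\ell_1$ term $\psi_2(\beta_i) = (\lambda_0/M + \lambda_2 M)|\beta_i|$, whose conjugate is the indicator $\iota_{[-c,c]}$ with $c = \lambda_0/M + \lambda_2 M$; the same infimal-convolution step turns $\phi(a_i)$ into a linearly constrained maximization, which after introducing $\mu_i$ reproduces the constrained dual \eqref{eq:objdualcase2}. I would be careful to record the constraint qualification for the conjugate-of-sum identity, which holds here because $\psi$ is finite-valued and the box has nonempty interior, so the relative interiors of the two domains intersect.

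It then remains to establish strong duality and to verify the stated optimal dual variables. Strong duality holds because \eqref{eq:relaxation} is a convex program with a Slater point (e.g.\ $\beta = 0$ is interior to the box for $M>0$), so the optimal primal and dual values coincide and a dual optimum is attained. The stationarity condition $r = -\alpha$ evaluated at the primal optimum $\beta^*$ gives $\alpha^* = -r^*$ with $r^* = y - X\beta^*$, matching \eqref{thm2-statement-pr-dual1} and \eqref{thm2-statement-pr-dual2}. For the auxiliary variables I would read off the inner maximizer $\gamma_i^*$ (resp.\ $\mu_i^*$) from the Fenchel--Young equality $\psi_1(\beta_i^*) + \psi_1^*(a_i-\gamma_i^*) = (a_i-\gamma_i^*)\beta_i^*$ together with complementary slackness for the box: when the box is slack ($|\beta_i^*| < M$) the optimal multiplier vanishes, $\gamma_i^*=0$, whereas when the box is tight ($|\beta_i^*| = M$) the $[\cdot]_+$ term is active and setting the $\gamma_i$-derivative of $v$ to zero yields $\gamma_i^* = a_i - 2M\lambda_2\,\sign(a_i)$, which is exactly the indicator expression in \eqref{thm2-statement-pr-dual1} (the analogous reading gives $\mu_i^*$ in \eqref{thm2-statement-pr-dual2}). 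A clean way to finish is to substitute $(\alpha^*,\gamma^*)$ into $h_1$ and check coordinatewise, via the Fenchel--Young equalities, that $h_1(\alpha^*,\gamma^*) = F(\beta^*)$; combined with the weak duality built into the Lagrangian construction this certifies optimality of both points simultaneously.

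I expect the main obstacle to be the $\sqrt{\lambda_0/\lambda_2}\le M$ case, specifically the bookkeeping where the reverse-Huber kink at $|\beta_i| = \sqrt{\lambda_0/\lambda_2}$ interacts with the box at $|\beta_i| = M$. Although the infimal-convolution viewpoint packages this cleanly, one must still confirm that $\psi_1^*$ has the stated closed form across both pieces of \eqref{eq:reverse_huber}, and that the minimizing $\gamma_i$ lands in the regime where the $[\cdot]_+$ is active precisely when the box binds; getting the sign of $\gamma_i^*$ to align with $\sign(\alpha^{*T}X_i)$ and hence with $\sign(\beta_i^*)$ is the delicate step, and is what forces the indicator $\mathds{1}_{[|\beta_i^*|=M]}$ in \eqref{thm2-statement-pr-dual1} and \eqref{thm2-statement-pr-dual2}.
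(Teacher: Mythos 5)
Your proposal is correct, but it reaches the dual by a genuinely different route than the paper. The paper forms the full Lagrangian: it dualizes both the coupling constraint $r = y - X\beta$ (multiplier $\alpha$) \emph{and} the box constraints $|\beta_i| \leq M$ (multipliers $\eta_i \geq 0$), then solves the inner minimization over each $\beta_i$ by an explicit two-case analysis on the reverse-Huber penalty ($|\beta_i|$ below or above $\sqrt{\lambda_0/\lambda_2}$), obtaining the per-coordinate value $-\bigl[(|\alpha^T X_i| - \eta_i)^2/(4\lambda_2) - \lambda_0\bigr]_{+}$; the variable $\gamma_i$ of \eqref{eq:objdualcase1} then enters only at the end, as a signed reparametrization $\eta_i = |\gamma_i|$ that absorbs the nonnegativity constraint. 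You instead dualize only the coupling constraint, keep the box inside the inner scalar problems, and evaluate the resulting conjugate $(\psi + \iota_{[-M,M]})^*$ by infimal convolution, with $\gamma_i$ arising as the splitting variable and the closed form $\psi_1^*(u) = [u^2/(4\lambda_2) - \lambda_0]_{+}$ carrying the case analysis that the paper does by hand. The two derivations produce the same object---your $\gamma_i$ is exactly the paper's signed box multiplier---but yours is more modular: the reverse-Huber conjugate is computed once, the conjugate-of-sum rule does the bookkeeping, and the $\sqrt{\lambda_0/\lambda_2} > M$ case follows uniformly since $\psi_2^*$ is an indicator, whereas the paper simply omits that case as ``similar.'' The price is that you must invoke, and verify the qualification for, exactness of the infimal convolution, which you do (finite-valued $\psi$, box with nonempty interior). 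Your recovery of the optimal dual variables via Fenchel--Young equality plus complementary slackness for the box matches the paper's complementary-slackness-plus-stationarity argument in substance, including the delicate sign-alignment step forcing $\gamma_i^* = \alpha^{*T}X_i - 2M\lambda_2 \sign(\alpha^{*T}X_i)$ only when $|\beta_i^*| = M$; your final consistency check $h_1(\alpha^*, \gamma^*) = F(\beta^*)$ combined with weak duality is a clean way to certify optimality of both primal and dual points simultaneously.
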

	Note that strong duality holds for relaxation \eqref{eq:relaxation} since it satisfies Slater's condition \cite{bertsekas2016nonlinear}, so the optimal objective of the dual in Theorem \ref{theorem:duals} matches that of \eqref{eq:relaxation}.

	
	\smallskip
	
	\noindent\textbf{Dual Feasible Solutions: } Let $\hat{\beta}$ be an inexact primal solution obtained using Algorithm 2 and define $\hat{r} = y - X\hat{\beta}$. We discuss next how to construct a dual feasible solution
	using $\hat{\beta}$, i.e., without solving the dual in Theorem \ref{theorem:duals}. 
	
	\noindent{\bf{Case of $\sqrt{{\lambda_0}/{\lambda_2}} \leq  M$:} } 
	Here, we construct a dual solution $(\hat{\alpha}, \hat{\gamma})$ for Problem \eqref{eq:objdualcase1} as follows:
	\begin{align} \label{eq:case1dualsol}
	\hat{\alpha} = - \hat{r} ~~ \text{and} ~~  \hat{\gamma} \in \argmax_{\gamma \in \mathbb{R}^p} h_1(\hat{\alpha}, \gamma).
	\end{align}
	The choice $\hat{\alpha} = - \hat{r}$ is motivated by the optimality conditions in Theorem \ref{theorem:duals}, while $\hat{\gamma}$ maximizes the dual objective (with $\alpha$ fixed to $\hat{\alpha}$). Note that the constructed solution is (trivially) feasible since the dual in \eqref{eq:objdualcase1} is unconstrained. Since \eqref{eq:objdualcase1} is separable across the $\gamma_i$'s,  $\hat{\gamma}_i$ is equivalently the solution of $\min_{\gamma_i \in \mathbb{R}} v(\hat{\alpha}, \gamma_i)$, 
	whose corresponding solution is given by:
	\begin{align} \label{eq:case1dualsolexplicit}
	\hat{\gamma}_i =
	T(\hat{\alpha}^T X_i; 2M\lambda_2,\infty).
	\end{align}
	Thus, $\hat{\gamma}$ can be obtained in closed-form using \eqref{eq:case1dualsolexplicit}. Since $\hat{\beta}$ is the output of Algorithm 2, we know that the corresponding set $\mathcal{V}$ in Algorithm 2 is empty. 
	Thus, by Proposition \ref{prop:V_explicit}, we have $|\hat{r}^T X_i| \leq 2\sqrt{\lambda_0 \lambda_2}$ for any $i \in \text{Supp}(\hat{\beta})^c$. Using $\hat{r} = - \hat{\alpha}$ and $\sqrt{{\lambda_0}/{\lambda_2}} \leq  M$ in the previous in inequality, we get $|\hat{\alpha}^T X_i| \leq 2M\lambda_2$. Using the latter bound in \eqref{eq:case1dualsolexplicit}, we get that $\hat{\gamma}_i = 0$ for any $i \in \text{Supp}(\hat{\beta})^c$. However, for $i \in \text{Supp}(\hat{\beta})$, $\hat{\gamma}_i$ can be potentially nonzero.
	
	\smallskip

	\noindent{\bf{Case of $\sqrt{{\lambda_0}/{\lambda_2}} >  M$:} }  
	We construct a dual feasible solution ($\hat{\rho}, \hat{\mu}$) for~\eqref{eq:objdualcase2} as follows:
	\begin{align} \label{eq:case2dualsol}
	\hat{\rho} = - \hat{r} ~~ \text{and}
	~~ \hat{\mu} \in \argmax_{\mu \in \mathbb{R}^p} & ~~ h_2(\hat{\rho}, \mu)~~ \text{   s.t.   } ~~ (\hat{\rho}, \mu) \text { is feasible for } \eqref{eq:objdualcase2}.
	\end{align}
	Similar to \eqref{eq:case2dualsol}, the choice $\hat{\rho} = - \hat{r}$ is motivated by the optimality conditions in Theorem \ref{theorem:duals}, whereas the choice $\hat{\mu}$ maximizes the dual objective under the condition that $\rho = - \hat{r}$ (while ensuring feasibility). It can be readily seen that $\hat{\mu}_i$ is given in closed form by:
	\begin{align} \label{eq:mu_hat}
	\hat{\mu}_i = 
	\Big [|\hat{\rho}^T X_i| - \lambda_0/M - \lambda_2 M \Big]_{+}.
	\end{align}
	Note that for any $i \in \text{Supp}(\hat{\beta})^c$, we have $|\hat{\rho}^T X_i| = |\hat{r}^T X_i| \leq \lambda_0/M + \lambda_2 M$ (by Proposition \ref{prop:V_explicit}), which implies that $\hat{\mu}_i = 0$. However, for $i \in \text{Supp}(\hat{\beta})$, $\hat{\mu}_i$ can be potentially nonzero. 
	
	
	
	\smallskip
	
	\noindent\textbf{Quality of the Dual Bounds:} In Theorem~\ref{theorem:dualitygap}, we quantify the tightness of the dual bounds obtained from the dual feasible solutions \eqref{eq:case1dualsol} and \eqref{eq:case2dualsol}.
	\begin{theorem} \label{theorem:dualitygap}
		Let $\alpha^{*}$, $\gamma^{*}$, $\rho^{*}$, and $\mu^{*}$ be the optimal dual variables defined in Theorem \ref{theorem:duals}. Let $\beta^{*}$ be an optimal solution to \eqref{eq:relaxation}, and $\hat{\beta}$ be an inexact solution obtained using Algorithm 2. Define the primal gap $\epsilon = \| X (\beta^{*} - \hat{\beta}) \|_2$. Let $k = \| \hat{\beta} \|_0$ denote the number of nonzeros in the inexact solution to~\eqref{eq:relaxation}. For a fixed $(M,\lambda_2)$, the following holds for the dual solution $(\hat{\alpha}, \hat{\gamma})$ defined in~\eqref{eq:case1dualsol}:
		\begin{align} \label{eq:dualbd1}
		& h_1(\hat{\alpha}, \hat{\gamma}) \geq  h_1(\alpha^{*}, \gamma^{*}) - k \mathcal{O}(\epsilon) - k \mathcal{O}(\epsilon^2),
		\end{align}
		and for the dual solution $(\hat{\rho}, \hat{\mu})$ defined in \eqref{eq:case2dualsol}, we have:
		\begin{align} \label{eq:dualbd2}
		& h_2(\hat{\rho}, \hat{\mu}) \geq  h_2(\rho^{*}, \mu^{*}) - k \mathcal{O}(\epsilon) - \mathcal{O}(\epsilon^2). 
		\end{align}
	\end{theorem}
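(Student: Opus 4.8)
The plan is to avoid comparing $(\hat\alpha,\hat\gamma)$ with the optimal dual variables directly, and instead work with the \emph{partial dual} obtained by eliminating the innermost variable in closed form. For $\sqrt{\lambda_0/\lambda_2}\le M$, define $g(\alpha):=-\tfrac12\|\alpha\|_2^2-\alpha^\top y$ and $\phi(\alpha):=\max_{\gamma}h_1(\alpha,\gamma)=g(\alpha)-\sum_{i\in[p]}\tilde w(\alpha^\top X_i)$, where $\tilde w(u):=\min_{\gamma_i}v(\alpha,\gamma_i)$ is the scalar convex function one gets by plugging the minimizer $\gamma_i=T(\alpha^\top X_i;2M\lambda_2,\infty)$ into \eqref{thm2-defn-v1}; explicitly $\tilde w(u)=0$ for $|u|\le 2\sqrt{\lambda_0\lambda_2}$, $\tilde w(u)=\tfrac{u^2}{4\lambda_2}-\lambda_0$ for $2\sqrt{\lambda_0\lambda_2}\le|u|\le 2M\lambda_2$, and $\tilde w(u)=M|u|-M^2\lambda_2-\lambda_0$ for $|u|\ge 2M\lambda_2$. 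By construction \eqref{eq:case1dualsol}, $\hat\gamma$ maximizes $h_1(\hat\alpha,\cdot)$, so $h_1(\hat\alpha,\hat\gamma)=\phi(\hat\alpha)$; and by the optimality characterization in Theorem \ref{theorem:duals}, $\gamma^{*}$ maximizes $h_1(\alpha^{*},\cdot)$, so $h_1(\alpha^{*},\gamma^{*})=\phi(\alpha^{*})$. Hence it suffices to lower bound $\phi(\hat\alpha)-\phi(\alpha^{*})$. The case $\sqrt{\lambda_0/\lambda_2}>M$ is identical with $\phi_2(\rho):=g(\rho)-\sum_i \tilde q(\rho^\top X_i)$, where $\tilde q(u):=M[\,|u|-\lambda_0/M-\lambda_2 M\,]_+$ is the (piecewise \emph{linear}) function obtained by solving for $\hat\mu$ in \eqref{eq:mu_hat}.

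Next I would decompose $\phi(\hat\alpha)-\phi(\alpha^{*})=[g(\hat\alpha)-g(\alpha^{*})]-\sum_{i}[\tilde w(\hat\alpha^\top X_i)-\tilde w(\alpha^{*\top}X_i)]$. Writing $\delta:=\hat\alpha-\alpha^{*}=X(\hat\beta-\beta^{*})$, note $\|\delta\|_2=\epsilon$ and, since columns have unit norm, $|\delta^\top X_i|\le\epsilon$ for every $i$. For the smooth part, because $g$ is concave quadratic with $\nabla^2 g=-I$ and $\nabla g(\alpha^{*})=-\alpha^{*}-y=-X\beta^{*}$, the expansion is exact:
\begin{align*}
g(\hat\alpha)-g(\alpha^{*})=-(X\beta^{*})^\top\delta-\tfrac12\|\delta\|_2^2\ \ge\ -\|X\beta^{*}\|_2\,\epsilon-\tfrac12\epsilon^2=-\mathcal{O}(\epsilon)-\mathcal{O}(\epsilon^2),
\end{align*}
where $\|X\beta^{*}\|_2$ is a data-dependent constant independent of $\epsilon$. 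The same computation in the regime $\sqrt{\lambda_0/\lambda_2}>M$ gives $g(\hat\rho)-g(\rho^{*})\ge-\mathcal{O}(\epsilon)-\tfrac12\epsilon^2$, contributing the single (non-$k$) term $\mathcal{O}(\epsilon^2)$ in \eqref{eq:dualbd2}.

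The crux is the separable penalty sum, and this is where the factor $k$ (rather than $p$ or $\|\beta^{*}\|_0$) enters. The key observation is that since $\hat\beta$ is the output of Algorithm 2, the set $\mathcal{V}$ is empty, so Proposition \ref{prop:V_explicit} gives $|\hat\alpha^\top X_i|=|\langle\hat r,X_i\rangle|\le 2\sqrt{\lambda_0\lambda_2}$ for all $i\in\mathrm{Supp}(\hat\beta)^c$; equivalently $\hat\alpha^\top X_i$ lies in the flat region where $\tilde w$ vanishes. Thus for $i\notin\mathrm{Supp}(\hat\beta)$ we have $\tilde w(\hat\alpha^\top X_i)=0$, so $-[\tilde w(\hat\alpha^\top X_i)-\tilde w(\alpha^{*\top}X_i)]=\tilde w(\alpha^{*\top}X_i)\ge0$; these terms have the favorable sign and are dropped in a lower bound. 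Only the $k=\|\hat\beta\|_0$ coordinates in $\mathrm{Supp}(\hat\beta)$ remain, and for each, $|\hat\alpha^\top X_i-\alpha^{*\top}X_i|\le\epsilon$ together with the convexity and $\tfrac{1}{2\lambda_2}$-Lipschitz gradient of $\tilde w$ yields $|\tilde w(\hat\alpha^\top X_i)-\tilde w(\alpha^{*\top}X_i)|\le|\tilde w'(\alpha^{*\top}X_i)|\,\epsilon+\tfrac{1}{4\lambda_2}\epsilon^2=\mathcal{O}(\epsilon)+\mathcal{O}(\epsilon^2)$ (the subgradient magnitude is bounded by a constant since $\hat\alpha,\alpha^{*}$ are bounded). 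Summing over the $k$ support coordinates gives $k\mathcal{O}(\epsilon)+k\mathcal{O}(\epsilon^2)$, which combined with the smooth-part bound proves \eqref{eq:dualbd1}. In the regime $\sqrt{\lambda_0/\lambda_2}>M$, the analogous step uses $|\hat\rho^\top X_i|\le \lambda_0/M+\lambda_2 M$ off the support (again from $\mathcal{V}=\emptyset$) to kill those terms, and the $M$-Lipschitz \emph{piecewise-linear} $\tilde q$ contributes only $M\,|\delta^\top X_i|\le M\epsilon=\mathcal{O}(\epsilon)$ per support coordinate with \emph{no} $\epsilon^2$ term, yielding $k\mathcal{O}(\epsilon)$ and hence \eqref{eq:dualbd2}. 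I expect the main obstacle to be exactly this sign/vanishing argument: one must correctly exploit $\mathcal{V}=\emptyset$ to confine the off-support inner products to the flat piece of $\tilde w$ (resp.\ $\tilde q$) so that those $p-k$ terms are nonnegative and discardable—this is precisely what downgrades a naive $\mathcal{O}(p\epsilon)$ estimate to the desired $\mathcal{O}(k\epsilon)$ and is the reason working through the partial dual $\phi$ (rather than plugging in $\gamma^{*}$, whose support is governed by $\|\beta^{*}\|_0$) is essential.
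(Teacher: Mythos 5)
Your overall strategy is sound and, once unpacked, is essentially the paper's own argument in different packaging: the paper's Lemma \ref{lemma:v} proves exactly your two key facts (off-support terms vanish because Algorithm 2 leaves $\mathcal{V}=\emptyset$, via Proposition \ref{prop:V_explicit}; on-support terms admit a per-coordinate $\mathcal{O}(\epsilon)+\mathcal{O}(\epsilon^2)$ perturbation bound), and the paper's treatment of the smooth part is the same Cauchy--Schwarz expansion you give. The only structural difference is that you work with the partially minimized dual $\phi(\alpha)=\max_\gamma h_1(\alpha,\gamma)$ and the explicit scalar function $\tilde w$, whereas the paper compares $v(\hat\alpha,\hat\gamma_i)\le v(\hat\alpha,\gamma^{*}_i)$ and expands the quadratic inside $[\cdot]_+$ around $\alpha^{*}$. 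Your identification of why the factor is $k$ and not $p$ (or $\|\beta^{*}\|_0$) is correct, and your case-2 argument with the $M$-Lipschitz piecewise-linear $\tilde q$ is correct as written.

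There is, however, one step that fails as stated: $\tilde w$ does \emph{not} have a $\tfrac{1}{2\lambda_2}$-Lipschitz gradient. Your own formula shows $\tilde w(u)=[\,u^2/(4\lambda_2)-\lambda_0\,]_+$ for $|u|\le 2M\lambda_2$, which has a kink at $|u|=2\sqrt{\lambda_0\lambda_2}$ where the derivative jumps from $0$ to $\sqrt{\lambda_0/\lambda_2}$. Consequently the claimed per-coordinate bound $|\tilde w(\hat\alpha^\top X_i)-\tilde w(\alpha^{*\top}X_i)|\le|\tilde w'(\alpha^{*\top}X_i)|\,\epsilon+\tfrac{1}{4\lambda_2}\epsilon^2$ is false near the kink: take $\lambda_0=\lambda_2=1$, $M=2$, $\alpha^{*\top}X_i=2-\epsilon/2$, $\hat\alpha^\top X_i=2+\epsilon/2$; then the left side is $\tfrac12\sqrt{\lambda_0/\lambda_2}\,\epsilon+\mathcal{O}(\epsilon^2)$, which is first order in $\epsilon$ while your bound is $\mathcal{O}(\epsilon^2)$. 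The repair is immediate and does not change your architecture: $\tilde w$ is convex with every subgradient bounded in magnitude by $M$ (slopes $0$, then $|u|/(2\lambda_2)\le M$, then $M$), hence globally $M$-Lipschitz, so each support coordinate contributes at most $M\epsilon=\mathcal{O}(\epsilon)$. Summing gives $kM\epsilon$, which together with your smooth-part bound yields $h_1(\hat\alpha,\hat\gamma)\ge h_1(\alpha^{*},\gamma^{*})-k\mathcal{O}(\epsilon)-\mathcal{O}(\epsilon^2)$ --- in fact slightly stronger than \eqref{eq:dualbd1}, which allows a $k\mathcal{O}(\epsilon^2)$ term. (The paper sidesteps this pitfall because its intermediate comparison point $v(\hat\alpha,\gamma^{*}_i)$ lets it bound $[\,(a+b)^2/(4\lambda_2)-\lambda_0\,]_+\le (a^2+2|a||b|)/(4\lambda_2)+[\,b^2/(4\lambda_2)-\lambda_0\,]_+$, an inequality that is valid across the kink.) Also, to match the paper's claim that the hidden constants involve only $M$ and $\lambda_2$, note $\|X\beta^{*}\|_2\le\|y\|_2+\|r^{*}\|_2\le 2$, using $\tfrac12\|r^{*}\|_2^2\le F(\beta^{*})\le F(0)=\tfrac12$.
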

	Interestingly, the bounds established in Theorem \ref{theorem:dualitygap} do not depend on $p$, but rather on the support size $k$. Specifically, the constants in $\mathcal{O}(\epsilon)$ and $\mathcal{O}(\epsilon^2)$ only involve $M$ and $\lambda_2$ (these constants are made explicit in the proof). In practice, we seek highly sparse  solutions, i.e., $k \ll p$---suggesting that the quality of the dual bounds deteriorates with $k$ and not $p$. The main driver behind these tight bounds is Algorithm 2, which performs optimality checks on the coordinates outside the support. If vanilla CD was used instead of Algorithm 2, then the term $k$ appearing in the bounds~\eqref{eq:dualbd1} and \eqref{eq:dualbd2} will be replaced by $p$, making the bounds loose\footnote{This holds by using the argument in the proof of Theorem \ref{theorem:dualitygap} for Algorithm 1 instead of Algorithm 2.}.
	
	\smallskip
	
	\noindent\textbf{Efficient Computation of the Dual Bounds: }
	A direct computation of the dual bound $h_1(\hat{\alpha}, \hat{\gamma})$ or $h_2(\hat{\rho}, \hat{\mu})$ costs $\mathcal{O}(np)$ operations. 
	Interestingly, we show that this cost can be reduced to $\mathcal{O}(n + n \| \hat{\beta} \|_0)$ (where we recall that $\hat{\beta}$ is a solution from Algorithm 2). First, we consider the case of  $\sqrt{{\lambda_0}/{\lambda_2}} \leq  M $, where the goal is to compute $h_1(\hat{\alpha}, \hat{\gamma})$. By Lemma \ref{lemma:v} (in the appendix), we have $v(\hat{\alpha}, \hat{\gamma}_i) = 0$ for every $i \in \text{Supp}(\hat{\beta})^c$. Thus, $h_1(\hat{\alpha}, \hat{\gamma})$ can be simplified to: 
	\begin{align} \label{eq:h1_simplified}
	h_1(\hat{\alpha}, \hat{\gamma}) = - \frac{1}{2} \|\hat{\alpha} \|_2^2 - \hat{\alpha}^T y - \sum_{i \in \text{Supp}(\hat{\beta})} v(\hat{\alpha}, \hat{\gamma}_i).
	\end{align}
	Now, we consider the case of $\sqrt{{\lambda_0}/{\lambda_2}} >  M $, where the goal is to evaluate $h_2(\hat{\rho}, \hat{\mu}) $. By construction, the solution \eqref{eq:case2dualsol} is dual feasible, which means that the constraints in \eqref{eq:objdualcase2} need not be checked when computing the bound. Moreover, $\hat{\mu}_i = 0$ for every $i \in \text{Supp}(\hat{\beta})^c$ (see the discussion after \eqref{eq:mu_hat}). Thus, the dual bound can be expressed as follows: 
	\begin{align} \label{eq:h2_simplified}
	h_2(\hat{\rho}, \hat{\mu}) = - \frac{1}{2} \| \hat{\rho} \|_2^2 - \hat{\rho}^T y - M \sum_{i \in \text{Supp}(\hat{\beta})} |\hat{\mu}_i|.
	\end{align}
	The expressions in \eqref{eq:h1_simplified} and \eqref{eq:h2_simplified} can be computed in $\mathcal{O}(n + n \| \hat{\beta} \|_0)$ operations.
	
	\subsection{Branching and Incumbents} \label{sec:branching}
	Many branching strategies for BnB have been explored in the literature, e.g., random branching, strong branching, and pseudo-cost branching \cite{belotti2013mixed,bonami2011more,applegate1998solution,achterberg2005branching}. Among these strategies, strong branching has proven to be very effective in minimizing the size of the search tree \cite{applegate1998solution,bonami2011more,belotti2013mixed}. Strong branching selects the variable which leads to the maximum increase in the lower bounds of the children nodes. To select such a variable, two temporary node subproblems should be solved for every non-integral variable in the current relaxation. This can become a computational bottleneck, as each temporary subproblem involves solving a nonlinear optimization problem similar to \eqref{eq:relaxation}. To address this challenge, we use a fast (approximate) version of strong branching, in which we restrict the optimization in these temporary subproblems to the active set of the current node (instead of optimizing over all $p$ variables). In practice, this often leads to very similar search trees compared to exact strong branching, since the active set of the parent is typically close to the support of the children.

	We obtain the initial upper bound using \texttt{L0Learn} \cite{fastsubset}, which uses CD and efficient local search algorithms  to obtain good quality feasible  solutions for Problem \eqref{eq:conicbigM}. Moreover, at every node in the BnB tree, we attempt to improve the incumbent by making use of the support of a solution to the node relaxation. Specifically, we solve the following $\ell_2$ regularized least squares problem restricted to the relaxation's support $S$:
	$
	\min_{\beta_{{S}} \in \mathbb{R}^{|{S}|}} \frac{1}{2} \| y - X_{{S}} \beta_{{S}} \|^{2}_2 + \lambda_2 \|\beta_{{S}}\|^{2}_2.
	$
	Since ${S}$ is typically small, this problem can be solved efficiently by inverting a small $|S|\times|S|$ matrix. If $S$ is similar to the support of the parent node, then a solution for the current node can be computed via a low-rank update~\cite{hammarling2008updating}.

	\section{Experiments} \label{sec:experiments}
	We perform a series of high-dimensional experiments to study the run time of our BnB, understand its  sensitivity to parameter and algorithm choices, and compare to state-of-the-art approaches. While our dataset and parameter choices are motivated by statistical considerations, our focus here is not to study the statistical properties of $\ell_0$ estimators. We refer the reader to \cite{fastsubset} for empirical studies on statistical properties. 
	
	\subsection{Experimental Setup} \label{section:setup}
	
	\textbf{Synthetic Data Generation: } We generate a multivariate Gaussian data matrix with samples drawn from $\text{MVN}(0, \Sigma_{p \times p})$, a sparse coefficient vector $\beta^{\dagger} \in \mathbb{R}^p$ with $k^{\dagger}$ equi-spaced nonzero entries all set to $1$, and a noise vector $\epsilon_{i} \stackrel{\text{iid}}{\sim}  N(0, \sigma^2)$. We denote the support of the true regression coefficients $\beta^{\dagger}$ by $S^{\dagger}$. The response is then obtained from the linear model $y = X \beta^{\dagger} + \epsilon$. We define the \textsl{signal-to-noise ratio (SNR)} as follows: SNR$=\text{Var}(X \beta^{\dagger})/\sigma^2$. Unless otherwise specified, we set $\sigma^2$ to achieve SNR$=5$---this is a relatively difficult setting which still allows for full support recovery, under suitable choices of $n$, $p$, and $\Sigma$ (see \cite{fastsubset,mazumder2020discussion} for a discussion on appropriate levels of SNR). {We perform mean-centering followed by normalization (to have unit $\ell_2$ norm) on $y$ and each column of $X$. To help in exposition, we denote the resulting processed dataset by $(\tilde{y}, \tilde{X})$ and the (scaled) regression coefficients by $\tilde{\beta}^\dagger$.}
	
	\smallskip

	\noindent \textbf{Warm Starts, $\lambda_0$, $\lambda_2$, and $M$: } {The parameters $\lambda_2$ and $M$ can affect the run time significantly, so we study the sensitivity to these choices in our experiments. We consider choices that are relevant from  a statistical perspective, as we discuss next. For a fixed $\lambda_2$, let $\beta(\lambda_2)$ be the solution of ridge regression restricted to the support of the true solution $\tilde{\beta}^{\dagger}$:
		\begin{align}
		\beta(\lambda_2)  \in \argmin_{\beta \in \mathbb{R}^p} \frac{1}{2} \| \tilde{y} - \tilde{X} \beta \|^{2}_2  + \lambda_2 \|\beta\|^{2}_2 ~~~~ \text{s.t.} ~~~~ \beta_{(S^{\dagger})^c} = 0,
		\end{align}
		We define $\lambda_2^{*}$ as the $\lambda_2$ which minimizes the $\ell_2$ estimation error of $\beta(\lambda_2)$, i.e., 
		\begin{align} \label{eq:estimation_error}
		\lambda_2^{*} \in \argmin_{\lambda_2 \geq 0} \| \tilde{\beta}^{\dagger} - \beta(\lambda_2) \|_2.
		\end{align}
		We estimate $\lambda_2^{*}$ using grid search, with $50$ points equi-spaced on a logarithmic scale in the range $[10^{-4}, 10^{4}]$.
		In the experiments, we report our $\lambda_2$ choices as a fraction or multiple of $\lambda_2^{*}$ (e.g., $\lambda_2 = 0.1 \lambda_2^{*}$). Moreover, for each $\lambda_2$, we define  $M^{*}(\lambda_2) = \| \beta(\lambda_2) \|_{\infty}$ and report our choices of the Big-M in terms of $M^{*}(\lambda_2)$---we also use the notation $M^{*}$ to refer to $M^{*}(\lambda_2)$ when $\lambda_2$ is clear from context. Note that if Problem \eqref{eq:main} has a unique solution, and this solution has support $S^{\dagger}$, then $M^{*}(\lambda_2)$ is the smallest valid choice of $M$ in formulation \eqref{eq:conicbigM}. Unless otherwise specified, for each $\lambda_2$ considered, we fix $\lambda_0$ to a value $\lambda_0^{*}$ (which is a function of $\lambda_2$) that leads to the true support size  $k^{\dagger}$. We obtain $\lambda_0^{*}$ using \texttt{L0Learn}\footnote{\texttt{L0Learn} computes a data-dependent grid of $\lambda_0$ values. When CD is used to optimize \eqref{eq:main}, each $\lambda_0$ in the grid leads to a different support size. See Section 4.1 of  \cite{fastsubset} for more details.} \cite{fastsubset}. Note that $\lambda_0^{*}$ might not exist in general, but in all the experiments we considered, \texttt{L0Learn} was able to such a value. Moreover, Section \ref{sec:vary_grid}, presents an experiment where $\lambda_0$ is varied over a range that is independent of \texttt{L0Learn}. \\
		We obtain warm starts from \texttt{L0Learn}\footnote{{We use the default CD-based algorithm in \texttt{L0Learn}, which does not depend on any Big-M parameter. We remind the reader that \texttt{L0Learn} is a local optimization method that does not provide certificates of global optimality.}}  and use them for all the MIP solvers considered.}
	
	\smallskip
	
	\noindent \textbf{Solvers and Settings: } Our solver, L0BnB\footnote{\url{https://github.com/alisaab/l0bnb}}, is written in Python with critical code sections optimized using Numba \cite{lam2015numba}. We compare L0BnB with Gurobi (GRB), MOSEK (MSK), and BARON (B) on formulation \eqref{eq:conicbigM}. We also compare with \cite{bertsimas2017sparse} who solve the cardinality-constrained variant of \eqref{eq:main} with the number of nonzeros set to $k^{\dagger}$. 
	{In all solvers (including L0BnB), we use the following settings:
		\begin{itemize}
			\item Relative Optimality Gap: Given an upper bound UB and a lower bound LB, the relative optimality gap is defined as $(\text{UB} - \text{LB})/\text{UB}$. We set this to $1\%$.
			\item Integer Feasibility Tolerance ($\epsilon_{\text{if}}$): This tolerance is used to determine whether a variable obtained from a node relaxation will be declared as integral (for the purpose of branching). Specifically, in our context, if a variable $z_i$ is within $\epsilon_{\text{if}}$ from 0 or 1, then it is declared as integral. We set $\epsilon_{\text{if}} = 10^{-4}$.
			\item Primal-Dual Optimality Gap ($\epsilon_{\text{pd}}$): This is the relative gap between the primal and dual bounds at a given node, which is used as termination criterion for the subproblem solver. We set $\epsilon_{\text{pd}} = 10^{-5}$. In L0BnB, this gap is satisfied when an integral solution to a node's relaxation is encountered.
		\end{itemize}
	}
	
	\noindent {\textsl{Gradient Screening in L0BnB: } As discussed in Section \ref{sec:gradient_screening}, our gradient screening implementation leads to noticeable speed-ups for large $p$ ($\geq 10^4$ in our experience). For smaller values of $p$, we do not observe run time improvements.
		In the experiment of Section \ref{sec:timing_comparison} (Table \ref{table:gradinet_screening}), where we vary $p$, we report the running time with and without gradient screening. In the other experiments, we do not use gradient screening.}
	
	\smallskip
	
	\noindent {\textbf{Reproducibility and Additional Details}: In the spirit of reproducibility, the function used to generate and process the synthetic datasets is publicly available on L0BnB's Github page. In all experiments, we report the values of $M$, $\lambda_0$, and $\lambda_2$ (either in the main text or in Appendix \ref{appendix:additional_results}). Moreover, for all approaches, we report the number of BnB nodes explored in Appendix \ref{appendix:additional_results}.}
	

	\subsection{Comparison with State-of-the-art solvers}
	
	\subsubsection{Varying Number of Features}
	\label{sec:timing_comparison}
	In this section, we study the scalability of the different solvers in terms of the number of features $p$. We generate synthetic datasets\footnote{To ensure that the same estimates of  $\lambda_2^{*}$ and $M^{*}$ are used for the different choices of $p$, we generate a single data matrix with $10^{6}$ features. For each $p$, we take a submatrix that consists of all the rows and a subset of $p$ columns (which includes the true support).} with $n=10^{3}$, $p \in \{ 10^3,10^4,10^5,10^6\}$, and $k^{\dagger} =10$. We consider a constant correlation setting, where $\Sigma_{ij} = 0.1 \ \forall i \neq j$ and $1$ otherwise. The parameters  $\lambda_2$ and $M$ can have a significant effect on the run time. Thus, we report the timings for different choices of these parameters. In particular, in Table  \ref{table:fixed_M} (top panel), we report the timings for $\lambda_2 \in \{ 0.1\lambda_2^{*}, \lambda_2^{*}, 10\lambda_2^{*} \}$, where for each $\lambda_2$, we fix $M = 1.5 M^{*}(\lambda_2)$ (where $\lambda_2^{*}$ and $M^{*}$ are defined in Section \ref{section:setup}). In Table \ref{table:fixed_M} (bottom panel), we fix $\lambda_2 = \lambda_2^{*}$ and report the timings for $M \in \{ M^{*}(\lambda_2), 2M^{*}(\lambda_2), 4M^{*}(\lambda_2), \infty \}$. Note that the results for L0BnB in Table \ref{table:fixed_M} are without gradient screening; in Table \ref{table:gradinet_screening} in Appendix \ref{appendix:experiment_vary_p_m}, we report the timings with gradient screening enabled. {Moreover, in Appendix \ref{appendix:experiment_vary_p_m}, we report the values of $\lambda_0^{*}$ and $\lambda_2^{*}$, along with the number of BnB nodes explored.}
	\begin{table}[h]
		\caption{{\small (Sensitivity to $\lambda_{2}$ and $M$) Running time in seconds for solving \eqref{eq:main} (via formulation~\eqref{eq:conicbigM}) by our proposal (L0BnB), Gurobi (GRB), MOSEK (MSK), and BARON (B). The method~\cite{bertsimas2017sparse} solves the cardinality-constrained variant of~\eqref{eq:main}. For methods that do not terminate in 4 hours: the optimality gap is shown in parenthesis, and a dash (-) is used in the special case of a 100\% gap. {[Top panel] For each $\lambda_2$, we use $M = 1.5 M^{*}(\lambda_2)$. [Bottom panel] We use $\lambda_2 = \lambda_2^{*}$, and four different values of $M$. 
					The Big-M values are based on: $M^{*}(\lambda_2^{*}) = 0.232$, $M^{*}(0.1 \lambda_2^{*}) = 0.164$, and $M^{*}(10 \lambda_2^{*}) = 0.243$.
					The true solution satisfies: $\|\tilde{\beta}^{\dagger}\|_{\infty} = 0.208$. In the bottom panel, we found that PR($M^{*}$) and  PR($\infty$) have the same optimal solution.} }}
		\label{table:fixed_M}
		\centering
		\renewcommand{\tabcolsep}{1.4pt}
		\renewcommand{\arraystretch}{1.1}
		\begin{tabular}{l|ccccc|ccccc|ccccc|}
			\cline{2-16}
			& \multicolumn{5}{c|}{$\lambda_2 = \lambda_2^{*}$}  & \multicolumn{5}{c|}{$\lambda_2 = 0.1 \lambda_2^{*}$} & \multicolumn{5}{c|}{$\lambda_2 = 10 \lambda_2^{*}$}    \\ \hline
			\multicolumn{1}{|l|}{p}      & L0BnB         & GRB     & MSK  & B    & \cite{bertsimas2017sparse}      & L0BnB              & GRB       & MSK    & B  & \cite{bertsimas2017sparse}  & L0BnB         & GRB     & MSK & B    & \cite{bertsimas2017sparse}            \\ \hline
			\multicolumn{1}{|l|}{$10^3$} & \textbf{0.7}  & 70      & 92   & (4\%) & (34\%) & \textbf{2}         & 57        & 154    & -    & -   & \textbf{0.01}          & 148     & 28  & (5\%) & {0.08} \\
			\multicolumn{1}{|l|}{$10^4$} & \textbf{3}    & (15\%) & 1697 & -      & (78\%) & \textbf{12}        & (12\%)   & 3872   & -    & -   & \textbf{0.06} & (11\%) & 314 & -      & 5             \\
			\multicolumn{1}{|l|}{$10^5$} & \textbf{34}   & -       & -    & -      & (86\%) & \textbf{545}      & -         & -      & -    & -   & \textbf{0.5}  & -       & -   & -      & 8             \\
			\multicolumn{1}{|l|}{$10^6$} & \textbf{1112} & -       & -    & -      & -       & \textbf{(23\%)}   & -         & -      & -    & -   & \textbf{8}    & -       & -   & -      & 46            \\ \hline
		\end{tabular}
		\newline
		\vspace*{0.2cm}
		\newline
		\renewcommand{\tabcolsep}{1.2pt}
		\renewcommand{\arraystretch}{1.1}
		\begin{tabular}{l|cccc|cccc|cccc|cccc|}
			\cline{2-17}
			& \multicolumn{4}{c|}{$M = M^{*}$}    & \multicolumn{4}{c|}{$M =2 M^{*}$}     & \multicolumn{4}{c|}{$M = 4 M^{*}$}    & \multicolumn{4}{c|}{$M = \infty$}  \\ \hline
			\multicolumn{1}{|l|}{p}      & L0BnB        & GRB     & MSK   & B      & L0BnB            & GRB       & MSK    & B & L0BnB            & GRB       & MSK    & B & L0BnB            & GRB    & MSK    & B \\ \hline
			\multicolumn{1}{|l|}{$10^3$} & \textbf{0.2} & 27    & 57  & (2\%) & \textbf{0.8}  & 112    & 128  & - & \textbf{0.8}   & 1219   & 137  & - & \textbf{0.8}    & 3974 & 91   & - \\
			\multicolumn{1}{|l|}{$10^4$} & \textbf{0.9} & 10571 & 665 & -     & \textbf{5}    & (24\%) & 3260 & - & \textbf{5}     & (50\%) & 3289 & - & \textbf{6}      & -    & 9025 & - \\
			\multicolumn{1}{|l|}{$10^5$} & \textbf{8}   & -     & -   & -     & \textbf{70}   & -      & -    & - & \textbf{81}    & -      & -    & - & \textbf{81}     & -    & -    & - \\
			\multicolumn{1}{|l|}{$10^6$} & \textbf{121} & -     & -   & -     & \textbf{9265} & -      & -    & - & \textbf{10986} & -      & -    & - & \textbf{11010}  & -    & -    & - \\ \hline
		\end{tabular}
	\end{table}
	\raggedbottom

	The results in the top panel of Table \ref{table:fixed_M} indicate significant speed-ups, reaching over $200,000$x compared to Gurobi, $28,000$x compared to MOSEK, and $20,000$x compared to \cite{bertsimas2017sparse}. At $\lambda_2 = \lambda_2^{*}$, L0BnB is the only solver that can handle $p \geq 10^5$, and can, in fact, handle $p=10^6$ in less than 20 minutes. Recall that $\lambda_2^{*}$ minimizes the $\ell_2$ estimation error and leads to an estimator that is the closest to the ground truth. 
	When the ridge parameter is weak i.e., for $\lambda_2 = 0.1 \lambda_2^{*}$, L0BnB is again the only solver that can handle $p \geq 10^5$. When the ridge parameter is large, i.e., for $\lambda_2 = 10 \lambda_2^{*}$, the optimization problem seems to become easier: L0BnB can be more than  $100$x faster compared to $\lambda_2^{*}$, and  \cite{bertsimas2017sparse} can handle up to $p \approx 10^6$.  The speed-ups for $\lambda_2 = 10 \lambda_2^{*}$ can be attributed to the fact that a larger $\lambda_2$ adds a large amount of regularization to the objective (via the perspective term)---improving the performance of the relaxation solvers\footnote{Gurobi is the only exception to this observation. We investigated this: Gurobi generates additional cuts only for the case of $10\lambda_2^{*}$, which seem to slow down the relaxation solver.}.  It is also worth emphasizing that L0BnB is prototyped in Python, as opposed to the highly efficient BnB routines available in commercial solvers such as Gurobi and MOSEK. 
	
	Ideally, we desire a solver that can solve Problem~\eqref{eq:main} over a range of $\lambda_2$ values, which includes values in the neighborhood of $\lambda^{*}_2$. However, the results in Table \ref{table:fixed_M} suggest that the state-of-the-art methods (except L0BnB) seem to only work for quite large values of $\lambda_2$ (which, in this case, do not correspond to solutions that are interesting from a statistical viewpoint).
	On the other hand, L0BnB seems to be the only method that can scale to $p \sim 10^6$ while being relatively robust to the choice of $\lambda_2$. 
	
	In the bottom panel of Table \ref{table:fixed_M}, the results also indicate that L0BnB significantly outperforms Gurobi, MOSEK, and BARON for different choices of $M$. For all the solvers, the run time increases with $M$, and the longest run times are for $M=\infty$. This empirically validates our result in Proposition \ref{prop:v1v2}, where we show that for a sufficiently small (but valid) value of $M$, PR($M$) can be better than PR$(\infty)$, in terms of the relaxation quality. However, even with $M=\infty$, L0BnB can solve $p=10^6$ in around 3 hours, whereas all other solvers have a 100\% gap after 4 hours. {We also note that Table~\ref{table:fixed_M} considers both the two settings in Theorem~\ref{theorem:relaxation}: $\sqrt{\lambda_0/\lambda_2} > M$ and $\sqrt{\lambda_0/\lambda_2} \leq M$. Specifically, we have $\sqrt{\lambda_0^{*}/\lambda_2^{*}} > M$ for $M \in \{ M^{*}(\lambda_2^{*}), 1.5 M^{*}(\lambda_2^{*}), 2 M^{*}(\lambda_2^{*}) \}$, and $\sqrt{\lambda_0^{*}/\lambda_2^{*}} \leq M$ for $M \in \{ 4 M^{*}(\lambda_2^{*}), \infty  \}$. L0BnB achieves notable improvements over other solvers for both of these settings.}

	{\subsubsection{Varying Signal-to-Noise Ratio (SNR)} \label{sec:vary_snr}
		Here we investigate the effect of SNR on the running time of the different solvers. To this end, we generate synthetic datasets with $n=10^{3}$, $p=10^3$, $k^{\dagger} =10$, under a constant correlation setting, where $\Sigma_{ij} = 0.1 \ \forall i \neq j$ and $1$ otherwise. We vary SNR in $\{ 0.5, 1, 2, 3, 4, 5 \}$. We set $\lambda_0 = \lambda_0^{*}$,   $\lambda_2 = \lambda_2^{*}$ (where $\lambda_0^{*}$ and $\lambda_2^{*}$ depend on SNR), and $M = 1.5 M^{*} (\lambda_2)$. We report the running time versus SNR in Table \ref{table:vary_snr}. The corresponding values of $\lambda_0^{*}$ and $\lambda_2^{*}$, and the number of BnB nodes are reported in Appendix \ref{appendix:vary_snr}. The results indicate that L0BnB is much faster, and less sensitive to changes in SNR, compared to the other solvers. For example, L0BnB can handle SNR=0.5 in less than 4 minutes, whereas the fastest competing solver (MOSEK) takes around 46 minutes.}
	
	
	\begin{table}[h]
		\caption{{\small Running time (seconds) for solving~\eqref{eq:conicbigM} on a synthetic dataset with $n=p=10^3$, at different SNR levels. The parameter $\lambda_2$ is set to the optimal choice $\lambda_2^{*}$, which depends on the current SNR level. For methods that do not terminate in 2 hours: the optimality gap is shown in parenthesis, and a dash (-) is used in the special case of a 100\% gap.}}
		\label{table:vary_snr}
		\centering
		\renewcommand{\tabcolsep}{1.8pt}
		\renewcommand{\arraystretch}{1.3}
		{\begin{tabular}{|c|c|cccc|}
				\hline
				SNR & M    & L0BnB        & GRB         & MSK & B \\ \hline
				0.5 & 0.269   & \textbf{231} & {(2\%)} & 2757  & -     \\
				1   & 0.307  & \textbf{1.7} & 1213           & 1046  & -     \\
				2   & 0.333  & \textbf{1.4} & 119            & 288   & -     \\
				3   & 0.346    & \textbf{1.3} & 102            & 173   & -     \\
				4   & 0.347   & \textbf{1.0} & 77             & 113   & -     \\
				5   & 0.348    & \textbf{0.8} & 77             & 92    & -     \\ \hline
		\end{tabular}}
	\end{table}

	\subsubsection{Varying $\lambda_0$ and $\lambda_2$} \label{sec:vary_grid}
	{In the experiment of Section \ref{sec:timing_comparison}, we studied the running time for  $\lambda_0 = \lambda_0^{*}$ so that the model recovers the true support size. In this experiment, we study the running time over a grid of $\lambda_0$ and $\lambda_2$ values, which includes various support sizes. We consider  synthetic instances with $p \in \{ 10^3, 10^4 \}$, $n=10^3$, $k^{\dagger} = 10$, under a constant correlation setting, where $\Sigma_{ij} = 0.1 \ \forall i \neq j$ and $1$ otherwise.} {We vary $\lambda_2 \in \{ 0.1, 0.5, 1, 2, 10\} \cdot \lambda_2^{*}$ and $\lambda_0 \in \{ 0.5 , 0.1 , 0.01 \} \cdot \lambda_0^{\text{m}}$, where $\lambda_0^{\text{m}}$ is a value of $\lambda_0$ which sets all coefficients to zero. Following \cite{fastsubset}\footnote{\cite{fastsubset} shows that $\lambda_0^{\text{m}}$ is the smallest choice of $\lambda_0$ for which $\beta = 0$ is a coordinate-wise minimizer for Problem \eqref{eq:main}. In this experiment, we verified numerically that $\beta = 0$ is a global minimizer at $\lambda_0^{\text{m}}$.}, we use $\lambda_0^{\text{m}} = {(2 + 4 \lambda_2)}^{-1} \| X^T y \|_{\infty}^2$. 
		Next, we describe, how given a pair $(\lambda_0, \lambda_2)$ in the grid, we estimate the corresponding  Big-M value $M(\lambda_0, \lambda_2)$. Let $\beta(\lambda_0, \lambda_2)$ be the solution of Problem \eqref{eq:main} restricted to the true support:
		$$
		\beta(\lambda_0, \lambda_2) \in \argmin_{\beta} \frac{1}{2} \| y - X \beta \|^{2}_2  + \lambda_0 \| \beta \|_0+ \lambda_2 \|\beta\|^{2}_2 ~~ \text{s.t.} ~~ \beta_{(S^{\dagger})^c} = 0.
		$$
		We compute $\beta(\lambda_0, \lambda_2)$ exactly using formulation \eqref{eq:conicbigM} with $M = \infty$. We then compute an estimate of the  Big-M value: $\tilde{M}(\lambda_0, \lambda_2) := \|\beta(\lambda_0, \lambda_2)\|_{\infty}$. For every $(\lambda_0, \lambda_2)$ in the grid, we 
		solve Problem \eqref{eq:conicbigM} for $M(\lambda_0, \lambda_2) = 1.5 \tilde{M}(\lambda_0, \lambda_2)$. In Table \ref{table:vary_l0_l2}, we report the running time for $p=10^3$ (top panel) and $p=10^4$ (bottom panel). The values of $\lambda_2^{*}$ and $\lambda_0$, along with the number of BnB nodes explored, are reported in Appendix \ref{appendix:vary_l0_l2}.}
	
	{In Table \ref{table:vary_l0_l2}, for all values of $\lambda_2$ considered and $\lambda_0 \in \{ 0.5 , 0.1 \} \cdot  \lambda_0^{\text{m}}$, L0BnB is faster than the competing methods, with speed-ups exceeding 10,000x compared to both Gurobi and MOSEK. However, for $\lambda_0 = 0.01 \lambda_0^{\text{m}}$, none of the solvers are able to solve the problem in 1 hour, and the gaps seem to be comparable. We also note that for $p=10^4$, L0BnB is able to solve (to optimality) 13 out of the 20 instances in the 1 hour limit. In contrast, Gurobi could not solve any of the instances and MOSEK could only solve 6.}

	\begin{table}[htbp]
		\caption{{\small Running time for solving \eqref{eq:conicbigM} over a grid of $\lambda_0$ and $\lambda_2$ values. ``NNZ'' is the number of nonzeros in the solution to \eqref{eq:conicbigM} (or the best incumbent if all solvers cannot terminate in 1 hour). For methods that do not terminate in 1 hour: the optimality gap is shown in parenthesis, and a dash (-) is used in the special case of a 100\% gap. The true solution satisfies: $\|\tilde{\beta}^{\dagger}\|_{\infty} = 0.208$.}}
		\label{table:vary_l0_l2}
		\centering
		\qquad \qquad 
		{\begin{tabular}{|cccc|cccc|}
				\hline
				\multicolumn{8}{|c|}{${p=10^3}$} \\
				\hline
				$\lambda_2$                          & $\lambda_0$                 & NNZ & M     & L0BnB            & GRB              & MSK              & B \\ \hline
				\multirow{3}{*}{$10 \lambda_2^{*}$}  
				& $0.5 \lambda_0^{\text{m}}$  & 5   & 0.293 & \textbf{1}       & (3\%)           & 158              & - \\
				& $0.1 \lambda_0^{\text{m}}$  & 10  & 0.245 & \textbf{0.01}    & (5\%)           & 5                & - \\
				& $0.01 \lambda_0^{\text{m}}$ & 36  & 0.245 & \textbf{(12\%)} & (17\%)          & (16\%)          & - \\ \hline
				\multirow{3}{*}{$2 \lambda_2^{*}$}   
				& $0.5 \lambda_0^{\text{m}}$  & 4   & 0.491 & \textbf{3}       & 651              & 2737             & - \\
				& $0.1 \lambda_0^{\text{m}}$  & 10  & 0.332 & \textbf{0.1}     & 68               & 106              & - \\
				& $0.01 \lambda_0^{\text{m}}$ & 15  & 0.332 & \textbf{(3\%)}  & (5\%)           & (4\%)           & - \\ \hline
				\multirow{3}{*}{$ \lambda_2^{*}$}    
				& $0.5 \lambda_0^{\text{m}}$  & 3   & 0.524 & \textbf{20}      & 2648             & 1278             & - \\
				& $0.1 \lambda_0^{\text{m}}$  & 10  & 0.348 & \textbf{0.3}     & 109              & 65               & - \\
				& $0.01 \lambda_0^{\text{m}}$ & 10  & 0.348 &       {(10\%)} & (11\%)          & \textbf{(6\%)}  & - \\ \hline
				\multirow{3}{*}{$0.5 \lambda_2^{*}$} 
				& $0.5 \lambda_0^{\text{m}}$  & 3   & 0.542 & \textbf{59}      & 965              & (6\%)           & - \\
				& $0.1 \lambda_0^{\text{m}}$  & 10  & 0.356 & \textbf{1}       & 63               & 172              & - \\
				& $0.01 \lambda_0^{\text{m}}$ & 10  & 0.356 & {(19\%)} & (16\%)            & \textbf{(12\%)} & - \\ \hline
				\multirow{3}{*}{$0.1 \lambda_2^{*}$} 
				& $0.5 \lambda_0^{\text{m}}$  & 3   & 0.557 & \textbf{128}     & 697              & (8\%)           & - \\
				& $0.1 \lambda_0^{\text{m}}$  & 10  & 0.364 & \textbf{1}       & 58               & 303              & - \\
				& $0.01 \lambda_0^{\text{m}}$ & 10  & 0.364 &  {(47\%)} & \textbf{(32\%)} & (55\%)          & - \\ \hline
		\end{tabular}}
		\newline
		\vspace*{0.2cm}
		\newline
		{\begin{tabular}{|cccc|cccc|}
				\hline
				\multicolumn{8}{|c|}{${p=10^4}$} \\
				\hline
				$\lambda_2$                          & $\lambda_0$                 & NNZ & M     & L0BnB            & GRB        & MSK              & B \\ \hline
				\multirow{3}{*}{$10 \lambda_2^{*}$} 
				& $0.5 \lambda_0^{\text{m}}$  & 5   & 0.293 & \textbf{2}       & -          & 1617             & - \\
				& $0.1 \lambda_0^{\text{m}}$  & 10  & 0.245 & \textbf{0.1}       & -          & 62               & - \\
				& $0.01 \lambda_0^{\text{m}}$ & 45  & 0.245 & \textbf{(3\%)}  & -          & (7\%)           & - \\ \hline
				\multirow{3}{*}{$2 \lambda_2^{*}$}  
				& $0.5 \lambda_0^{\text{m}}$  & 8   & 0.491 & \textbf{263}     & -          & (13\%)          & - \\
				& $0.1 \lambda_0^{\text{m}}$  & 10  & 0.332 & \textbf{0.7}     & -          & 2459             & - \\
				& $0.01 \lambda_0^{\text{m}}$ & 17  & 0.332 & \textbf{(15\%)} & -          & \textbf{(15\%)}          & - \\ \hline
				\multirow{3}{*}{$ \lambda_2^{*}$}   
				& $0.5 \lambda_0^{\text{m}}$  & 3   & 0.524 & \textbf{1920}    & -          & (18\%)          & - \\
				& $0.1 \lambda_0^{\text{m}}$  & 10  & 0.348 & \textbf{2}       & -          & 3338             & - \\
				& $0.01 \lambda_0^{\text{m}}$ & 14  & 0.348 & {(29\%)} & -          & \textbf{(24\%)} & - \\ \hline
				\multirow{3}{*}{$0.5 \lambda_2^{*}$}
				& $0.5 \lambda_0^{\text{m}}$  & 3   & 0.542 & \textbf{(3\%)}  & (29\%)    & (26\%)          & - \\
				& $0.1 \lambda_0^{\text{m}}$  & 10  & 0.356 & \textbf{5}       & -          & 3483             & - \\
				& $0.01 \lambda_0^{\text{m}}$ & 14  & 0.356 & {(42\%)} & -          & \textbf{(36\%)} & - \\ \hline
				\multirow{3}{*}{$0.1 \lambda_2^{*}$}
				& $0.5 \lambda_0^{\text{m}}$  & 3   & 0.557 & \textbf{(8\%)}  & -          & (19\%)            & - \\
				& $0.1 \lambda_0^{\text{m}}$  & 10  & 0.364 & \textbf{24}      & -          & (46\%)          & - \\
				& $0.01 \lambda_0^{\text{m}}$ & 13  & 0.364 & \textbf{(73\%)} & - & (77\%)          & - \\ \hline
		\end{tabular}}
	\end{table}

	\subsection{Sensitivity to Data Parameters} \label{sec:sensitivity}
	We study how L0BnB's running time is affected by the following data specific parameters: number of samples ($n$), feature correlations ($\Sigma$), and number of nonzero coefficients ($k^{\dagger}$). In the experiments below, we fix $\lambda_2 = \lambda_2^{*}$ and $M = 1.5 M^{*}(\lambda_2^{*})$. {For all the experiments in this section, the values of $\lambda_0$, $\lambda_2$, and $M$ used are reported in Appendix \ref{appendix:L0BnB_exps}.}
	
	\smallskip

	\noindent \textbf{Number of Samples: } We fix $k^{\dagger} = 10$, $p = 10^4$, and consider a constant correlation setting $\Sigma_{ij} = 0.1$ for $i \neq j$ and $1$ otherwise\footnote{We choose $p=10^4$ to ensure that the matrix fits into memory when $n$ is large.}. The timings for $n \in \{10^2, 10^3, 10^4, 10^5\}$ are in Table \ref{table:sensitivity}. The results indicate that the problem can be solved in reasonable times (order of seconds to minutes) for $n \geq 10^3$. The problems can be solved the fastest when $n$ is close to $p$, and the extreme cases $n=10^2$ and $n=10^5$ are the slowest. We contend that for large $n$, the CD updates (which cost $\mathcal{O}(n)$ each) become a bottleneck. For $n=10^2$, the CD updates are cheaper, however, the size of the search tree is significantly larger---suggesting that a small value of $n$ can lead to loose relaxations and require more branching. Note also that the underlying statistical problem is the most difficult for $n=10^2$ compared to other larger values of $n$.
	
	\smallskip

	\noindent \textbf{Feature Correlations: } We generate synthetic datasets with $p \in \{10^4, 10^5 \}$, $n=10^3$, $k^{\dagger} = 10$, and $\Sigma = \text{Diag}(\Sigma^{(1)}, \Sigma^{(2)}, \dots, \Sigma^{(k^{\dagger})} )$ is block-diagonal. Such block structures are commonly used in the sparse learning literature~\cite{zhao2006model}. Each $\Sigma^{(l)}$ is a correlation matrix with the same dimension. Given a correlation parameter $\rho \in (0,1)$, for each $l \in [k^{\dagger}]$, we assume that $\Sigma^{(l)}_{ij} = \rho^{|i-j|}$ for any $i,j \in [p/k^{\dagger}]$, i.e., the correlation is exponentially decaying in each block. We report the timings for different values of the parameter $\rho$ in Table \ref{table:sensitivity}. The results indicate that higher correlations lead to an increase in the run time, but  even the high correlation instances can be solved in reasonable time. For example, when $p=10^4$, $\rho=0.9$ can be solved in less than 10 minutes. When $p=10^5$, $\rho=0.8$ can be solved in around 13 minutes, and $\rho=0.9$ can be solved to a 13\% gap in 2 hours.
	
	\smallskip
	
	\noindent \textbf{Sparsity of the Regression Coefficients: } We consider datasets with $n=10^3$, $p=10^4$, and the same correlation setting as the experiment for the number of samples. The results in Table \ref{table:sensitivity} show that problems with 15 nonzeros can be handled in $166$ seconds, and for $20$ and $25$ nonzeros, decent gaps ($\leq 10\%$) can be obtained in 2 hours. We also note that for larger $\lambda_2$ or tighter $M$ choices, larger values of $k^{\dagger}$ can be handled.
	\begin{table}[h]
		\renewcommand{\tabcolsep}{1.2pt}
		\renewcommand{\arraystretch}{1.5}
		\centering
		\caption{{\small{Run time in seconds (denoted by t) for L0BnB. If L0BnB  does not solve the problem in a 2 hour time limit, the optimality gap is shown in parenthesis.}}}
		\label{table:sensitivity}
		\begin{tabular}{|c|c|c|c|c|}
			\multicolumn{5}{c}{Varying $n$} \\
			\hline
			$\boldsymbol n$ & $10^2$ & $10^3$ & $10^4$ & $10^5$ \\ \hline
			\bf{t} & (42\%) & 11 & 30 & 1701 \\ \hline
		\end{tabular}
		\quad
		\begin{tabular}{|c|c|c|c|c|c|c|}
			\multicolumn{7}{c}{Varying correlation coefficient  $\rho$} \\
			\hline
			$\boldsymbol \rho$ & 0.1	& 0.3	& 0.5	& 0.7	& 0.8	& 0.9 \\ \hline
			\bf{t}($p=10^4$) & 4 & 4  & 5 & 16 & 55 & 530 \\ \hline
			\bf{t}($p=10^5$) & 35 & 39  & 60 & 231 &	808 & (13\%) \\ \hline
		\end{tabular}
		\quad
		\begin{tabular}{|c|c|c|c|c|}
			\multicolumn{5}{c}{Varying $k^{\dagger}$} \\
			\hline
			$\boldsymbol k^{\dagger}$ & 10	& 15	& 20	& 25 \\ \hline
			\bf{t}	& 4	& 166	& (10\%)	& (6\%) \\ \hline
		\end{tabular}
	\end{table}
	
	\subsection{Real Data and Ablation Studies (Algorithm Settings)}
	
	\noindent{\bf{High-dimensional Real Data}:} Here we investigate the run time of L0BnB on the Riboflavin dataset \cite{buhlmann2014high}---a genetics dataset used for predicting Vitamin B2 production levels. The original dataset has $p = 4088$ and $n = 71$. We augment the dataset with pairwise feature interactions to get $p = 8,362,004$. {We mean center and normalize the response and the columns of the data matrix}.  We then run 5-fold cross-validation in \texttt{L0Learn} (with default parameters) to find the optimal regularization parameters $\hat{\lambda}_0$ and $\hat{\lambda}_2$. We set $M$ to be $10$ times the $\ell_\infty$ norm of the solution obtained from cross-validation. In L0BnB, we solve the problem for $\lambda_2 \in \{ 0.1  \hat{\lambda}_2, \hat{\lambda}_2\}$ and vary $\lambda_0$ to generate solutions of different support sizes. The run time, for each $\lambda_2$, as a function of the support size is reported in Figure \ref{fig:real}. Interestingly, at $\hat{\lambda}_2$, all the support sizes obtained (up to 15 nonzeros) can be handled in less than a minute. Moreover, the increase in time is relatively slow as the number of nonzeros increases. When $\lambda_2$ becomes smaller (i.e., $\lambda_{2}=0.1  \hat{\lambda}_2$), the run times increase (though they are reasonable) as the problem becomes more difficult. When an optimal solution has six nonzeros, L0BnB for $\lambda_{2}=0.1\hat{\lambda}_2$ is approximately $20$x slower than $\lambda_{2}=\hat{\lambda}_2$.
	
	
	
	\smallskip
	
	\noindent{\bf{Ablation Study}:} We perform an ablation study to measure the effect of key choices in our BnB on the run time. Particularly, we consider the following changes: (i) replacing our relaxation solver with MOSEK, (ii) turning off warm starts in our solver, and (iii) turning off active sets in our solver. We measure the run time before and after these changes on synthetic data with $n=10^3$, $k^{\dagger} = 10$, $\Sigma_{ij} = 0.1$ for all $i \neq j$ and $1$ otherwise. We use the parameters $\lambda_0 = \lambda_0^{*}$,  $\lambda_2 = \lambda_2^{*}$, and $M = 1.5 M^{*}(\lambda_2^{*})$ (with the same values as those used in the experiment of Section \ref{sec:timing_comparison}). The run times for different choices of $p$ are reported in Table \ref{table:ablation}. 
	The results show that replacing our relaxation solver with MOSEK will slow down the BnB by more than $1200$x at $p=10^4$. The results for MOSEK are likely to be even slower for $p=10^6$ as it still had a $100\%$ gap after 2 hours. We note that MOSEK employs a state-of-the-art conic optimization solver (based on an interior point method). The significant speed-ups here can be attributed to our CD-based algorithm, which is designed to effectively exploit the sparsity structure in the problem. The results also indicate that warm starts and active sets are important for run time, e.g., removing warm starts and active sets at $p=10^5$ can slow down the algorithm by more than $16$x and $67$x, respectively.

	\begin{figure}[h]
		\begin{minipage}[c]{0.5\textwidth}
			\renewcommand{\tabcolsep}{4pt}
			\renewcommand{\arraystretch}{1.3}
			\centering
			\captionof{table}{{\small Time (seconds) after the following changes to our relaxation solver: (i) replacing it with MOSEK, (ii) removing warm starts, and (iii) removing active sets. Gap is shown in parenthesis if the method does not terminate in 2 hours.}}
			\label{table:ablation}
			\scalebox{1.05}{\begin{tabular}{|c| ccc|}
					\hline
					$p$ &  $10^4$ & $10^5$ & $10^6$ \\ \hline
					L0BnB &  3 & 34 & 1112 \\
					(i)   & 3802 & (13\%) & (100\%) \\ 
					(ii)    & 25 & 560 & (21\%) \\
					(iii) & 66 & 2291 & (23\%) \\\hline
			\end{tabular}}
		\end{minipage}
		\begin{minipage}[c]{0.5\textwidth}
			\centering
			\includegraphics[scale=0.5,trim=2mm 2mm 2mm 1mm, clip]{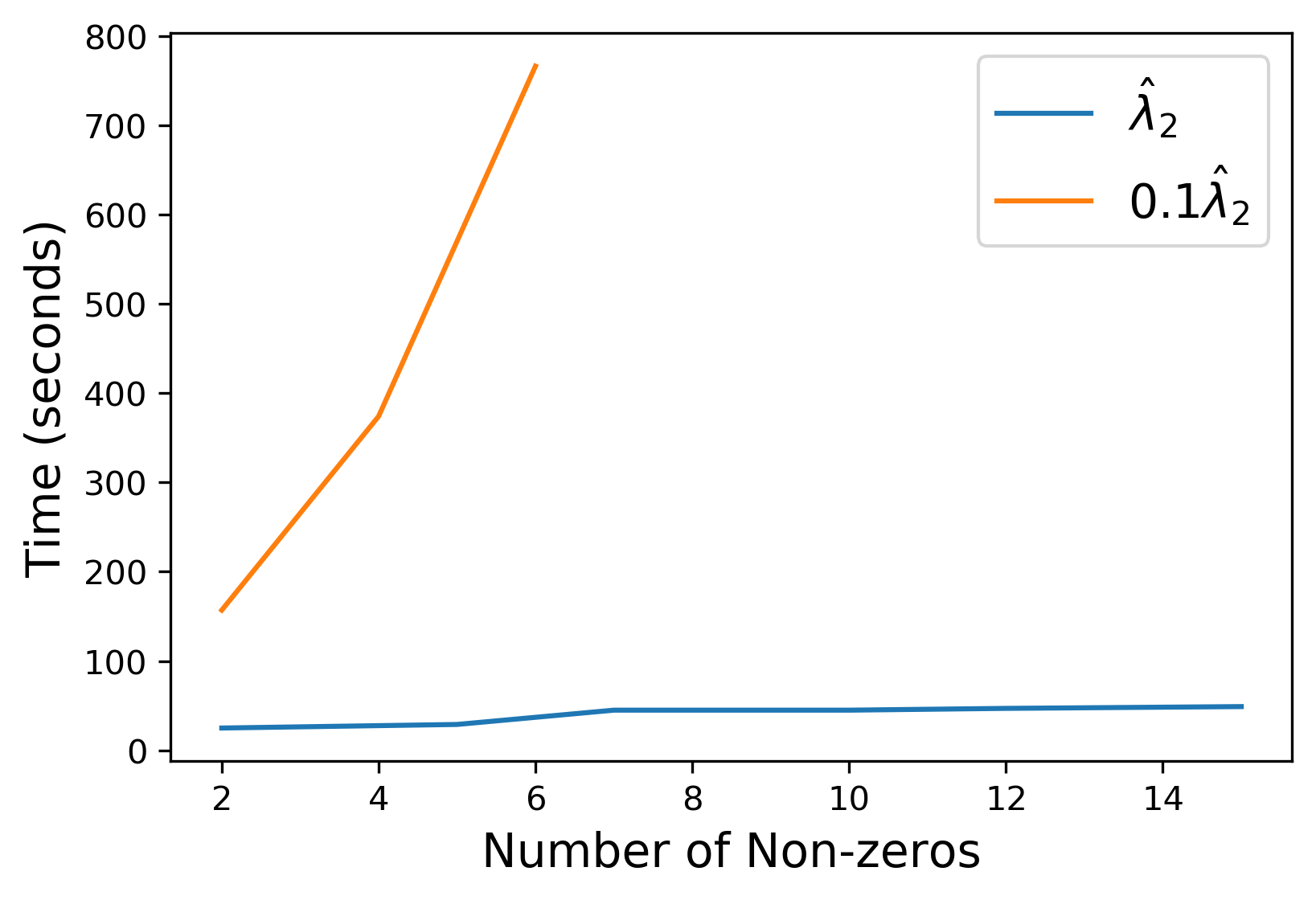}
			\caption{\small{L0BnB run time on a real genomics dataset (Riboflavin) with $n =71$ and  $p \approx 8.3 \times 10^6$.}}
			\label{fig:real}
		\end{minipage}
	\end{figure}

	\section{Conclusion}
	
	We considered the exact computation of estimators from the $\ell_0 \ell_2$-regularized least squares problem. While current approaches for this problem rely on commercial MIP-solvers, we propose a highly specialized nonlinear BnB framework for solving the problem.
	A key workhorse in our approach is a fast coordinate descent procedure to solve the node relaxations along with active set updates and gradient screening, which exploit information across the search tree for computational efficiency. Moreover, we proposed a new method for obtaining dual bounds from our primal coordinate descent solutions and showed that the quality of these bounds depend on the sparsity level, rather than the number of features $p$. Our experiments on both real and synthetic data indicate that our method exhibits over $5000$x speedups compared to the fastest solvers, handling high-dimensional instances with $p = 8.3 \times 10^6$ in the order of seconds to few minutes. Our method appears to be more robust to the choices of the regularization parameters and can handle difficult statistical problems (e.g., relatively high correlations or small number of samples). 
	
	
	Our work demonstrates for the first time that carefully designed first-order methods can be highly effective within a BnB framework; and can perhaps, be applied to more general mixed integer programs involving sparsity. {There are multiple directions for future work. \cite{atamturk20a} recently showed that safe screening rules, which eliminate variables from the optimization problem, can be a very effective preprocessing step for $\ell_0\ell_2$ regularized regression. One promising direction is to extend such rules to dynamically eliminate variables during the course of BnB. Another important direction is to develop specialized methods that can dynamically infer and tighten the Big-M values used in our formulation.}
	
	\subsection*{Acknowledgements}
	Hussein Hazimeh acknowledges research support from the Office of Naval Research ONR-N000141812298. Rahul Mazumder acknowledges research funding from the Office of Naval Research ONR-N000141812298 (Young Investigator Award), the National Science Foundation (NSF-IIS-1718258) and IBM. 
	The authors would like to thank the referees for their constructive comments, which led to an improvement in the paper.

	\bibliographystyle{spmpsci}      
	\bibliography{ref}   
	
	\appendix
	\section{Proofs of Technical Results}
	\textbf{Proof of Theorem \ref{theorem:relaxation}: }
	The interval relaxation of \eqref{eq:conicbigM} can be expressed as:
	\begin{subequations}\label{eq:relax_intermediate}
		\begin{align} 
		\min_{\beta} ~~~ & \Big \{ \frac{1}{2} \| y - X \beta \|_2^2 +  \sum_{i \in [p]} {\min_{s_i,z_i} (\lambda_0 z_i + \lambda_2 s_i)} \Big \}  \\
		& \beta_i^2 \leq s_i z_i, ~  i \in [p] \label{eq:secondcone} \\
		& -{M} z_i \leq \beta_i \leq {M} z_i, ~ i \in [p] \label{eq:M}\\
		& z_i \in [0,1],  s_i \geq 0,  ~ i \in [p] \label{eq:szvars}
		\end{align}
	\end{subequations}
	Let $\omega(\beta_i; \lambda_0, \lambda_2, M) := \min_{s_i,z_i} (\lambda_0 z_i + \lambda_2 s_i) ~ \text{s.t.} ~ \eqref{eq:secondcone}, \eqref{eq:M}, \eqref{eq:szvars}.$ Next we obtain an expression for 
	$\omega(\beta_i; \lambda_0, \lambda_2, M)$.\\
	{Let $(\beta_i, z_i, s_i)$ be a feasible solution for~\eqref{eq:relax_intermediate}. Note that $\hat{z}_i := \max \{ \frac{\beta_i^2}{s_i} , \frac{|\beta_i|}{{M}} \} $ is the smallest possible value of $z_i$, which satisfies constraints \eqref{eq:secondcone} and \eqref{eq:M} (for the case of $\beta_i = s_i = 0$, we define $\beta_i^2/s_i = 0)$. Thus, the objective value corresponding to $(\beta_i, \hat{z}_i, s_i)$ is less than or equal to that of a feasible solution $(\beta_i, z_i, s_i)$. This implies that we can replace constraints \eqref{eq:secondcone} and \eqref{eq:M} with the constraint $z_i = \max \{ \frac{\beta_i^2}{s_i} , \frac{|\beta_i|}{{M}} \}$ without changing the optimal objective of the problem. This replacement leads to:
		$$
		\omega(\beta_i; \lambda_0, \lambda_2, M) = \min_{s_i,z_i}~ (\lambda_0 z_i + \lambda_2 s_i) ~~~ \text{s.t.} ~~~ z_i = \max \Big\{ \frac{\beta_i^2}{s_i} , \frac{|\beta_i|}{{M}} \Big\}, ~ z_i \in [0,1], ~  s_i \geq 0.
		$$
		In the above, we can eliminate the variable $z_i$, leading to: 
		\begin{align}\label{eq:pintermediate}
		\omega(\beta_i; \lambda_0, \lambda_2, M) = \min_{s_i}  &  \max \Big \{ \underbrace{ \lambda_0  \frac{\beta_i^2}{s_i} + \lambda_2 s_i }_{\text{Case I}}, \underbrace{ \lambda_0  \frac{|\beta_i|}{{M}} + \lambda_2 s_i}_{\text{Case II}} \Big \} ~~ \text{s.t.} ~~ s_i \geq \beta_i^2, ~ |\beta_i| \leq M.
		\end{align}}
	Suppose that Case I attains the maximum in \eqref{eq:pintermediate}. This holds if $s_i \leq |\beta_i|M$. The function $\lambda_0  \frac{\beta_i^2}{s_i} + \lambda_2 s_i $ is convex in $s_i$, and  the optimality conditions of this function
	imply that the optimal solution is $s_i^{*} = |\beta_i| \sqrt{{\lambda_0}/{\lambda_2}}$ if $|\beta_i| \leq  \sqrt{{\lambda_0}/{\lambda_2}} \leq M$ and  $s_i^{*} = \beta_i^2$ if $\sqrt{{\lambda_0}/{\lambda_2}} \leq |\beta_i| \leq M$. Plugging $s_i^{*}$ into~\eqref{eq:pintermediate} leads to $\omega(\beta_i; \lambda_0, \lambda_2, M) = 2 \lambda_0 \mathcal{B}( \beta_i \sqrt{{\lambda_2}/{\lambda_0}})$, assuming $\sqrt{{\lambda_0}/{\lambda_2}} \leq  M$.

	Now suppose Case II attains the maximum in \eqref{eq:pintermediate}. This holds if $s_i \geq |\beta_i|M$. The function $\lambda_0  \frac{|\beta_i|}{M} + \lambda_2 s_i$ is monotonically increasing in $s_i$, where $s_i$ is lower bounded by $|\beta_i|M$ and $\beta_i^2$. But we always have $|\beta_i|M \geq \beta_i^2$ (since $|\beta_i| \leq M$), which implies that the optimal solution is $s_i^{*} = |\beta_i|M$. Substituting $s_i^{*}$ into \eqref{eq:pintermediate} we get $\omega(\beta_i; \lambda_0, \lambda_2, M) = (\lambda_0/M + \lambda_2 M) |\beta_i|$, and this holds as long as $\sqrt{{\lambda_0}/{\lambda_2}} >  M$.
	\\ \\
	\textbf{Proof of Proposition \ref{prop:v1v2}: } 
	First, we show~\eqref{eq:v14}. Note that $V_{\text{B}(M)}$ can be simplified to:
	\begin{align} \label{eq:bigm_relax_temp}
	V_{\text{B}(M)} = \min_{ \|\beta\|_{\infty} \leq M} H(\beta) := \frac{1}{2} \| y - X \beta \|_2^2 +  \sum_{i \in [p]} \Big( \frac{\lambda_0}{M} |\beta_i| + \lambda_2 \beta_i^2 \Big).
	\end{align}
	Recalling Theorem~\ref{theorem:relaxation} and the definition of $F(\beta)$ in~\eqref{eq:relaxation}, note that for $\sqrt{{\lambda_0}/{\lambda_2}} > M$:
	$$
	\begin{aligned}
	V_{\text{PR}(M)} - V_{\text{B}(M)} \geq F(\beta^{*}) - H(\beta^{*}) =& \sum\nolimits_{i \in [p]} \Big\{ \psi_2(\beta_i^{*}; \lambda_0, \lambda_2,M) - \frac{\lambda_0}{M} |\beta_i^{*}| - \lambda_2 (\beta_i^{*})^2 \Big\} \\
	=& \lambda_2 \sum\nolimits_{i \in [p]} (M|\beta_i^{*}| - (\beta_i^{*})^2)
	\end{aligned}
	$$
	where the inequality holds since $\beta^{*}$, an optimal solution to~\eqref{eq:relaxation}, is feasible for~\eqref{eq:bigm_relax_temp}. This establishes \eqref{eq:v14}. 
	
	\smallskip
	
	We now show~\eqref{eq:v123}. {Since $|\beta_i^{*}| \leq M$ and $\sqrt{{\lambda_0}/{\lambda_2}} > M$, we must have $\psi_1(\beta_i^{*}; \lambda_0, \lambda_2) = 2 |\beta_i^{*}| \sqrt{\lambda_0 \lambda_2}$ (this corresponds to the first case in \eqref{eq:reverse_huber}). Also recall that $V_{\text{PR}(\infty)} = \min_{\beta} G(\beta)$, where $G(\beta)$ is defined in \eqref{eq:conic_relaxation} (this follows from applying Theorem \ref{theorem:relaxation} with $M = \infty$)}. The following then holds:
	\begin{equation}\label{append-dummy1}
	\begin{aligned}
	V_{\text{PR}(M)} - V_{\text{PR}(\infty)} \geq F(\beta^{*}) - G(\beta^{*}) =& \sum\nolimits_{i \in [p]} \Big\{\psi_2(\beta_i^{*}; \lambda_0, \lambda_2,M) - \psi_1(\beta_i^{*}; \lambda_0, \lambda_2) \Big\}\\
	=& h(\lambda_0, \lambda_2, M) \| \beta^{*} \|_1,
	\end{aligned}
	\end{equation}
	where the inequality holds since $\beta^{*}$, an optimal solution to~\eqref{eq:relaxation}, is feasible 
	for~\eqref{eq:conic_relaxation}. 
	\\ \\ 
	\textbf{Proof of Proposition \ref{prop:relax_obj}: }
	First, we recall that $V_{\text{PR}(\infty)} = \min_{\beta} G(\beta)$, where $G(\beta)$ is defined in \eqref{eq:conic_relaxation}, and that  $V_{\text{B}(M)}$ is defined in \eqref{eq:bigm_relax_temp}. Define a function $t: \mathbb{R} \to \mathbb{R}$ as follows 
	$$t(\beta_i) := 2 \lambda_0 \mathcal{B}( \beta_i \sqrt{{\lambda_2}/{\lambda_0}}) - \frac{\lambda_0}{M} |\beta_i| - \lambda_2 \beta_i^2.$$ Next, we prove~\eqref{eq:relax_obj_1}.
	
	\noindent\textbf{Proof of \eqref{eq:relax_obj_1}:} 
	Suppose that $M \leq \frac{1}{2}\sqrt{{\lambda_0}/{\lambda_2}}$ and  $|\beta_i| \leq M$. Using the fact that $|\beta_i| \leq \sqrt{{\lambda_0}/{\lambda_2}}$ and the definition of $\mathcal{B}$, we have $\mathcal{B}( \beta_i \sqrt{{\lambda_2}/{\lambda_0}}) =  \sqrt{{\lambda_2}/{\lambda_0}} |\beta_i|$. This leads to:
	\begin{align} \label{eq:t_first_part}
	t(\beta_i) = \Big(2 \sqrt{\lambda_0 \lambda_2} - \tfrac{\lambda_0}{M} \Big) |\beta_i| - \lambda_2 \beta_i^2 \leq 0,
	\end{align}
	where the inequality follows since $2 \sqrt{\lambda_0 \lambda_2} - {\lambda_0}/{M} \leq 0$ for $M \leq \frac{1}{2}\sqrt{{\lambda_0}/{\lambda_2}}$. Now, let $\beta^{\dagger}$ be an optimal solution to \eqref{eq:bigm_relax_temp}. Then, the following holds:
	\begin{align} \label{eq:v13_part1}
	V_{\text{B}(M)} - V_{\text{PR}(\infty)} \geq H(\beta^{\dagger}) - G(\beta^{\dagger}) = - \sum\nolimits_{i \in [p]} t(\beta_i^{\dagger}) \geq 0,
	\end{align}
	where the first inequality follows from $V_{\text{B}(M)} = H(\beta^{\dagger})$ and $V_{\text{PR}(\infty)} \leq G(\beta^{\dagger})$. The second inequality in~\eqref{eq:v13_part1} follows from \eqref{eq:t_first_part}. This  establishes \eqref{eq:relax_obj_1}.
	
	\noindent To show \eqref{eq:relax_obj_3}, we will need the following lemma. 
	\begin{lemma} \label{lemma:intermediate_proof_prop_1}
		Let $M \geq \sqrt{{\lambda_0}/{\lambda_2}}$, then for any $\beta_{i} \in [-M, M]$, we have $t(\beta_i) \geq 0$.
		\begin{proof}[Proof of Lemma \ref{lemma:intermediate_proof_prop_1}]
			Suppose that $M \geq \sqrt{{\lambda_0}/{\lambda_2}}$ and $|\beta_i| \leq M$. There are two cases to consider here: 
			Case~(i):~$|\beta_i| \leq \sqrt{{\lambda_0}/{\lambda_2}}$ and 
			Case~(ii):~$|\beta_i| \geq \sqrt{{\lambda_0}/{\lambda_2}}$. \\
			For Case~(i), from the definition of $\mathcal{B}$, we have $2 \lambda_0 \mathcal{B}( \beta_i \sqrt{{\lambda_2}/{\lambda_0}}) = 2 \sqrt{\lambda_0 \lambda_2} |\beta_i|$. Therefore, 
			$$t(\beta_i) = \Big(2 \sqrt{\lambda_0 \lambda_2} - \frac{\lambda_0}{M} \Big) |\beta_i| - \lambda_2 \beta_i^2.$$
			In the above, it is easy to check that for $M \geq \sqrt{{\lambda_0}/{\lambda_2}}$ and $|\beta_i| \leq \sqrt{{\lambda_0}/{\lambda_2}}$, we have $t(\beta_i) \geq 0$. Now for Case (ii), we have $2 \lambda_0 \mathcal{B}( \beta_i \sqrt{{\lambda_2}/{\lambda_0}}) = \lambda_0 + \lambda_2 \beta_i^2$---this leads to 
			$t(\beta_i) = \lambda_0 \Big( 1 - \frac{|\beta_i|}{M} \Big)$, which is 
			non-negative since we assume that $|\beta_i| \leq M$. 
			This establishes Lemma \ref{lemma:intermediate_proof_prop_1}. 
		\end{proof}
	\end{lemma}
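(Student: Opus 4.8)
The plan is to reduce $t(\beta_i)$ to an explicit piecewise-polynomial form by evaluating the reverse Huber penalty $\mathcal{B}$, and then verify nonnegativity on each piece using the two hypotheses $M \geq \sqrt{\lambda_0/\lambda_2}$ and $|\beta_i| \leq M$. The key observation is that the argument $\beta_i\sqrt{\lambda_2/\lambda_0}$ has absolute value at most $1$ precisely when $|\beta_i| \leq \sqrt{\lambda_0/\lambda_2}$, so the two branches of $\mathcal{B}$ in \eqref{eq:reverse_huber} correspond \emph{exactly} to the two cases stated in the lemma. This is why the case split on $|\beta_i|$ versus $\sqrt{\lambda_0/\lambda_2}$ is the natural one.

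First I would handle Case (i), where $|\beta_i| \leq \sqrt{\lambda_0/\lambda_2}$. Here $\mathcal{B}$ takes its linear branch, so $2\lambda_0\mathcal{B}(\beta_i\sqrt{\lambda_2/\lambda_0}) = 2\sqrt{\lambda_0\lambda_2}\,|\beta_i|$ and hence $t(\beta_i) = (2\sqrt{\lambda_0\lambda_2} - \lambda_0/M)|\beta_i| - \lambda_2\beta_i^2 = |\beta_i|\bigl(2\sqrt{\lambda_0\lambda_2} - \lambda_0/M - \lambda_2|\beta_i|\bigr)$. Since the prefactor $|\beta_i|$ is nonnegative, it suffices to show the bracket is nonnegative, which I would do by bounding its two subtracted terms separately: the case constraint gives $\lambda_2|\beta_i| \leq \lambda_2\sqrt{\lambda_0/\lambda_2} = \sqrt{\lambda_0\lambda_2}$, while $M \geq \sqrt{\lambda_0/\lambda_2}$ gives $\lambda_0/M \leq \sqrt{\lambda_0\lambda_2}$. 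Substituting both, the bracket is at least $2\sqrt{\lambda_0\lambda_2} - \sqrt{\lambda_0\lambda_2} - \sqrt{\lambda_0\lambda_2} = 0$, so $t(\beta_i)\geq 0$.

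Next I would handle Case (ii), where $\sqrt{\lambda_0/\lambda_2} \leq |\beta_i| \leq M$. Here $\mathcal{B}$ takes its quadratic branch, giving $2\lambda_0\mathcal{B}(\beta_i\sqrt{\lambda_2/\lambda_0}) = \lambda_0 + \lambda_2\beta_i^2$. The $\lambda_2\beta_i^2$ term then cancels against the $-\lambda_2\beta_i^2$ appearing in the definition of $t$, leaving $t(\beta_i) = \lambda_0\bigl(1 - |\beta_i|/M\bigr)$, which is nonnegative precisely because $|\beta_i| \leq M$. The two cases together cover all of $[-M,M]$, completing the argument.

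The only delicate step is the bracket bound in Case (i): nonnegativity relies on invoking \emph{both} hypotheses simultaneously, each contributing a $\sqrt{\lambda_0\lambda_2}$ deduction that exactly absorbs one of the two $\sqrt{\lambda_0\lambda_2}$ summands in $2\sqrt{\lambda_0\lambda_2}$. Dropping either assumption breaks the inequality, so I would take care to use $M \geq \sqrt{\lambda_0/\lambda_2}$ and $|\beta_i| \leq \sqrt{\lambda_0/\lambda_2}$ explicitly and in tandem. Everything else is direct substitution into the definition of $\mathcal{B}$ and routine sign checking.
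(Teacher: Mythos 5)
Your proof is correct and follows the same route as the paper's: the identical case split on $|\beta_i|$ versus $\sqrt{\lambda_0/\lambda_2}$ (matching the two branches of $\mathcal{B}$), with Case (ii) handled by the same cancellation. Your Case (i) merely spells out the ``easy to check'' step the paper leaves implicit, via the factorization and the two bounds $\lambda_2|\beta_i| \leq \sqrt{\lambda_0\lambda_2}$ and $\lambda_0/M \leq \sqrt{\lambda_0\lambda_2}$, which is exactly the intended verification.
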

	
	\noindent\textbf{Proof of \eqref{eq:relax_obj_3}:}
	Define $v^{*}(M) =  \min_{\| \beta \|_{\infty} \leq M} G(\beta)$ (recall, $G(\beta)$ is defined in \eqref{eq:conic_relaxation}) and let $\beta^{*}$ be an optimal solution i.e., $v^{*}(M) = G(\beta^*)$.
	Suppose that $M \geq \sqrt{{\lambda_0}/{\lambda_2}}$. Then, the following holds:
	\begin{align} \label{eq:relax_obj_2}
	v^{*}(M) - V_{\text{B}(M)} \geq G(\beta^{*}) - H(\beta^{*}) = \sum_{i \in [p]} t(\beta_i^{*}) \geq 0
	\end{align}
	where the first inequality holds since $\beta^{*}$ is a feasible solution to \eqref{eq:bigm_relax_temp} so $V_{\text{B}(M)} \leq H(\beta^{*})$, and the last inequality is due to Lemma \ref{lemma:intermediate_proof_prop_1}. 
	
	Now suppose that $\lambda_2 \in \mathcal{L}(M)$ (defined in~\eqref{eq:lambda2star}) and let $\hat{\beta} \in \mathcal{S}(\lambda_2)$ (i.e., $\hat{\beta}$ is optimal for \eqref{eq:conic_relaxation}) be such that it satisfies $\| \hat{\beta} \|_{\infty} \leq M$. Since $V_{\text{PR}(\infty)} \leq v^{*}(M)$ and $\hat{\beta}$ is feasible for the problem corresponding to $v^{*}(M)$, then $v^{*}(M) = G(\hat{\beta})$. But by \eqref{eq:relax_obj_2} we have $v^{*}(M) \geq V_{\text{B}(M)}$, which combined with $v^{*}(M) = G(\hat{\beta})$, leads to $G(\hat{\beta}) \geq V_{\text{B}(M)}$. This establishes \eqref{eq:relax_obj_3} (since by definition $G(\hat{\beta}) = V_{\text{PR}(\infty)}$); and completes the proof of Proposition \ref{prop:relax_obj}.
	\\ \\
	\textbf{Proof of Proposition \ref{prop:V_explicit}: }
	Fix some $i \in \text{Supp}(\hat{\beta})^c$. Recall that the one-dimensional problem: $\min_{|\beta_i| \leq M} F(\hat{\beta}_1, \dots, \beta_i, \dots, \hat{\beta}_p)$ is equivalent to Problem \eqref{eq:proximal}, where $\tilde{\beta}_i = \langle y - \sum_{j \neq i} X_j \hat{\beta}_j, X_i \rangle$. Since $\hat{\beta}_i = 0$, we have  $\tilde{\beta}_i =  \langle \hat{r}, X_i \rangle$ where, $\hat{r} = y - X\hat{\beta}.$
	For $\sqrt{{\lambda_0}/{\lambda_2}} \leq M$, \eqref{eq:case1sol} implies that the solution of \eqref{eq:proximal} is nonzero iff $|\tilde{\beta}_i| > 2 \sqrt{\lambda_0 \lambda_2}$. Similarly, for $\sqrt{{\lambda_0}/{\lambda_2}} > M$, by \eqref{eq:case2sol}, the solution of \eqref{eq:proximal} is nonzero iff $|\tilde{\beta}_i| > \lambda_0/M + \lambda_2 M$. Using these observations in the definition of $\mathcal{V}$ in Algorithm 2, leads to the result of the proposition.
	\\ \\
	\textbf{Proof of Proposition \ref{prop:screening}: }
	Fix some $i \in \mathcal{V}$. Note that 
	$$
	| \langle \hat{r}, X_i \rangle - \langle {r}^{0}, X_i \rangle | = | \langle X {\beta}^{0} - X \hat{\beta}, X_i \rangle | \leq \| X {\beta}^{0} - X \hat{\beta} \|_2 \| X_i \|_2 \leq \epsilon.
	$$
	Using the triangle inequality and the bound above, we get:
	\begin{align*}
	|\langle \hat{r}, X_i \rangle| & \leq  |\langle {r}^{0}, X_i \rangle | + |  \langle \hat{r}, X_i \rangle - \langle {r}^{0}, X_i \rangle  | \leq |\langle {r}^{0}, X_i \rangle | + \epsilon.
	\end{align*} 
	Therefore, if $i \in \mathcal{V}$, i.e., $| \langle \hat{r}, X_i \rangle| > c(\lambda_0, \lambda_2, M)$, then $|\langle {r}^{0}, X_i \rangle   | +  \epsilon > c(\lambda_0, \lambda_2, M)$, implying that $i \in \hat{\mathcal{V}}$, as defined in~\eqref{eq:V_hat}.
	\\ \\
	\textbf{Proof of Theorem \ref{theorem:duals}: } 
	Problem \eqref{eq:relaxation} can be written equivalently as:
	\begin{align} \label{eq:primalrewrite}
	\min_{\beta \in \mathbb{R}^{p}, r \in \mathbb{R}^n} ~~ & \frac{1}{2} \| r \|_2^2 + \sum_{i\in [p]} \psi(\beta_i; \lambda_0, \lambda_2, M) ~~ \text{s.t.} ~~  r = y - X \beta, ~~ | \beta_i | \leq M, \ \forall i \in [p].
	\end{align}
	
	{\bf{Case of $\sqrt{{\lambda_0}/{\lambda_2}} \leq  M$}:}  We first consider the case when 
	$\sqrt{{\lambda_0}/{\lambda_2}} \leq  M$. We dualize all the constraints in~\eqref{eq:primalrewrite}, leading to the following Lagrangian dual:
	\begin{align} \label{eq:dualintermediate}
	\max_{\alpha \in \mathbb{R}^n, \eta \in \mathbb{R}^p_{\geq 0}} ~~ \min_{\beta \in \mathbb{R}^p, r \in \mathbb{R}^n} ~~ L(\beta, r, \alpha, \eta)
	\end{align}
	where $L(\beta, r, \alpha, \eta)$ is the Lagrangian function defined as follows:
	$$
	L(\beta, r, \alpha, \eta) := \frac{1}{2} \| r \|_2^2 + \sum_{i \in [p]} \psi_1(\beta_i; \lambda_0, \lambda_2) + \alpha^T(r - y + X \beta) + \sum_{i \in [p]} \eta_i (| \beta_i | - M).
	$$
	Next, we discuss how to solve the inner minimization problem in \eqref{eq:dualintermediate}. Let us write:
	\begin{align} \label{eq:inner_min_L}
	\min_{\beta, r} ~~ L(\beta, r, \alpha, \eta) =\min_{r} \Big \{ \frac{1}{2} \| r \|_2^2 + \alpha^T r \Big \} + 
	\sum_{i \in [p]} \min_{\beta_i} \Big \{ D_i(\beta_i) \Big \} - \alpha^T y -  \sum_{i \in [p]} M \eta_i 
	\end{align}
	where $D_i(\beta_i) := \psi_1(\beta_i; \lambda_0, \lambda_2) + \alpha^T X_i \beta_i + \eta_i |\beta_i|$. 
	Note that the minimizer wrt $r$ is given by $\tilde{r}^*=-\alpha$.
	In what follows, we consider some $i \in [p]$ and derive an optimal solution $\tilde{\beta}_i^{*}$ of $\min_{\beta_i} D_i(\beta_i)$. There are two cases to consider here.
	
	\noindent \underline{Case 1:} $|\beta_i| \leq \sqrt{\lambda_0/\lambda_2}$. Here, $\psi_1(\beta_i; \lambda_0, \lambda_2) = 2 \sqrt{\lambda_0 \lambda_2} |\beta_i|$. Note that 
	\begin{equation}\label{thm2-case11}
	\tilde{\beta}^*_{i} = \argmin_{\beta_i} ~ D_i(\beta_i) = \begin{cases}
	0 & \text{if~~~$2 \sqrt{\lambda_0 \lambda_2} + \eta_i - | \alpha^T X_i | \geq 0$}\\
	-\sqrt{\lambda_0/\lambda_2}\sign(\alpha^TX_{i})& \text{otherwise}
	\end{cases}
	\end{equation}
	and as we will see, the second case above is a special case of Case 2 below. 
	
	\noindent \underline{Case 2:}  $|\beta_i| \geq \sqrt{\lambda_0/\lambda_2}$. In this case, $\psi_1(\beta_i; \lambda_0, \lambda_2) = \lambda_2 \beta_i^2 + \lambda_0$. 
	Note that 
	\begin{equation}\label{thm2-case12}
	\tilde{\beta}_i^{*}=\argmin_{\beta_i} D_i(\beta_i) = \argmin_{\beta_i} \lambda_2 \beta_i^2  + \alpha^T X_i \beta_i + \eta_i |\beta_i|
	= {T({-\alpha^T X_i}; {\eta_i}, \infty)}/{(2\lambda_2)}
	\end{equation}
	where, $T$ above is the soft-thresholding operator~\cite{glmnet}.
	But $\tilde{\beta}_i^{*}$ in~\eqref{thm2-case12} should satisfy $|\tilde{\beta}_i^{*}| \geq \sqrt{\lambda_0/\lambda_2}$ in this case. Using the definition of $T(.)$,  the latter condition can be written as: $(| \alpha^T X_i | - \eta_i)/(2\lambda_2) \geq \sqrt{\lambda_0/\lambda_2}$, which simplifies to $| \alpha^T X_i | - \eta_i \geq 2 \sqrt{\lambda_0 \lambda_2}$. 
	In this case, $D_i(\tilde{\beta}_i^{*})$ can be written as:
	\begin{align} \label{eq:D_2}
	D_i(\tilde{\beta}_i^{*}) = 
	- \frac{1}{4 \lambda_2} \Big( |\alpha^T X_i| - \eta_i \Big)^2 + \lambda_0 ~~ \text{ if } ~~ | \alpha^T X_i | - \eta_i \geq 2 \sqrt{\lambda_0 \lambda_2}.
	\end{align}
	Combining~\eqref{thm2-case11} and \eqref{eq:D_2}, we have:
	$D_i(\tilde{\beta}_i^{*}) = - \Big[   \frac{(|\alpha^T X_i| -  \eta_i)^2}{4 \lambda_2} - \lambda_0  \Big]_{+}.$
	Plugging the above expression of $D_i(\tilde{\beta}_i^{*})$ along with $\tilde{r}^{*} = - \alpha$ in \eqref{eq:dualintermediate}, we get the following dual problem: 
	\begin{align}\label{thm2-append-line1}
	\max_{\alpha \in \mathbb{R}^n, \eta \in \mathbb{R}^p_{\geq 0}} & - \frac{1}{2} \| \alpha \|_2^2 - \alpha^T y - \sum_{i=1}^{p} \Big[   \frac{(|\alpha^T X_i| -  \eta_i)^2}{4 \lambda_2} - \lambda_0  \Big]_{+} - \sum_{i=1}^{p} M \eta_i.
	\end{align}

	Note we can drop the non-negativity constraint $\eta_{i} \geq 0$ above. Using a new variable $\gamma$ in place of $\eta$ (note that $\eta_{i} = |\gamma_{i}|$ for all $i$), Problem~\eqref{thm2-append-line1} can be reformulated as:
	\begin{align}\label{thm2-append-line11}
	\max_{\alpha \in \mathbb{R}^n, \gamma \in \mathbb{R}^p } & - \frac{1}{2} \| \alpha \|_2^2 - \alpha^T y - \sum_{i=1}^{p} \Big[   \frac{(\alpha^T X_i -  \gamma_i)^2}{4 \lambda_2} - \lambda_0  \Big]_{+} - \sum_{i=1}^{p} M |\gamma_i|
	\end{align}
	which is the formulation in \eqref{eq:objdualcase1}. 
	
	
	We now show~\eqref{thm2-statement-pr-dual1}. 
	Let $(\alpha^{*}, \eta^{*})$ be an optimal solution of~\eqref{thm2-append-line1}. First, it is easy to see $r^{*} = \argmin_{r} L(\beta^{*}, r, \alpha^{*}, \eta^{*}) = - \alpha^{*}$. 
	For the second part of~\eqref{thm2-statement-pr-dual1}, note by complementary slackness: if $|\beta_i^{*}| < M$ then $\eta_i^{*} = 0$ and consequently $\gamma_{i}^* = 0$.
	Next, we will derive $\gamma_i^{*}$ for the case of $|\beta_i^{*}| = M >0$. We first consider the case of $\beta^{*}_i = M$. Using \eqref{thm2-case12} with the identifications: $\tilde{\beta}^{*}_i = M$, $\alpha = \alpha^{*}$, and $\eta = \eta^{*}$, we get $M = (- \alpha^{*T} X_i - \eta^{*}_i \text{sign}( -  \alpha^{*T} X_i))/(2\lambda_2)$, where $\text{sign}( \alpha^{*T} X_i) = -1$. Using this along with the fact that $\gamma_i^{*} = \eta_i^{*} \text{sign}( \alpha^{*T} X_i)$, we obtain 
	$\gamma_i^{*} =  \alpha^{*T} X_i +   2\lambda_2 M= \alpha^{*T} X_i -   2\lambda_2 M\text{sign}( \alpha^{*T} X_i)$. Using this identity along with a similar argument for $\beta_i = -M$, we arrive at the expression in~\eqref{thm2-statement-pr-dual1}.

	{\bf{Case of $\sqrt{{\lambda_0}/{\lambda_2}} >  M$}:} The proof for this case follows along the lines similar to what was shown above. We omit the proof. 
	\\ \\
	The following lemma is useful for proving Theorem \ref{theorem:dualitygap}.
	\begin{lemma} \label{lemma:v}
		Suppose $\sqrt{{\lambda_0}/{\lambda_2}} \leq  M $. Let $\hat{\beta}$ be a solution from Algorithm~2 and $(\hat{\alpha}, \hat{\gamma})$ be the corresponding dual solution defined in \eqref{eq:case1dualsol}.
		Let $\beta^{*}$ and $r^{*}$ be as defined in Theorem \ref{theorem:duals}, and define the primal gap $\epsilon = \| X (\beta^{*} - \hat{\beta}) \|_2$.  Then, for every $i \in \text{Supp}(\hat{\beta})^c$, we have $v(\hat{\alpha}, \hat{\gamma}_i) = 0$ (see~\eqref{thm2-defn-v1} for definition of $v$); and for every $i \in \text{Supp}(\hat{\beta})$, we have 
		\begin{align} \label{eq:vupbd}
		v(\hat{\alpha}, \hat{\gamma}_i) \leq c_i \epsilon + (4 \lambda_2)^{-1} {\epsilon^2} + v(\alpha^{*}, \gamma^{*}_i),
		\end{align}
		where $c_i = (2\lambda_2)^{-1}$ if $|\beta_i^{*}| < M$, and $c_i = M$ if $|\beta_i^{*}| = M$.
		\begin{proof}[Proof of Lemma \ref{lemma:v}]
			Fix some $i \in \text{Supp}(\hat{\beta})^c$. Since $\hat{\beta}$ is the output of Algorithm 2, the set $\mathcal{V}$ in Algorithm 2 must be empty. Thus, using Proposition \ref{prop:V_explicit}, we have: $|\hat{r}^{T} X_i| \leq 2 \sqrt{\lambda_0 \lambda_2}$.
			As $\hat{r} = -\hat{\alpha}$, and consequently using the definition of $\hat{\gamma}$ in~\eqref{eq:case1dualsolexplicit}, we have
			$\hat{\gamma}_i =0$. Thus, 
			$${(\hat{\alpha}^T X_i - \hat{\gamma}_i)^2}/{(4 \lambda_2)} =  {(\hat{r}^T X_i)^2}/{(4 \lambda_2)} \leq  \lambda_0,$$
			which implies that $v(\hat{\alpha}, \hat{\gamma}_i) = 0$.
			
			Now fix some $i \in \text{Supp}(\hat{\beta})$, and let 
			$a := (\hat{\alpha} - \alpha^{*})^T X_i$ and $b := \alpha^{*T} X_i - \gamma^{*}_i.$ By \eqref{eq:case1dualsol} and the definition of $h_{1}$ in~\eqref{eq:objdualcase1}, we have $\hat{\gamma}_i = \argmin_{\gamma_i} v(\hat{\alpha}, \gamma_i)$, which leads to $v(\hat{\alpha}, \hat{\gamma}_i) \leq v(\hat{\alpha}, \gamma^{*}_i)$. An upper bound on $v(\hat{\alpha}, \hat{\gamma}_i)$ can be then obtained as follows:
			\begin{align}
			v(\hat{\alpha}, \hat{\gamma}_i) \leq v(\hat{\alpha}, \gamma^{*}_i) & = \Big[ \frac{(\hat{\alpha}^T X_i -  \gamma^{*}_i)^2}{4 \lambda_2} - \lambda_0  \Big]_{+} + M |\gamma^{*}_i|  \nonumber \\
			& = \Big[ \frac{((\hat{\alpha} - \alpha^{*})^T X_i + \alpha^{*T} X_i -  \gamma^{*}_i)^2}{4 \lambda_2} - \lambda_0  \Big]_{+} + M |\gamma^{*}_i| \nonumber \\ & = \Big[ \frac{a^2 + 2ab + b^2}{4 \lambda_2} - \lambda_0  \Big]_{+} + M |\gamma^{*}_i| \nonumber \\
			& \leq \frac{a^2 + 2|a| |b|}{4 \lambda_2} +  \Big[ \frac{b^2}{4 \lambda_2} - \lambda_0  \Big]_{+} + M |\gamma^{*}_i| \nonumber \\ 
			& \leq \frac{a^2 + 2|a| |b|}{4 \lambda_2} + v(\alpha^{*}, \gamma^{*}_i)\label{eq:bdv}.
			\end{align}
			Next, we obtain upper bounds on $|a|$ and $|b|$. 
			By the Cauchy-Schwarz inequality, we have $|a| \leq \| X (\beta^* - \hat{\beta})\|_{2} \| X_{i} \|_{2} = \epsilon \| X_i \|_2$. Since the columns of $X$ are normalized, the bound simplifies to $|a| \leq \epsilon$. 
			Since, $\| y \|_{2}=1$ and $\alpha^{*} = - r^{*}$ (see Theorem \ref{theorem:duals}), we have:
			$$\frac{1}{2} \| \alpha^{*} \|_2^2 = \frac{1}{2} \| r^{*} \|_2^2 \leq  F(\beta^{*}) \leq F(0) = \frac{1}{2} \| y \|_2^2 = \frac{1}{2},$$
			implying that $\| \alpha^{*} \|_2 \leq 1$.
			
			\noindent We now present a bound on $|b| = |\alpha^{*T} X_i - \gamma^{*}_i|$ by considering two cases: 
			\begin{itemize}
				\item[] Case 1: ($|\beta_i^{*}| < M$): From the definition of $\gamma_i^{*}$ in \eqref{thm2-statement-pr-dual1}, we have $\gamma^{*}_i = 0$, which leads to $|b| = |\alpha^{*T} X_i| \leq \| \alpha^{*} \|_2 \| X_{i} \|_2 \leq 1$. 
				\item[] Case 2: ($|\beta_i^{*}| = M$): By \eqref{thm2-statement-pr-dual1},  $\gamma^{*}_i = {\alpha}^{*T} X_i -  2M\lambda_2 \sign({\alpha}^{*T} X_i)$, which leads to $ |b| \leq  2M\lambda_2$. 
			\end{itemize}
			Plugging $|a| \leq \epsilon$ and the bounds on $|b|$ (above) into \eqref{eq:bdv}, we arrive to \eqref{eq:vupbd}.
		\end{proof}
	\end{lemma}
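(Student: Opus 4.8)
The plan is to split the argument according to whether $i$ lies in $\mathrm{Supp}(\hat\beta)^c$ or in $\mathrm{Supp}(\hat\beta)$, exploiting two structural facts about the constructed dual point: that $\hat\alpha=-\hat r$ together with the termination guarantee of Algorithm~2, and that each $\hat\gamma_i$ is the coordinate-wise minimizer of $v(\hat\alpha,\cdot)$ by the construction~\eqref{eq:case1dualsol}.

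For the first claim, I would fix $i\in\mathrm{Supp}(\hat\beta)^c$. Since $\hat\beta$ is the output of Algorithm~2, the violation set $\mathcal V$ is empty, so Proposition~\ref{prop:V_explicit} gives $|\hat r^T X_i|\le 2\sqrt{\lambda_0\lambda_2}$. Using $\hat\alpha=-\hat r$ and the standing assumption $\sqrt{\lambda_0/\lambda_2}\le M$ (equivalently $\sqrt{\lambda_0\lambda_2}\le M\lambda_2$), I conclude $|\hat\alpha^T X_i|\le 2\sqrt{\lambda_0\lambda_2}\le 2M\lambda_2$, so the closed form~\eqref{eq:case1dualsolexplicit} yields $\hat\gamma_i=0$. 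Substituting into the definition~\eqref{thm2-defn-v1} of $v$, the bracketed term equals $[(\hat r^T X_i)^2/(4\lambda_2)-\lambda_0]_+\le 0$ (since $(\hat r^T X_i)^2\le 4\lambda_0\lambda_2$) and $M|\hat\gamma_i|=0$, giving $v(\hat\alpha,\hat\gamma_i)=0$.

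For the second claim, I fix $i\in\mathrm{Supp}(\hat\beta)$. Because $\hat\gamma_i\in\argmin_{\gamma_i} v(\hat\alpha,\gamma_i)$, I may bound $v(\hat\alpha,\hat\gamma_i)\le v(\hat\alpha,\gamma_i^*)$ by plugging in the optimal dual coordinate $\gamma_i^*$ from Theorem~\ref{theorem:duals}. Writing $a:=(\hat\alpha-\alpha^*)^T X_i$ and $b:=\alpha^{*T}X_i-\gamma_i^*$, so that $\hat\alpha^T X_i-\gamma_i^*=a+b$, I expand $(a+b)^2=a^2+2ab+b^2$ and peel off the cross and square-$a$ contributions using subadditivity of $[\cdot]_+$ and $[u]_+\le|u|$, obtaining
$$v(\hat\alpha,\gamma_i^*)\le \frac{a^2+2|a|\,|b|}{4\lambda_2}+\Big[\frac{b^2}{4\lambda_2}-\lambda_0\Big]_+ +M|\gamma_i^*| = \frac{a^2+2|a|\,|b|}{4\lambda_2}+v(\alpha^*,\gamma_i^*).$$
It then remains to bound $|a|$ and $|b|$. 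For $|a|$, note $\hat\alpha-\alpha^*=r^*-\hat r=X(\hat\beta-\beta^*)$, so Cauchy--Schwarz and column-normalization give $|a|\le\epsilon$. For $|b|$, I first establish $\|\alpha^*\|_2\le1$ from $\tfrac12\|\alpha^*\|_2^2=\tfrac12\|r^*\|_2^2\le F(\beta^*)\le F(0)=\tfrac12\|y\|_2^2=\tfrac12$; then when $|\beta_i^*|<M$ the formula~\eqref{thm2-statement-pr-dual1} gives $\gamma_i^*=0$, so $|b|=|\alpha^{*T}X_i|\le1$, while when $|\beta_i^*|=M$ it gives $b=2M\lambda_2\sign(\alpha^{*T}X_i)$, i.e.\ $|b|=2M\lambda_2$. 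Substituting $|a|\le\epsilon$ and these two values of $|b|$ converts $\tfrac{a^2+2|a||b|}{4\lambda_2}$ into $\tfrac{\epsilon^2}{4\lambda_2}+c_i\epsilon$ with $c_i=(2\lambda_2)^{-1}$ and $c_i=M$ respectively, which is exactly~\eqref{eq:vupbd}.

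The main obstacle is the passage through the positive-part operator: I must justify $[\tfrac{a^2+2ab+b^2}{4\lambda_2}-\lambda_0]_+\le\tfrac{a^2+2|a||b|}{4\lambda_2}+[\tfrac{b^2}{4\lambda_2}-\lambda_0]_+$ cleanly, combining $2ab\le 2|a||b|$, monotonicity of $[\cdot]_+$, and the split $[u+w]_+\le[u]_+ +[w]_+\le|u|+[w]_+$ so that only the $b^2$-term remains inside the bracket and reassembles into $v(\alpha^*,\gamma_i^*)$. Everything downstream is routine once $|a|\le\epsilon$, $\|\alpha^*\|_2\le1$, and the two case-dependent values of $|b|$ are in hand.
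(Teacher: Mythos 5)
Your proposal is correct and follows essentially the same route as the paper's proof: the same use of $\mathcal{V}=\emptyset$ with Proposition~\ref{prop:V_explicit} to force $\hat{\gamma}_i=0$ off the support, the same decomposition $a=(\hat{\alpha}-\alpha^{*})^T X_i$, $b=\alpha^{*T}X_i-\gamma_i^{*}$ with subadditivity of $[\cdot]_+$, and the same case analysis on $|\beta_i^{*}|$ via \eqref{thm2-statement-pr-dual1} to bound $|b|$. The arithmetic converting $(a^2+2|a||b|)/(4\lambda_2)$ into $c_i\epsilon+(4\lambda_2)^{-1}\epsilon^2$ checks out in both cases, so nothing is missing.
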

	\noindent \textbf{Proof of Theorem \ref{theorem:dualitygap}: } We consider two cases based on $\sqrt{{\lambda_0}/{\lambda_2}} \leq  M $ or $\sqrt{{\lambda_0}/{\lambda_2}} >  M $. \\
	\noindent {\bf Showing bound~\eqref{eq:dualbd1}:} First, we consider the case of $\sqrt{{\lambda_0}/{\lambda_2}} \leq  M $, i.e., we will establish the bound in \eqref{eq:dualbd1}. Note that the following holds:
	\begin{align}
	\frac{1}{2} \| \hat{\alpha} \|_2^2 + \hat{\alpha}^{T} y & = \frac{1}{2} \| \hat{\alpha} - \alpha^{*} + \alpha^{*}  \|_2^2 + (\hat{\alpha} - \alpha^{*} + \alpha^{*})^{T} y \nonumber \\
	& = \frac{1}{2} \| \hat{\alpha} - \alpha^{*} \|_2^2 + (\hat{\alpha} - \alpha^{*})^{T} \alpha^{*} + \frac{1}{2} \| \alpha^{*} \|_2^2 + (\hat{\alpha} - \alpha^{*})^{T} y + \alpha^{*T} y \nonumber \\
	& \leq \frac{1}{2} \epsilon^2 + \epsilon \| \alpha^{*} \|_2 + \frac{1}{2} \| \alpha^{*} \|_2^2  + \epsilon \|y \|_2 + \alpha^{*T} y \label{eq:cauchyintermediate},
	\end{align}
	where the inequality above follows by applying Cauchy-Schwarz and noting that $\|\hat{\alpha} - \alpha^{*} \|_2 = \| X (\beta^{*} - \hat{\beta}) \|_2 = \epsilon$. Using $\| y \|_2 = 1$ and $\| \alpha^{*} \|_2 \leq 1$ (this was established in the proof of Lemma \ref{lemma:v}) in \eqref{eq:cauchyintermediate}, we get:
	\begin{align}
	\frac{1}{2} \| \hat{\alpha} \|_2^2 + \hat{\alpha}^{T} y \leq \frac{1}{2} \epsilon^2 + 2 \epsilon + \frac{1}{2} \| \alpha^{*} \|_2^2 + \alpha^{*T} y \label{eq:yboundedintermediate}.
	\end{align}
	Plugging \eqref{eq:yboundedintermediate} into the objective function in  \eqref{eq:objdualcase1} (with $\alpha = \hat{\alpha}$ and $\gamma = \hat{\gamma}$):
	\begin{align} \label{eq:h1upbdintermediate}
	h_1(\hat{\alpha}, \hat{\gamma}) & \geq -\frac{1}{2} \| \alpha^{*} \|_2^2 -  \alpha^{*T} y - 2 \epsilon - \frac{1}{2} \epsilon^2 
	- \sum_{i \in [p]} v(\hat{\alpha}, \hat{\gamma}_i).
	\end{align}
	By Lemma \ref{lemma:v}, for every $i \in \text{Supp}(\hat{\beta})^c$, we have $v(\hat{\alpha}, \hat{\gamma}_i) = 0$, which implies $v(\hat{\alpha}, \hat{\gamma}_i) \leq v({\alpha}^{*}, {\gamma}^{*}_i)$ (since $v$ is a non-negative function). 
	
	Using the inequality $v(\hat{\alpha}, \hat{\gamma}_i) \leq v({\alpha}^{*}, {\gamma}^{*}_i)$ for every $i \in \text{Supp}(\hat{\beta})^c$; 
	and inequality~\eqref{eq:vupbd} for every $i \in \text{Supp}(\hat{\beta})$, in \eqref{eq:h1upbdintermediate}, we get:
	\begin{align}
	h_1(\hat{\alpha}, \hat{\gamma}) & \geq -\frac{1}{2} \| \alpha^{*} \|_2^2 -  \alpha^{*T} y - \sum_{i \in [p]} v({\alpha}^{*}, {\gamma}_i^{*}) - 2 \epsilon - \frac{1}{2} \epsilon^2 - \sum_{i \in \text{Supp}(\hat{\beta})} \Big( c_i \epsilon + (4 \lambda_2)^{-1} {\epsilon^2} \Big) \nonumber \\
	&= h_1(\alpha^{*}, \gamma^{*}) - 2 \epsilon - \frac{1}{2} \epsilon^2 - \sum_{i \in \text{Supp}(\hat{\beta})} \Big( c_i \epsilon + (4 \lambda_2)^{-1} {\epsilon^2} \Big).  \label{eq:h_1_bd_temp}
	\end{align}
	Let 
	$$k_1 = | \{ i \in \text{Supp}(\hat{\beta}) \ | \ |\hat{\beta}_i| < M \} |~~\text{and} ~~k_2 = | \{ i \in \text{Supp}(\hat{\beta}) \ | \ |\hat{\beta}_i| = M \} |.$$ 
	Using the expressions for $c_i$s (from Lemma \ref{lemma:v}) in~\eqref{eq:h_1_bd_temp}, we get:
	\begin{align*}
	h_1(\hat{\alpha}, \hat{\gamma}) \geq  h_1(\alpha^{*}, \gamma^{*}) - 2 \epsilon - \frac{1}{2} \epsilon^2 - k_1 \Big( (2\lambda_2)^{-1} \epsilon + (4 \lambda_2)^{-1} {\epsilon^2} \Big) - k_2 \Big( M \epsilon + (4 \lambda_2)^{-1} {\epsilon^2} \Big)
	\end{align*}
	Rearranging the terms in the above and using the fact that $k = k_1 + k_2$, we get:
	\begin{align*}
	h_1(\hat{\alpha}, \hat{\gamma}) \geq h_1(\alpha^{*}, \gamma^{*}) - \epsilon (2 + k_1 (2\lambda_2)^{-1} + k_2 M) - \epsilon^2 \Big(\frac{1}{2} + \frac{k}{4 \lambda_2}\Big),
	\end{align*}
	which leads to \eqref{eq:dualbd1}.
	
	\noindent {\bf Showing bound~\eqref{eq:dualbd2}:} Now, we consider the case of $\sqrt{{\lambda_0}/{\lambda_2}} >  M $, where we will establish the bound in \eqref{eq:dualbd2}. By the same argument used in deriving \eqref{eq:h1upbdintermediate}, we have:
	\begin{align} \label{eq:h2upbdintermediate}
	h_2(\hat{\rho}, \hat{\mu}) & \geq -\frac{1}{2} \| \rho^{*} \|_2^2 -  \rho^{*T} y - 2 \epsilon - \frac{1}{2} \epsilon^2 
	- M \| \hat{\mu} \|_1.
	\end{align}
	Next, we fix some $i \in \text{Supp}(\hat{\beta})$ and we upper bound $\hat{\mu}_i$ as follows:
	\begin{align}
	|\hat{\mu}_i|  = \Big [|\hat{\rho}^T X_i| - \lambda_0/M - \lambda_2 M \Big]_{+} \leq&
	\Big [|(\hat{\rho} - \rho^{*})^T X_i| + |\rho^{*T} X_i| - \lambda_0/M - \lambda_2 M \Big]_{+} \nonumber \\
	\leq& |(\hat{\rho} - \rho^{*})^T X_i| + \Big [|\rho^{*T} X_i| - \lambda_0/M - \lambda_2 M \Big]_{+} \nonumber \\
	\leq& \epsilon + |\mu_i^{*}|, \label{eq:gammaupbdcase2}
	\end{align}
	where in the last step above we use Cauchy-Schwarz for the first term (noting that $\rho^*=-r^*$); and for the second term, we use Theorem~\ref{theorem:duals} along with the following:
	\begin{align} \label{eq:conditions_rho}
	|\rho^{*T} X_i| \leq \lambda_0/M + \lambda_2 M~~\text{if}~~ |{\beta}^{*}_i| < M~~\text{and}~~ |\rho^{*T} X_i| \geq \lambda_0/M + \lambda_2 M~~\text{if}~~|{\beta}^{*}_i| = M.
	\end{align}
	Conditions in \eqref{eq:conditions_rho} follow from the optimality of
	$(\rho^{*}, \mu^{*})$ for \eqref{eq:objdualcase2}.
	(For $| \beta_i^{*} | < M$, \eqref{thm2-statement-pr-dual2} implies that $\mu_i^{*} = 0$, which leads to $|\rho^{*T} X_i| \leq \lambda_0/M + \lambda_2 M$ from the feasibility condition in~\eqref{eq:objdualcase2}. When $| \beta_i^{*} | = M$, we will establish~\eqref{eq:conditions_rho} via contradiction: 
	If $|\rho^{*T} X_i| < \lambda_0/M + \lambda_2 M$ holds true, then~\eqref{thm2-statement-pr-dual2} implies $\mu^*_{i} <0$, which would violate optimality for~\eqref{eq:objdualcase2}.) 
	
	For $i \in  \text{Supp}(\hat{\beta})^c$, we have $\hat{\mu}_i = 0$ (see the discussion after \eqref{eq:mu_hat}), which implies $|\hat{\mu}_i| \leq |\mu^{*}_i|$. Using the inequalities $|\hat{\mu}_i| \leq |\mu^{*}_i|$, $i \in  \text{Supp}(\hat{\beta})^c$ and \eqref{eq:gammaupbdcase2} (for all $i \in  \text{Supp}(\hat{\beta})$) in \eqref{eq:h2upbdintermediate}, and simplifying, we get:
	\begin{align}
	h_2(\hat{\rho}, \hat{\mu}) \geq h_2({\rho}^{*}, {\mu}^{*}) - \epsilon (2 + M k) - \epsilon^2/2,
	\end{align}
	which leads to \eqref{eq:dualbd2}.

	\section{Additional Technical Details} \label{sec:additional_techincal_details}
	
	\subsection{Derivation of solutions to Problem \eqref{eq:proximal}}\label{appendix:thresholding}
	{First, we consider the setting of $\sqrt{\lambda_0/\lambda_2} \leq M$. From Theorem \ref{theorem:relaxation}, the penalty $\psi(\beta_i; \lambda_0, \lambda_2, M) = \psi_{1} (\beta_i; \lambda_0, \lambda_2) =  2\lambda_0 \mathcal{B}(\beta_i \sqrt{\lambda_2/\lambda_0})$. Using the definition of $\mathcal{B}$ in \eqref{eq:reverse_huber}, we have:
		\begin{align*}
		\psi_{1} (\beta_i; \lambda_0, \lambda_2) = 
		\begin{cases}
		2 \sqrt{\lambda_0 \lambda_2} |\beta_i|  & ~~ \text{ if } |\beta_i| \leq \sqrt{\lambda_0/\lambda_2} \\
		\lambda_2 \beta_i^2 + \lambda_0 & ~~ \text{ if } |\beta_i| \geq  \sqrt{\lambda_0/\lambda_2}.
		\end{cases}
		\end{align*}}
	
	{\textbf{Case of $|\beta_i| \leq \sqrt{\lambda_0/\lambda_2}$: } The penalty in this case is $2 \sqrt{\lambda_0 \lambda_2} |\beta_i|$, and the first-order optimality conditions imply that the solution of Problem \eqref{eq:proximal} is given by a capped version (due to the box constraint on $\beta_{i}$) of the soft thresholding operator~\cite{glmnet}: $\beta_i^{*} = T(\tilde{\beta}_i; 2 \sqrt{\lambda_0 \lambda_2}, M)$; and this is optimal as long as: $|\beta_i^{*}| \leq \sqrt{\lambda_0/\lambda_2}$, i.e., $|T(\tilde{\beta}_i; 2 \sqrt{\lambda_0 \lambda_2}, M)| \leq \sqrt{\lambda_0/\lambda_2}$.
		Therefore, if $|\tilde{\beta}_i| \leq \sqrt{\lambda_0/\lambda_2} + 2\sqrt{\lambda_0 \lambda_2} $, the solution is given by $T(\tilde{\beta}_i; 2 \sqrt{\lambda_0 \lambda_2}, M)$.}
	
	{\textbf{Case of $|\beta_i| \geq \sqrt{\lambda_0/\lambda_2}$: } Here the penalty is $\lambda_2 \beta_i^2 + \lambda_0$, and the solution is given by $T(\tilde{\beta}_i(1 + 2 \lambda_2)^{-1}; 0, M)$. The latter solution is optimal as long as $|T(\tilde{\beta}_i(1 + 2 \lambda_2)^{-1}; 0, M)| \geq \sqrt{\lambda_0/\lambda_2}$, 
		which can be simplified to 
		$|\tilde{\beta}_i| \geq \sqrt{\lambda_0/\lambda_2} + 2\sqrt{\lambda_0 \lambda_2}$.\\
		This completes the derivation of~\eqref{eq:case1sol}.}
	
	\smallskip
	
	\noindent {Finally, we consider the setting of $\sqrt{{\lambda_0}/{\lambda_2}} >  M$. By Theorem \ref{theorem:relaxation}, the penalty is $\psi(\beta_i; \lambda_0, \lambda_2, M) = (\lambda_0/M + \lambda_2 M)|\beta_i|$. Thus, using arguments similar to the above, the solution is $T(\tilde{\beta}_i; \lambda_0/M + \lambda_2 M, M)$. This completes the derivation of~\eqref{eq:case2sol}.}
	
	\subsection{Node Subproblems} \label{appendix:node_subproblems}
	{In Theorem \ref{theorem:relaxation}, we presented a reformulation of the root relaxation in the $\beta$ space. Here we discuss how to similarly reformulate and solve the subproblem at an arbitrary node in the search tree. Recall that at any node, some $z_i$s can be fixed to 0 or 1 (this depends on the branching decisions made until reaching the node). At a given node, let $\mathcal{Z}$ and $\mathcal{N}$ be the sets of indices of the $z_i$s that are fixed to $0$ and $1$,  respectively. Then, the following convex subproblem needs to be solved at the node:
		\begin{equation} \label{eq:node_subproblem}
		\begin{aligned} 
		\min_{\beta, z, s} ~~&~ \frac{1}{2} \| y - X \beta \|^{2}_2  + \lambda_0 \sum_{i \in [p]} z_i+ \lambda_2 \sum_{i \in [p]} s_i  \\
		\text{s.t.}~~&~~ \beta_i^2 \leq s_i z_i, ~i \in [p]\\
		&~~ -M z_i \leq \beta_i \leq { M} z_i, ~ i \in [p] \\
		&~~ s_i \geq 0, ~  i \in [p]\\
		&~~ z_i = 0, i \in \mathcal{Z}, ~~ z_i = 1, i \in \mathcal{N}, ~~ z_i \in [0,1], i \in [p]\setminus (\mathcal{Z} \cup \mathcal{N}).
		\end{aligned}
		\end{equation}
		Define the penalty $\tilde{\psi}(\beta_i; \lambda_0, \lambda_2) := \lambda_0 + \lambda_2 \beta_i^2$. Then, using an argument similar to that in the proof of Theorem \ref{theorem:relaxation}, Problem \eqref{eq:node_subproblem} can be equivalently expressed in the $\beta$ space as:
		\begin{equation} \label{eq:node_subproblem_reformulated}
		\begin{aligned}
		\min_{\beta \in \mathbb{R}^p} & ~ \tilde{F}(\beta):= \frac{1}{2} \| y - X \beta \|_2^2 + \sum_{i \in \mathcal{N}^{c}} \psi(\beta_i; \lambda_0, \lambda_2, M) + \sum_{i \in \mathcal{N}} \tilde{\psi}(\beta_i; \lambda_0, \lambda_2)  \\ & ~~ \text{s.t.} ~~ \|\beta\|_{\infty} \leq M, ~~ \beta_i = 0, i \in \mathcal{Z}.
		\end{aligned}
		\end{equation}
		Note that Problem \eqref{eq:node_subproblem_reformulated} is similar to Problem \eqref{eq:relaxation}, except that (i) the variables corresponding to $\mathcal{N}$  use the penalty $\tilde{\psi}(\beta_i; \lambda_0, \lambda_2)$ instead of $\psi(\beta_i; \lambda_0, \lambda_2, M)$, and (ii) the variables corresponding to $\mathcal{Z}$ are fixed to zero. To optimize Problem \eqref{eq:node_subproblem_reformulated} using cyclic CD, the coordinates in $(\mathcal{N}\cup \mathcal{Z})^c$ are updated exactly as described in Section \ref{sec:CD}. Next, we discuss how cyclic CD updates the coordinates in $\mathcal{N}$. For any $i \in \mathcal{N}$, the algorithm updates coordinate $i$ by solving the problem:
		$$
		\min_{\beta_i \in \mathbb{R}} \tilde{F}(\hat{\beta}_1, \dots, \beta_i, \dots, \hat{\beta}_p) ~~~~ \text{ s.t. } ~~~ |\beta_i| \leq M.
		$$
		Assuming the columns of $X$ have unit $\ell_2$ norm, the problem above is equivalent to~\eqref{eq:proximal} but with $\psi(\beta_i; \lambda_0, \lambda_2, M)$ replaced with $\tilde{\psi}(\beta_i; \lambda_0, \lambda_2)$; and the  corresponding solution is given by: $T(\tilde{\beta}_i(1 + 2 \lambda_2)^{-1}; 0, M)$. }

	\section{Additional Experimental Results and Details} \label{appendix:additional_results}
	
	\subsection{Experiment of Section \ref{sec:timing_comparison}} \label{appendix:experiment_vary_p_m}
	
	{In Table \ref{table:gradinet_screening}, we report the running time of L0BnB, with and without gradient screening. The number of nodes explored by the different solvers is reported in Table \ref{table:varyp_m_nodes}. In this experiment: $\lambda_2^{*} = 0.0409 $; and $\lambda_0^{*}$ is equal to 0.012, 0.0115, and 0.0132, for $\lambda_{2}$ set to $\lambda_2^{*}$, $0.1 \lambda_2^{*}$, and $10 \lambda_2^{*}$, respectively.}
	
	\begin{table}[htbp]
		\caption{{{\small Running time in seconds for L0BnB, with and without gradient screening (GS). This experiment is based on the same data and parameters as Table \ref{table:fixed_M}. Both approaches explore exactly the same BnB tree.}}}
		\label{table:gradinet_screening}
		\centering
		\renewcommand{\tabcolsep}{2pt}
		\renewcommand{\arraystretch}{1.1}
		\qquad \qquad
		{\begin{tabular}{l|cc|cc|cc|}
				\cline{2-7}
				& \multicolumn{2}{c|}{$\lambda_2 = \lambda_2^{*}$} & \multicolumn{2}{c|}{$\lambda_2 = 0.1 \lambda_2^{*}$} & \multicolumn{2}{c|}{$\lambda_2 = 10 \lambda_2^{*}$} \\ \hline
				\multicolumn{1}{|l|}{p}      & No GS                   & With GS                & No GS                  & With GS                     & No GS                    & With GS                  \\ \hline
				\multicolumn{1}{|l|}{$10^3$} & \textbf{0.7}            & 0.9                    & \textbf{1.6}          & 2.1                           & 0.011             & {0.011}                        \\
				\multicolumn{1}{|l|}{$10^4$} & 3.3                  & \textbf{2.9}               & 11.5                     & \textbf{11.3}                 & 0.06                     & \textbf{0.05}            \\
				\multicolumn{1}{|l|}{$10^5$} & 34                      & \textbf{19}            & 545                    & \textbf{459}                & 0.5                      & 0.5                      \\
				\multicolumn{1}{|l|}{$10^6$} & 1112                    & \textbf{414}           & (23\%)                 & \textbf{(8\%)}                  & 8.1                        & \textbf{7.3}               \\ \hline
			\end{tabular}
			\newline
			\vspace*{0.2cm}
			\newline
			\begin{tabular}{l|cc|cc|cc|cc|}
				\cline{2-9}
				& \multicolumn{2}{c|}{$M=M^{*}$} & \multicolumn{2}{c|}{$M=2 M^{*}$} & \multicolumn{2}{c|}{$M= 4 M^{*}$} & \multicolumn{2}{c|}{$M=\infty$} \\ \hline
				\multicolumn{1}{|l|}{p}      & No GS          & With GS       & No GS           & With GS        & No GS           & With GS         & No GS          & With GS        \\ \hline
				\multicolumn{1}{|l|}{$10^3$} & \textbf{0.16}   & 0.25           & \textbf{0.8}    & 0.98           & \textbf{0.8}    & 1.1             & \textbf{0.8}   & 1.1            \\
				\multicolumn{1}{|l|}{$10^4$} & 0.9            & \textbf{0.6}  & 4.9             & \textbf{4.3}   & 5.4             & \textbf{4.8}    & 5.5            & \textbf{4.8}   \\
				\multicolumn{1}{|l|}{$10^5$} & 7.5            & \textbf{3.9}  & 70              & \textbf{43}    & 81              & \textbf{50}     & 81             & \textbf{50}    \\
				\multicolumn{1}{|l|}{$10^6$} & 121            & \textbf{52}   & 9265            & \textbf{4640}     & 10986           & \textbf{5639}      & 11010          & \textbf{5612}     \\ \hline
		\end{tabular}}
	\end{table}
	
	\begin{table}[htbp]
		\caption{{\small Number of BnB nodes explored in the experiment of Section \ref{sec:timing_comparison}. For~\cite{bertsimas2017sparse}, we report the number of cuts used by the cutting-plane algorithm. A dash indicates that the method could not solve more than 1 node in 4 hours.}}\label{table:varyp_m_nodes}
		\centering
		\renewcommand{\tabcolsep}{1.4pt}
		\renewcommand{\arraystretch}{1.2}
		{\begin{tabular}{l|ccccc|ccccc|ccccc|}
				\cline{2-16}
				& \multicolumn{5}{c|}{$\lambda_2 = \lambda_2^{*}$} & \multicolumn{5}{c|}{$\lambda_2 = 0.1 \lambda_2^{*}$} & \multicolumn{5}{c|}{$\lambda_2 = 10 \lambda_2^{*}$} \\ \hline
				\multicolumn{1}{|l|}{p}      & L0BnB     & GRB       & MSK       & B     & \cite{bertsimas2017sparse}       & L0BnB      & GRB       & MSK       & B      & \cite{bertsimas2017sparse}         & L0BnB      & GRB        & MSK         & B      & \cite{bertsimas2017sparse}     \\ \hline
				\multicolumn{1}{|l|}{$10^3$} & 159       & 532     & 161     & 2     & 9895     & 329        & 31      & 13      & -      & 35563      & 3          & 374      & 347       & -      & 9      \\
				\multicolumn{1}{|l|}{$10^4$} & 225       & 506     & 382     & -     & 4769     & 531       & 3       & 13      & -      & 5212       & 3          & 357      & 1143      & -      & 9      \\
				\multicolumn{1}{|l|}{$10^5$} & 385       & -       & -       & -     & -        & 4337      & -       & -       & -      & -          & 3          & -        & -         & -      & 10     \\
				\multicolumn{1}{|l|}{$10^6$} & 1087      & -       & -       & -     & -        & 10104       & -       & -       & -      & -          & 3          & -        & -         & -      & 11     \\ \hline
			\end{tabular}
			\newline
			\vspace*{0.2cm}
			\newline
			\begin{tabular}{l|cccc|cccc|cccc|cccc|}
				\cline{2-17}
				& \multicolumn{4}{c|}{$M = M^{*}$} & \multicolumn{4}{c|}{$M =2 M^{*}$} & \multicolumn{4}{c|}{$M = 4 M^{*}$} & \multicolumn{4}{c|}{$M = \infty$} \\ \hline
				\multicolumn{1}{|l|}{p}      & L0BnB    & GRB      & MSK     & B    & L0BnB    & GRB      & MSK      & B   & L0BnB    & GRB       & MSK      & B    & L0BnB    & GRB      & MSK       & B   \\ \hline
				\multicolumn{1}{|l|}{$10^3$} & 49       & 65     & 48    & 3    & 185      & 714    & 258    & -   & 193      & 2120    & 261    & -    & 193      & 658    & 273     & -   \\
				\multicolumn{1}{|l|}{$10^4$} & 61       & 128    & 67    & -    & 287      & 2      & 867    & -   & 305      & 1       & 852    & -    & 305      & -      & 1403    & -   \\
				\multicolumn{1}{|l|}{$10^5$} & 75      & -      & -     & -    & 693     & -      & -      & -   & 761     & -       & -      & -    & 761     & -      & -       & -   \\
				\multicolumn{1}{|l|}{$10^6$} & 119      & -      & -     & -    & 8919     & -      & -      & -   & 10059     & -       & -      & -    & 9805     & -      & -       & -   \\ \hline
		\end{tabular}}
	\end{table}

	\subsection{Experiment of Section \ref{sec:vary_snr}} \label{appendix:vary_snr}
	{The parameters $\lambda_0$ and $\lambda_2$, and the number of BnB nodes explored are reported in Table \ref{table:num_nodes_vary_snr}.}
	
	\begin{table}[htbp]
		\caption{{\small The parameters $\lambda_0$ and $\lambda_2$, and the number of nodes explored by the different solvers in the experiment of Section \ref{sec:vary_snr}. The parameter $M$ is reported in the main text.}}
		\label{table:num_nodes_vary_snr}
		\renewcommand{\tabcolsep}{1.4pt}
		\renewcommand{\arraystretch}{1.2}
		\centering
		{\begin{tabular}{|c|c|c|cccc|}
				\hline
				SNR & $\lambda_0^{*}$ & $\lambda_2^{*}$ & L0BnB & GRB & MSK & B \\ \hline
				0.5 & 0.00402 & 0.126     & 34155    & 85402   & 7784  & 1     \\
				1   & 0.0057 & 0.0869   & 223   & 4020   & 2840  & 1     \\
				2   & 0.0097 & 0.0596   & 217   & 868    & 736   & 1     \\
				3   & 0.0113 & 0.0409    & 221   & 461    & 400   & 1     \\
				4   & 0.0121 & 0.0409     & 189   & 428    & 227   & 1     \\
				5   & 0.0120 & 0.0409       & 159   & 532    & 161   & 1     \\ \hline
		\end{tabular}}
	\end{table}
	
	\subsection{Experiment of Section \ref{sec:vary_grid}} \label{appendix:vary_l0_l2}
	{The parameter $\lambda_2^{*} = 0.0409$. The parameter $\lambda_0$ and the number of BnB nodes explored are reported in Table \ref{table:vary_grid_appendix}.}

	\begin{table}[htbp]
		\caption{\small{The parameter $\lambda_0$ and the number of BnB nodes explored in the experiment of Section \ref{sec:vary_grid}.}}
		\label{table:vary_grid_appendix}
		\centering
		\qquad \qquad \qquad
		{\begin{tabular}{|cc|cccc|}
				\hline
				\multicolumn{6}{|c|}{${p=10^3}$} \\
				\hline
				$\lambda_2$                          & $\lambda_0$ & L0BnB  & GRB   & MSK   & B \\ \hline
				\multirow{3}{*}{$10 \lambda_2^{*}$} 
				& 0.02692     & 343    & 60303 & 376   & 1 \\
				& 0.00538     & 1      & 27103 & 1     & 1 \\
				& 0.00054     & 270993 & 2056  & 10271 & 1 \\ \hline
				\multirow{3}{*}{$2 \lambda_2^{*}$}  
				& 0.04207     & 1051   & 55882 & 1764  & 1 \\
				& 0.00841     & 27     & 354   & 89    & 1 \\
				& 0.00084     & 187743 & 2045  & 9977  & 1 \\ \hline
				\multirow{3}{*}{$ \lambda_2^{*}$}   
				& 0.04526     & 2825   & 26721 & 7564  & 1 \\
				& 0.00905     & 71     & 343   & 237   & 1 \\
				& 0.00091     & 175876 & 27259 & 9819  & 1 \\ \hline
				\multirow{3}{*}{$0.5 \lambda_2^{*}$}
				& 0.04704     & 6417   & 18821 & 10959 & 1 \\
				& 0.00941     & 113    & 242   & 394   & 1 \\
				& 0.00094     & 66769  & 18817 & 9884  & 1 \\ \hline
				\multirow{3}{*}{$0.1 \lambda_2^{*}$}
				& 0.04856     & 11811  & 12518 & 13643 & 1 \\
				& 0.00971     & 175    & 218   & 802   & 1 \\
				& 0.00097     & 22096  & 19983 & 10228 & 1 \\ \hline
			\end{tabular}
			\newline
			\vspace*{0.2cm}
			\newline
			\begin{tabular}{|cc|cccc|}
				\hline
				\multicolumn{6}{|c|}{${p=10^4}$} \\
				\hline
				$\lambda_2$                          & $\lambda_0$ & L0BnB & GRB & MSK  & B \\ \hline
				\multirow{3}{*}{$10 \lambda_2^{*}$} 
				& 0.02692     & 343   & 1   & 378  & 1 \\
				& 0.00538     & 1     & 1   & 1    & 1 \\
				& 0.00054     & 56504 & 1   & 826  & 1 \\ \hline
				\multirow{3}{*}{$2 \lambda_2^{*}$}  
				& 0.04207     & 10097 & 1   & 901  & 1 \\
				& 0.00841     & 37    & 1   & 470  & 1 \\
				& 0.00084     & 49358 & 1   & 820  & 1 \\ \hline
				\multirow{3}{*}{$ \lambda_2^{*}$}   
				& 0.04526     & 43745 & 1   & 886  & 1 \\
				& 0.00905     & 111   & 1   & 801  & 1 \\
				& 0.00091     & 32142 & 1   & 824  & 1 \\ \hline
				\multirow{3}{*}{$0.5 \lambda_2^{*}$}
				& 0.04704     & 75893 & 353 & 915  & 1 \\
				& 0.00941     & 191   & 1   & 906  & 1 \\
				& 0.00094     & 18253 & 1   & 822  & 1 \\ \hline
				\multirow{3}{*}{$0.1 \lambda_2^{*}$}
				& 0.04856     & 96091 & 1   & 857  & 1 \\
				& 0.00971     & 507   & 1   & 1056 & 1 \\
				& 0.00097     & 10433 & 1   & 828  & 1 \\ \hline
		\end{tabular}}
	\end{table}

	\subsection{Experiment of Section \ref{sec:sensitivity}} \label{appendix:L0BnB_exps}
	{The number of nodes for all experiments of Section \ref{sec:sensitivity}, and the parameters for varying $n$ and $k$ are presented in Table \ref{table:sensitivity_appendix}. 
		For varying correlation coefficient: for $p=10^4$, we have   $\lambda_0^{*} = 0.0326$, $\lambda_2^{*} = 0.0091$, and $M = 0.465$ (for all values of $\rho)$. For $p=10^5$, we have  $\lambda_0^{*} = 0.0298$, $\lambda_2^{*} = 0.00202$, and $M = 0.449$ for all $\rho$, except for $\rho = 0.9$ which has  $\lambda_0^{*} = 0.0304$.}  
	\begin{table}[htbp]
		\renewcommand{\tabcolsep}{1.2pt}
		\renewcommand{\arraystretch}{1.2}
		\centering
		\caption{{\small{Problem parameters and number of nodes for the experiment of Section \ref{sec:sensitivity}}}.}
		\label{table:sensitivity_appendix}
		{\begin{tabular}{|c|c|c|c|c|}
				\multicolumn{5}{c}{Varying $n$} \\
				\hline
				$\boldsymbol n$ & $10^2$ & $10^3$ & $10^4$ & $10^5$ \\ \hline
				$\boldsymbol \lambda_0$ & 0.0104 & 0.0121 & 0.0152 & 0.0181 \\ \hline
				$\boldsymbol \lambda_2$ & 0.0409 & 0.00139 & 0.00295 & 0.0001 \\ \hline
				$\boldsymbol M$ & 0.410 & 0.337 & 0.329 & 0.316 \\ \hline
				\bf{Nodes} & 1386125 & 383 & 259 & 421 \\ \hline
			\end{tabular}
			\quad
			\begin{tabular}{|c|c|c|c|c|}
				\multicolumn{5}{c}{Varying $k^{\dagger}$} \\
				\hline
				$\boldsymbol k^{\dagger}$ & 10	& 15	& 20	& 25 \\ \hline
				$\boldsymbol \lambda_0$ & 0.0119 & 0.00615 & 0.00415 & 0.00275 \\ \hline
				$\boldsymbol \lambda_2$ & 0.0409 & 0.0409 & 0.0596 & 0.126 \\ \hline
				$\boldsymbol M$ & 0.348 &  0.275 &  0.224 &  0.186 \\ \hline
				\bf{Nodes} & 225 & 5261 & 218667 & 139353 \\ \hline
			\end{tabular}
			\quad
			\newline
			\begin{tabular}{|c|c|c|c|c|c|c|}
				\multicolumn{7}{c}{Varying correlation coefficient  $\rho$} \\
				\hline
				$\boldsymbol \rho$ & 0.1	& 0.3	& 0.5	& 0.7	& 0.8	& 0.9 \\ \hline
				\bf{Nodes} ($p = 10^4$) & 597 & 613 & 845 & 2295 & 7913 & 86083 \\ \hline
				\bf{Nodes} ($p = 10^5$) & 589 & 633 & 943 & 3145 & 11873 & 127267 \\ \hline
		\end{tabular}}
	\end{table}
	
	\subsection{Computing Setup and Software} We used four  machines in our experiments, each using an Intel(R) Xeon(R) Platinum 8252C CPU and $48$GB of RAM. We ran Gurobi and MOSEK using their Python APIs, BARON using the Pyomo Python API \cite{hart2011pyomo}, and \cite{bertsimas2017sparse}'s OA approach using the SubsetSelectionCIO toolkit \cite{bertsimas2017sparseclass} in Julia. Relevant software versions: Ubuntu 18.04.3, Python 3.7.10, R 3.6.3, Julia 0.6.4, Gurobi 9.0.1, MOSEK 9.2.3, BARON 2021.1.13, Pyomo 5.7.1, and L0Learn 2.0. For Julia, exceptionally, we used Gurobi 8.0.1 due to compatibility issues with Gurobi 9.


	%
	%

\end{document}